\numberwithin{equation}{section} 
\DeclareRobustCommand\cyr{%
  \renewcommand\rmdefault{wncyr}%
  \renewcommand\sfdefault{wncyss}%
  \renewcommand\encodingdefault{OT2}%
  \normalfont
  \selectfont}
\DeclareTextFontCommand{\textcyr}{\cyr}
\definecolor{red}{rgb}{.7,0,0}
\definecolor{blue}{rgb}{0,0,1}
\def\mcG{\mathcal{G}}
\def\mcH{\mathcal{H}}
\def\mcM{\mathcal{M}}
\def\mcP{\mathcal{P}}
\def\mcS{\mathcal{S}}
\def\mcO{\mathcal{O}}
\def\mcX{\mathcal{X}}
\def\mcU{\mathcal{U}}
\def\bbR{\mathbb{R}}
\def\bbN{\mathbb{N}}
\def\bbS{\mathbb{S}}
\def\fkC{\mathfrak{C}}
\def\sgn{\mathsf{sgn}}
\def\bfdelta{\boldsymbol{\delta}}
\def\bfeps{\boldsymbol{\varepsilon}}
\def\bfI{\boldsymbol{I}}
\def\rew{\rightsquigarrow}
\def\rews{\mathrel{\raisebox{-2.5pt}{$\rew$}}}
\newcommand{\subs}[1]{\overset{#1}{\rews}}
\def\syd{\mathcal{D}} 
\def\sye{\mathcal{E}} 
\def\syc{\mathfrak{c}} 
\def\Izero{\mathbin{\vcenter{\hbox{$\bullet$}}}}
\def\Ione{\between}
\title{Computing the Homology of Semialgebraic Sets.\\
II: General formulas\thanks{This work was supported by the Einstein Foundation, Berlin.}}
\author{Peter B\"urgisser\thanks{Partially funded by the European Research Council (ERC) under the European's Horizon 2020 research and innovation programme (grant agreement No 787840).}
\\
Technische Universit\"at Berlin\\ 
Institut f\"ur Mathematik\\ 
GERMANY\\
{\tt pbuerg@math.tu-berlin.de} 
\and
Felipe Cucker\thanks{Partially supported by a GRF grant
from the Research Grants Council of the Hong Kong
SAR (project number CityU 11302418).}
\\
Dept. of Mathematics\\
City University of Hong Kong\\
HONG KONG\\
{\tt macucker@cityu.edu.hk}
\and
Josu\'{e} Tonelli-Cueto\thanks{Partially supported by ANR JCJC
GALOP (ANR-17-CE40-0009), the PGMO grant ALMA, and the PHC GRAPE.}
\\
Inria Paris \& IMJ-PRG\\
OURAGAN team\\
Sorbonne Universit\'e\\
Paris, FRANCE\\
{\tt  josue.tonelli.cueto@bizkaia.eu}
}
\def\th@plain{%
  \thm@notefont{}
  \slshape 
}
\def\th@definition{%
  \thm@notefont{}
  \normalfont 
}
\theoremstyle{plain}
\newtheorem{lem}{Lemma}[section]
\newtheorem{prop}[lem]{Proposition}
\newtheorem{theo}[lem]{Theorem}
\theoremstyle{definition}
\newtheorem{defi}[lem]{Definition}
\theoremstyle{remark}
\newtheorem{question}[lem]{Question}
\newtheorem{exam}[lem]{Example}
\newtheorem{remark}[lem]{Remark}
\newcommand{\eproof}{\hfill\qed}
\newcommand{\Ap}{\mathsf{S}}
\newcommand{\cost}{\mathsf{cost}}
\newcommand{\size}{\mathsf{size}}
\def\bfd{\boldsymbol{d}}
\def\Hd{\mcH_{\bfd}}
\def\Pd{\mcP_{\bfd}}
\def\hm{^{\mathsf{h}}}
\def\kappabar{\overline{\kappa}}
\def\kappaff{\overline{\kappa}_{\sf aff}}
\newcommand{\cech}[2]{\mathrm{\check{C}}_{#1}\big(#2\big)}
\def\Oh{\mathcal{O}}
\def\gv{\textsf{\cyr G\!V}}
\def\Pe{\textrm{\cyr P}}
\def\pe{\textrm{\cyr p}}
\def\Tg{\mathrm{T}}
\def\sfH{\mathop{\mathsf H}}
\def\sfK{\mathsf K}
\def\Hm{\sfH}
\def\Oh{\mathcal{O}}
\def\diff{\mathrm{D}}
\def\diffa{\overline{\diff}}
\def\affgamma{\overline{\gamma}}
\def\Phibar{\overline{\Phi}}
\def\fbar{\overline{f}}
\def\fkC{\mathfrak C}
\def\fkA{\mathfrak A}
\def\bflambda{\boldsymbol{\lambda}}
\begin{document}
\date{}
\maketitle

\begin{abstract}
We describe and analyze a numerical algorithm for computing the homology (Betti numbers and torsion coefficients) of semialgebraic sets given by Boolean formulas. 
The algorithm works in weak exponential time. This means that 
outside a subset of data having exponentially small measure, the 
cost of the algorithm is single exponential in the size of the data. This extends the work in~\cite{BCTC1} 
to {\em arbitrary} semialgebraic sets. 

All previous algorithms proposed for this problem have doubly 
exponential complexity.
\end{abstract}
\section{Introduction}

This paper is a continuation of~\cite{BCTC1}. In the latter, we exhibited a numerical 
algorithm computing the topology of a {\em closed} semialgebraic set described by a
{\em monotone} Boolean combination of polynomial equalities and {\em lax} inequalities. This 
restriction, that the formula defining the Boolean combination is lax, forces 
connected components of the (projective closure of the) semialgebraic set to 
be separated by a positive distance. 
The fact that we can control these distances by the condition number of the 
tuple of polynomials in the description of the set was central 
in the design of the algorithm. 

Our goal in this paper is to exhibit a numerical algorithm, with similar complexity bounds 
to that in~\cite{BCTC1}, but working for {\em arbitrary} semialgebraic sets. That is, 
complements are allowed in the Boolean combinations (or, equivalently, negations 
in the Boolean formulas) describing semialgebraic sets and so are strict 
inequalities. This algorithm works in {\em weak exponential time}. This means that 
its cost, or running time, is single exponential outside an exceptional set whose 
measure vanishes exponentially fast. Outside this exceptional set then, our 
algorithm works exponentially faster than state-of-the-art algorithms (which are 
doubly exponential). 
For a background on numerical algorithms, 
condition, and weak cost ---and to avoid being repetitious--- we refer the reader 
to the introduction of~\cite{BCTC1}. We thus proceed with a description of the 
problem and our main result. 

In all what follows we fix natural numbers $q\geq 1,n\ge 2$ 
and a tuple 
$\bfd:=(d_1,\ldots,d_q)\in\bbN^q$. We denote by $\Pd[q]$ the linear space of 
polynomial tuples $p=(p_1,\ldots,p_q)$ with $p_i\in\bbR[X_1,\ldots,X_n]$ 
of degree at most $d_i$. 
We let $D:=\max\{d_1,\ldots,d_q\}$ 
and $N:=\dim \Pd[q]$. We consider the latter to be the size of the elements 
in $\Pd[q]$ as this is the number of real numbers we pass to an algorithm 
to specify a tuple $p\in\Pd[q]$. 

An element $p\in\Pd[q]$ determines $5q$ {\em atomic sets} of the form 
$\{x\in\bbR^n\mid p_i\propto 0\}$, where $\propto\in\{<,\leq,=,\ge,>\}$ 
and $i\in[q]:=\{1,\ldots,q\}$. A Boolean combination (Boolean formula) of these sets 
is an expression recursively constructed from them by taking unions, intersections,  
and complements (respectively,  disjunctions $\vee$, conjunctions $\wedge$, 
and negations $\neg$). We will often refer to a {\em Boolean formula 
over $p$} to mean a formula in the $5q$ atomic relations above. For such a 
formula $\Phi$ we denote by $\size(\Phi)$ the number of terms in its recursive 
construction. The quantity $\size(p,\Phi):=N+\size(\Phi)$ is therefore a measure 
of the size of a pair $(p,\Phi)$. 

Any such pair defines a semialgebraic set $W(p,\Phi)\subseteq\bbR^n$ and we 
are interested here in computing, with input $(p,\Phi)$, 
the homology groups of $W(p,\Phi)$. We already observed that $\size(p,\Phi)$ 
measures the input size for such an algorithm. Its {\em cost} on this input 
is the number of arithmetic operations and comparisons in $\bbR$ performed 
during the computation. In our algorithm, as in many numerical algorithms, 
this cost depends on $\size(p,\Phi)$ but it is not bounded by a function of this 
quantity only. Instead, the cost (and the precision required to ensure a correct 
output when running the algorithm with finite precision) depends as well on 
a condition number $\kappaff(p)$. This is a number in $[1,\infty]$. Tuples 
$p\in\Pd[q]$ with $\kappaff(p)=\infty$ are those for which, for some 
$\Phi$, arbitrarily small perturbations of the coefficients in $p$ may change 
the homology groups of $W(p,\Phi)$. Such tuples, called {\em ill-posed}, form a 
lower-dimensional semialgebraic subset of $\Pd[q]$. 
We briefly recall some facts about $\kappaff(p)$,  
and point to the relevant sections in~\cite{BCTC1} 
where these facts are shown, in~\S\ref{sec:kappa} below.

To obtain weak complexity bounds we need to endow $\Pd[q]$ with a probability measure and 
to do so, it will be convenient to have an inner product on this space. We 
endow $\Pd[q]$ with the Weyl inner product (see~\cite[\S3.1]{BCTC1}) and 
the associated unit sphere $\bbS(\Pd[q])=\bbS^{N-1}$ with the uniform 
probability measure. With this measure at hand we can state our main 
result (whose general form is remarkably similar to the one of~\cite{BCTC1}, 
the only difference being that the result here applies to {\em arbitrary} Boolean 
formulas). 

\begin{theo}\label{thm:main_result}
We exhibit a stable numerical algorithm~{\sf Homology} 
that, given a tuple $p\in\Pd[q]$ and a Boolean 
formula $\Phi$ over $p$, computes the homology groups of~$W(p,\Phi)$. 
The cost of~{\sf Homology} on input~$(p,\Phi)$,
denoted~$\cost(p,\Phi)$, satisfies:
\begin{description}
\item[(i)] $\cost(p,\Phi)\leq 
\size(\Phi) q^{\Oh(n)} (nD\kappaff(p))^{\Oh(n^2)}$.
\end{description}
Furthermore, if $p$ is drawn from the uniform distribution on $\bbS^{N-1}$, then:
\begin{description}
\item[(ii)] $\cost(p,\Phi)\leq \size(\Phi) q^{\Oh(n)}
(nD)^{\Oh(n^3)}$ with probability at least 
$1-(nq D)^{-n}$, and 
\item[(iii)]
$\cost(p,\Phi)\leq 
2^{\mcO\big(\size(p,\Phi)^{1+\frac{2}{D}}\big)}$ 
with probability at least $1-2^{-\size(p,\Phi)}$.
\end{description}
\end{theo}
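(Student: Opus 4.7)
The plan is to reduce an arbitrary Boolean formula to the monotone/lax setting handled by the main algorithm of~\cite{BCTC1} via an $\varepsilon$-approximation, and then invoke that algorithm on the approximation.

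First, I would normalize $\Phi$ by pushing every negation down to the atoms using De Morgan's laws. Since the complement of an atomic set is again atomic (for instance, $\neg\{p_i\le 0\}=\{p_i>0\}$), this yields an equivalent monotone Boolean combination $\widetilde\Phi$ whose size is $\Oh(\size(\Phi))$. Next, I would define an $\varepsilon$-perturbation $W_\varepsilon(p,\widetilde\Phi)$ by replacing every strict atom $\{p_i<0\}$ with $\{p_i\le -\varepsilon\}$ and every $\{p_i>0\}$ with $\{p_i\ge \varepsilon\}$, while keeping lax inequalities and equalities as they are. Rewriting the new atoms as $\{p_i-\varepsilon\ge 0\}$ and $\{p_i+\varepsilon\le 0\}$, the resulting formula $\widetilde\Phi_\varepsilon$ becomes a monotone Boolean combination of lax atomic predicates over the enlarged tuple $\widehat p := (p_1-\varepsilon,p_1+\varepsilon,\ldots,p_q-\varepsilon,p_q+\varepsilon)\in \Pd[2q]$, which is exactly the input format accepted by the algorithm~{\sf Homology} of~\cite{BCTC1}.

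The technical heart of the argument is a homotopy stability lemma of the form: there exists a threshold $\varepsilon_0 = 1/(nD\kappaff(p))^{\Oh(1)}$ such that for every $0<\varepsilon<\varepsilon_0$, the closed approximation $W_\varepsilon(p,\widetilde\Phi)$ has the same homotopy type---hence the same homology---as $W(p,\Phi)$. This is where $\kappaff(p)$ is used in an essential way: by the reciprocal-distance interpretation recalled in the introduction and discussed in~\S\ref{sec:kappa}, $1/\kappaff(p)$ bounds from below the distance from $p$ to the locus of topologically unstable tuples. A Thom--Mather--type isotopy along the gradient flows of the individual $p_i$'s on each tube $\{|p_i|\le\varepsilon\}$ then produces the required deformation retraction, provided $\varepsilon<\varepsilon_0$ so that no $p_i$ has a critical value in $[-\varepsilon,\varepsilon]$ that could disturb the incidence pattern of the Boolean combination.

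Once the stability lemma is in hand, bound~(i) is straightforward: a Lipschitz estimate for the condition number gives $\kappaff(\widehat p)=\Oh(\kappaff(p))$ for the chosen $\varepsilon$, and the cost bound from~\cite{BCTC1} applied to $(\widehat p,\widetilde\Phi_\varepsilon)$ yields the claimed single-exponential bound (the factor $2q$ in place of $q$ is absorbed in $q^{\Oh(n)}$). For the probabilistic statements~(ii) and~(iii), I would invoke the tail estimates on $\kappaff(p)$ already established in~\cite{BCTC1}, which depend only on $n,D,q$ and the Weyl distribution and not on the formula, and substitute them into~(i). The main obstacle is the stability lemma itself: the Boolean structure of $\Phi$ can produce intricate incidence patterns between the hypersurfaces $\{p_i=0\}$, and showing that thickening strict inequalities by $\varepsilon$ preserves the combinatorial type of the resulting stratification uniformly across all formulas, with a threshold polynomial in $\kappaff(p)$, requires a careful stratified Morse-theoretic analysis. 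Everything else, in particular the propagation of cost bounds through the reduction to~\cite{BCTC1}, is technical but routine once this lemma is in place.
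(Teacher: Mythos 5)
Your proposal diverges from the paper in a way that exposes two genuine gaps.

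\textbf{The stability lemma is false.} You claim that a single perturbation parameter $\varepsilon$ — strengthening strict inequalities while keeping equalities — produces a closed set with the same homotopy type as $W(p,\Phi)$ once $\varepsilon$ is small enough. This fails. Take $f=(X,Y)$ and $\Phi\equiv (X=0 \wedge Y=0)\vee(X=0\wedge Y>0)\vee(X>0\wedge Y=0)\vee(X>0\wedge Y>0)$, so that $W(p,\Phi)$ is the closed quadrant (contractible). Your $W_\varepsilon$ is the disjoint union of the origin, the ray $\{X=0,\,Y\ge\varepsilon\}$, the ray $\{Y=0,\,X\ge\varepsilon\}$, and the region $\{X\ge\varepsilon,\,Y\ge\varepsilon\}$: four connected components instead of one, for every $\varepsilon>0$. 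Even the improved version where you also relax equalities to $|p_i|\le\varepsilon$ (i.e.\ a single Gabrielov--Vorobjov block $\gv_{\delta,\varepsilon}$) is not enough: the paper's Figure~\ref{fig:dim-opt} shows exactly this example, where the first block gets $H_0$ wrong, the union of the first two gets $H_1$ wrong, and only the union of three blocks yields the right homotopy type. The paper therefore uses the union $\gv_{\bfdelta,\bfeps}(f,\Phi)$ of $m\ge 2+\dim\Ap(f,\Phi)$ blocks and proves the Quantitative Gabrielov--Vorobjov Theorem~\ref{thm:quantitativeGV} to get explicit thresholds; there is no single-parameter shortcut, and the ``Thom--Mather isotopy along gradient flows'' picture you sketch cannot repair the disconnection.

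\textbf{The reduction to the algorithm of~\cite{BCTC1} fails for a second reason.} You argue that $\kappaff(\widehat p)=\Oh(\kappaff(p))$ and then feed $\widehat p$ into the closed-set algorithm of~\cite{BCTC1}. But after homogenization the polynomials $p_i-\varepsilon$ and $p_i+\varepsilon$ have the same top-degree form, so their projective closures meet tangentially at infinity whenever $\deg p_i>1$. As the paper states explicitly in \S\ref{se:reach-cond}, ``It is easy to verify that, in general, $\kappaff(\fbar)=\infty$,'' which is precisely the obstruction to calling~\cite{BCTC1} as a black box. The paper has to reprove all the witness theorems (Theorems~\ref{theo:homotopywitness}, \ref{theo:homologywitness}, \ref{thm:samp}) for the inhomogeneous tuple $\fbar=\{f_i-t_j\|f_i\|\}$, bounding Smale's $\gamma$ at points of the perturbed varieties directly in terms of $\kappabar(f)$ (Proposition~\ref{prop:boundgamma}) rather than in terms of a condition number of $\fbar$. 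Your Lipschitz estimate has no chance of holding, and the propagation of cost bounds ``through the reduction to~\cite{BCTC1}'' that you describe as routine is precisely the part that needs new machinery.
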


A few comments on~Theorem~\ref{thm:main_result}:
\begin{description}
\item[(i)]
As in~\cite{BCTC1}, we direct the reader to Section~7 of~\cite{CKS16} for 
an explanation, along with a proof, of the numerical stability
mentioned in the statement above. Details can also be found in ~\cite[4\textsuperscript{\S3}-2]{tonellicuetothesis}.
\item[(ii)]
Part~(iii) of Theorem~\ref{thm:main_result} shows that {\sf Homology} 
works in {\em weak exponential time}. 
\item[(iii)]
It is easy to check that all the routines in algorithm {\sf Homology} do parallelize for the computation of the Betti numbers\footnote{Claims in~\cite{BCL17,BCTC1} regarding the parallelization of the computation of torsion coefficients were inaccurate. As of today, the main difficulty lies in showing the existence of efficient parallel algorithms for the computation of the Smith Normal Form of integer matrices. See~\cite[p.~160-161]{tonellicuetothesis} for more details.}.
The parallel version of the algorithm can then be shown to work in parallel time 
$\size(p,\Phi)^{\mcO(1)}$ with probability at least 
$1-2^{-\size(p,\Phi)}$. 
That is, it works in {\em weak parallel polynomial time}. The arguments 
for this are in~\cite[\S7.4]{BCTC1} (see also~\cite[4\textsuperscript{\S3}-1]{tonellicuetothesis}). 
\item[(iv)] We note that $\Phi$ can be rewritten as a formula in disjunctive normal form of size at most $(qD)^{\Oh(n)}$. Therefore one could in principle eliminate $\size(\Phi)$ from the complexity estimates. However, we found it desirable to indicate the dependence of the cost of the algorithm in terms of all the intervening parameters, including the size of the formula.
\end{description}
\medskip

\medskip

\noindent
{\bf Structure of the paper.}\quad 
In Section~\ref{sec:overview} we provide an overview of the 
various ingredients that make up our algorithms and its 
analysis. At the end of this section we are in a situation 
of describing the algorithm itself and give a proof of 
Theorem~\ref{thm:main_result} based on the notions and 
results in this overview. Sections~\ref{sec:prrofGV} 
and~\ref{sec:gamma} provide the proofs of these results. 
Finally, we conclude in Section~\ref{sec:hybrid} with a discussion on how to combine numeric with symbolic algorithms.
\medskip

{\small
\tableofcontents
}
\medskip

\noindent
{\bf Acknowledgments.}\quad
We are grateful to Nicolai Vorobjov who pointed us to (what we call here) Gabrielov-Vorobjov approximations. 

\section{Overview of the Algorithm}\label{sec:overview}

\subsection{Elimination of negations and lax inequalities}

The initial step in our algorithm eliminates negations and lax inequalities 
(in this order) in the given formula $\Phi$. This can be done in time linear 
in $\size(\Phi)$ and the resulting formula after this elimination 
has a size which is at most $2\,\size(\Phi)$, the increase being due to the 
substitutions 
$$
    p\geq 0 \rew (p=0 \vee p>0)
$$
and 
$$
    p\leq 0 \rew (p=0 \vee p<0).
$$
All along this paper, we will write $\alpha \rew\beta$
to indicate that an expression $\alpha$ is rewritten as (i.e., replaced by) 
another expression $\beta$. 

The resulting formula is therefore monotone (no negations) and 
has no lax inequalities: it is built over the $3q$ atoms 
$p_i\propto 0$ with $i\in[q]$ and $\propto\in\{<,=,>\}$. 
We will call these formulas {\em strict}. 

Strict formulas can be rewriten in Disjunctive Normal Form (DNF) and the 
resulting formula is also strict (as atoms remain unchanged). Even though 
we will not need to convert the input formula into DNF in our algorithm, a conversion 
that may exponentially increase its size, we will use DNFs 
in many of our reasonings. We therefore recall that we call {\em purely 
conjunctive} a conjunction of atoms and that semialgebraic sets given by 
purely conjunctive formulas are called {\em basic}.

\subsection{Homogeneization}

From now on, we assume that $\Phi$ is strict. The next step in our algorithm 
maps into the sphere the semialgebraic set $W(p,\Phi)$ by considering its 
spherical closure. To do so, we recall some notation. 

As before, let $\bfd =(d_1,\ldots,d_q)$ be a $q$-tuple of positive integers. 
We denote by $\Hd[q]$ the 
vector space of $q$-tuples $f=(f_1,\ldots,f_q)$ of  homogeneous polynomials 
$f_i\in\bbR[X_0,\ldots,X_n]$ of degree~$d_i$. 
We put $\bfd^* := (1,\bfd)$.
The {\em homogenization map} $\Hm:\Pd[q]\rightarrow 
\mcH_{\bfd^*}[q+1]$ is defined by
\begin{equation*}
  p\mapsto \Hm(p):=(\|p\|X_0,p_1\hm,\ldots,p_q\hm),
\end{equation*}
where $p_i\hm:=p_i\left(X_1/X_0,\ldots,X_n/X_0\right)X_0^{d_i}$ denotes 
the homogenization of $p_i$ and $\|p\|$ stands for the Weyl norm of 
the tuple~$p$ (see~\cite[\S3.1]{BCTC1}). 

Any formula $\Phi$ over $f\in\Hd[q]$ 
defines a semialgebraic subset $\Ap(f,\Phi)$ of the sphere~$\bbS^n$. 
It will be convenient to call these sets {\em spherical semialgebraic}. 
In order to simplify the notation, we will also write $\Ap(f=0)$ etc.\ with 
the obvious meaning. 

The following result is straightforward. 

\begin{prop}\label{generaltospherical}
Let $p\in\Pd[q]$ 
and $\Phi$ be a strict formula over~$p$.  
Denote by $\Phi\hm$ the formula over $\Hm(p)\in\mcH_{\bfd^*}[q+1]$ given by
\[
  \Phi\hm:=\Phi(p_1\hm,\ldots,p_q\hm)\wedge \big(\|p\|X_0 > 0 \big).
\]
Then the sets $W(p,\Phi)$ and $\Ap(\Hm(p),\Phi\hm))$ are homeomorphic.
\eproof
\end{prop}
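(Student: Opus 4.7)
The plan is to exhibit an explicit homeomorphism, namely the standard central projection from the open upper hemisphere of $\bbS^n$ onto $\bbR^n$, and check that it matches up the two semialgebraic sets atom by atom.

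First I would set aside the trivial degenerate case $p=0$ (under the running assumption that $p$ lies in the sphere $\bbS(\Pd[q])$ this does not occur anyway), so that $\|p\|>0$. Then on $\bbS^n$ the atomic condition $\|p\|X_0>0$ that appears as a conjunct of $\Phi\hm$ is equivalent to $X_0>0$, i.e.\ to lying in the open upper hemisphere $\bbS^n_+:=\{x\in\bbS^n:x_0>0\}$. Consequently $\Ap(\Hm(p),\Phi\hm)\subseteq\bbS^n_+$.

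Next I would use the classical central projection
\[
  \varphi:\bbS^n_+\longrightarrow\bbR^n,\qquad (x_0,x_1,\ldots,x_n)\longmapsto\bigl(x_1/x_0,\ldots,x_n/x_0\bigr),
\]
which is a homeomorphism with inverse $y\mapsto (1,y)/\|(1,y)\|$. The key verification is that $\varphi$ carries $\Ap(\Hm(p),\Phi\hm)$ bijectively onto $W(p,\Phi)$. For this, observe that for any point $x\in\bbS^n_+$ and each $i\in[q]$,
\[
  p_i\hm(x_0,\ldots,x_n)\;=\;x_0^{d_i}\,p_i(x_1/x_0,\ldots,x_n/x_0)\;=\;x_0^{d_i}\,p_i(\varphi(x)).
\]
Since $x_0^{d_i}>0$, the signs of $p_i\hm(x)$ and $p_i(\varphi(x))$ coincide; in particular the truth values of the atoms $p_i\hm\propto 0$ and $p_i\propto 0$ agree for each $\propto\in\{<,=,>\}$. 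Because $\Phi$ is strict, its truth value is determined by these atomic signs, so by induction on the structure of $\Phi$ the formula $\Phi(p_1\hm,\ldots,p_q\hm)$ holds at $x$ iff $\Phi(p_1,\ldots,p_q)$ holds at $\varphi(x)$.

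Combining the two observations, $\varphi$ restricts to a bijection between $\Ap(\Hm(p),\Phi\hm)$ and $W(p,\Phi)$; since $\varphi$ is itself a homeomorphism between $\bbS^n_+$ and $\bbR^n$, this restriction is a homeomorphism, proving the proposition. No step looks like a real obstacle: the whole argument is bookkeeping, and the only subtlety to flag is the strictness hypothesis on $\Phi$, which is precisely what guarantees that multiplication by the positive factor $x_0^{d_i}$ does not alter the truth of any atom (for lax inequalities $\geq 0$, $\leq 0$ the same would be true, but the elimination step of the previous subsection has already removed those).
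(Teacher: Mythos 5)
Your proof is correct and gives exactly the argument the paper implicitly relies on (the paper marks the proposition as "straightforward" and provides no written proof). The central projection $\varphi:\bbS^n_+\to\bbR^n$ combined with the identity $p_i\hm(x)=x_0^{d_i}\,p_i(\varphi(x))$ and a structural induction over the monotone formula is the standard route; your handling of the degenerate case $p=0$ (which the proposition tacitly excludes, since in the algorithm $p$ is drawn from $\bbS^{N-1}$) and your remark that strictness is not actually essential here (the same reasoning works for lax atoms) are both accurate.
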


\subsection{Estimation of the condition number}
\label{sec:kappa}

In~\cite[\S3.4]{BCTC1} we defined a condition number
$\kappabar(f)\in [1,\infty]$ associated to a tuple $f\in\Hd[q]$, 
whose inverse measures how near are the intersections between the 
hypersurfaces given by $f$ from being non-transversal.  The condition
number $\kappaff(p)$ in Theorem~\ref{thm:main_result} 
was then defined~\cite[\S7.1]{BCTC1} to be $\kappabar(\Hm(p))$).  
The quantity $\kappabar(f)$ provides information on the geometry of every possible 
spherical semialgebraic set built from $f$. Tuples $f$ for which 
$\kappabar(f)=\infty$ are said to be {\em ill-posed}. They are precisely those 
tuples for which there exists a formula $\Phi$ such that arbitrary 
small perturbations of $f$ may change the topology of $\Ap(f,\Phi)$. 
The set $\overline{\Sigma}_{\bfd}[q]$ of ill-posed tuples has positive codimension in 
$\Hd[q]$ and $\kappabar(f)$ estimates how far is $f$ from $\overline{\Sigma}_{\bfd}[q]$. 

The first substantial computational effort performed by {\sf Homology} 
is to estimate the condition number $\kappabar(f)$ of a tuple 
$f\in\Hd[q]$. The following result, Proposition~2.2 in~\cite{BCTC1},  
deals with this task.

\begin{prop}\label{prop:kappa-est}
There is an algorithm {\sc $\kappabar$-Estimate}
that, given $f\in\Hd[q]$, returns a number~$\sfK$ such that 
$$
   0.99\,\kappabar(f) \ \leq\ \sfK \ \leq\ \kappabar(f) 
$$
if $\kappabar(f)<\infty$,  
or loops forever otherwise. The cost of this algorithm is bounded 
by $\big(qnD\kappabar(f)\big)^{\Oh(n)}$. \eproof
\end{prop}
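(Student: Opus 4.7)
The plan is to estimate $\kappabar(f)$ by sampling a pointwise version $\kappabar(f,x)$ on a sufficiently fine grid of $\bbS^n$ and returning the maximum of the sampled values, combined with an adaptive refinement loop that stops once the grid is certifiably fine enough. Indeed, by the definition recalled in \S\ref{sec:kappa}, the number $\kappabar(f)$ is (the supremum on $\bbS^n$ of) a local condition number $\kappabar(f,x)$ measuring how close the rescaled evaluation/Jacobian data of $f$ at $x$ are to being rank-deficient (over all index subsets $I\subseteq[q]$ of size at most $n$ relevant for transversality). Computing $\kappabar(f,x)$ at a single point reduces to evaluating the polynomials and their gradients at $x$ and taking an extremal singular value of a matrix of size $O(n)\times O(n)$ over such subsets, at cost $\mathrm{poly}(q,N) \cdot q^{\Oh(n)}$.

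The first key ingredient is a Lipschitz-type estimate for $x\mapsto \kappabar(f,x)$ of the form: if $d_{\bbS^n}(x,y)\le\delta$, then
\[
   \bigl|\kappabar(f,x)-\kappabar(f,y)\bigr| \;\le\; c\,nD\,\kappabar(f,x)\,\kappabar(f,y)\,\delta .
\]
This is the standard quadratic-in-condition Lipschitz bound that follows from writing $\kappabar(f,\cdot)$ as an inverse distance to a variety and controlling the derivatives of $f$ on $\bbS^n$ via $nD$. From it one derives a clean certificate: if $\sfK$ is the maximum of $\kappabar(f,\cdot)$ over a $\delta$-net $\mcN_\delta$ of $\bbS^n$ with $\delta \le 1/(200\,c\,nD\,\sfK)$, then $0.99\,\kappabar(f)\le \sfK \le \kappabar(f)$.

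The algorithm is then: start with $\delta=\delta_0$, build a $\delta$-net of $\bbS^n$ of size $(1/\delta)^{\Oh(n)}$, compute $\sfK:=\max_{x\in\mcN_\delta}\kappabar(f,x)$, and check the certificate $\delta \le 1/(200\,c\,nD\,\sfK)$. If it holds, return $\sfK$; otherwise halve $\delta$ and repeat. When $\kappabar(f)<\infty$, the loop terminates as soon as $\delta=\Theta\!\bigl((nD\kappabar(f))^{-1}\bigr)$, at which point the net has size $(nD\kappabar(f))^{\Oh(n)}$; multiplying by the per-point cost $\mathrm{poly}(q,N)\,q^{\Oh(n)}$ and by the geometric-series number of refinements yields total cost $(qnD\kappabar(f))^{\Oh(n)}$. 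When $\kappabar(f)=\infty$, the certificate is never met and the procedure loops forever, as required.

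The main obstacle is the Lipschitz estimate for $\kappabar(f,\cdot)$, because $\kappabar(f,x)$ is not a single smooth quantity but a maximum over all index subsets $I$ (of the appropriate local non-transversality measure on $f_I$), and over the choice of ambient versus tangential components on $\bbS^n$; one has to carefully bound each constituent on a sphere of varying radius and take the max, which is where the explicit $nD$ factor and the exponent $\Oh(n)$ in the cost really come from. All other components are standard: efficient construction of $\delta$-nets on $\bbS^n$, numerical linear algebra for singular values, and the geometric series argument for the refinement loop. Since this is exactly Proposition~2.2 of~\cite{BCTC1}, our only task here is to invoke it.
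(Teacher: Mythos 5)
The paper itself gives no proof here: the statement carries \verb|\eproof| and is introduced with ``the following result, Proposition~2.2 in~\cite{BCTC1}''; in other words it is stated by citation. Your proposal ultimately does exactly the same thing (``our only task here is to invoke it''), and the grid-refinement sketch you supply on the way is a faithful summary of how the cited result is actually proved in~\cite{BCTC1}, so your approach matches the paper's.
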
 
When $\kappabar(f)$ is infinity the algorithm loops forever because it cannot find an upper bound for $\kappabar(f)$. However, we note that the algorithm in~\cite{BCTC1} can be modified to stop if $\kappabar(f)$ is too large and return this fact (see~\cite[Proposition~6.3]{BCTC1}).

\subsection{The Gabrielov-Vorobjov construction}

The main  idea behind the algorithm in~\cite{BCTC1} consists of finding 
a finite collection of points $\mcX$ and a radius $\varepsilon$ 
such that the union $\cup_{x\in\mcX}B(x,\varepsilon)$ contains the 
set $\Ap(f,\Phi)$ and continuously retracts to it. 
Out of the realm of lax formulas we dealt with in~\cite{BCTC1} this idea 
becomes impracticable. The reason is 
that the connected components of $\Ap(f,\Phi)$ 
may now not be separated by a positive distance. 
Consider for instance the pair $f=(X-Y,Y)$ and the semialgebraic set 
given by 
\begin{equation}\label{eq:example}
    \Phi\equiv (X-Y=0 \wedge Y>0) \vee (Y=0 \wedge X-Y>0). 
\end{equation}
The set $\Ap(f,\Phi)$ consists of two open half-lines with origin at $(0,0)$.  
Balls close to $(0,0)$ containing initial segments of the two half-lines are 
likely to intersect. 

To circumvent this problem we will rely on a beautiful construction 
conceived by A.~Gabrielov and N.~Vorobjov in~\cite{gabrivorob} that 
produces closed semialgebraic approximations to semialgebraic sets. 
These are obtained by combining relaxations of the equalities and 
strengthenings of the inequalities in the formula $\Phi$. 
These relaxations and strengthenings can be seen as a rewriting of the 
formula $\Phi(f)$ into a new formula. 

\begin{defi}
Given a monotone formula $\Phi$ over $f\in \Hd[q]$ and positive $\delta$ and $\varepsilon$, 
the \textit{Gabrielov-Vorobjov $(\delta,\varepsilon)$-block 
$\gv_{\delta,\varepsilon}(f,\Phi)$} is the spherical semialgebraic set defined by 
the following rewriting of $\Phi(f)$,
\begin{align*}
  &f_i=0 \rew |f_i(x)|\leq \varepsilon\|f_i\|,\\ 
  &f_i>0 \rew f_i(x)\geq \delta\|f_i\|, \mbox{ and}\\ 
  &f_i<0 \rew f_i(x)\leq -\delta\|f_i\|.
\end{align*}
 
Given $\bfdelta,\bfeps\in (0,\infty)^m$, the 
\textit{Gabrielov-Vorobjov $(\bfdelta,\bfeps)$-approximation 
$\gv_{\bfdelta,\bfeps}(f,\Phi)$ (of order $m$) 
of $\Ap(f,\Phi)$} is 
the spherical semialgebraic set given by
\begin{equation}\label{eq:GVapprox}
\gv_{\bfdelta,\bfeps}(f,\Phi):=\bigcup_{k=1}^m\gv_{\delta_k,\varepsilon_k}(f,\Phi)\text{.}
\end{equation}
\end{defi}

Note that Gabrielov-Vorobjov blocks and Gabrielov-Vorobjov
approximations are compact subsets of $\bbS^n$. The norms $\|f_i\|$ in the definition above 
are not in~\cite{gabrivorob}. We have added them here as they 
make clearer statements in our context.

The main result of~\cite{gabrivorob}, Theorem~1.10 there, yields the following immediate consequence (which also holds with our modified definition of Gabrielov-Vorobjov blocks). Recall that $\pi_k$ stands for the $k$th homotopy group and $H_k$ for the $k$th homology group.

\begin{theo}[Gabrielov-Vorobjov Theorem]\label{thm:GV}
Let $f\in\Hd[q]$, $\Phi$ be a monotone formula over $f$, 
$m\in\bbN$, and 
$\bfdelta,\bfeps\in(0,\infty)^m$. If
\begin{equation}\label{GVcondition}
0<\varepsilon_1\ll\delta_1\ll\cdots\ll\varepsilon_m\ll\delta_m\ll 1\text{,}
\end{equation}
then, for $k\in\{0,\ldots,m-1\}$, there are homomorphisms
\[\phi_k:\pi_k(\gv_{\bfdelta,\bfeps}(f,\Phi))\rightarrow \pi_k(\Ap(f,\Phi))\]
and
\[\varphi_k:H_k(\gv_{\bfdelta,\bfeps}(f,\Phi))\rightarrow H_k(\Ap(f,\Phi))\]
that are isomorphisms for $k< m-1$ and epimorphisms when $k=m-1$.
\eproof
\end{theo}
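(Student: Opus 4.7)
The plan is to deduce this directly from Theorem~1.10 of~\cite{gabrivorob}, as the authors indicate. The only discrepancy between our setup and that of~\cite{gabrivorob} is the normalizing factor $\|f_i\|$ in our atomic rewritings, so the whole task is to verify that this scaling does not disturb the original argument. The first step is to introduce per-atom thresholds $\tilde{\varepsilon}_k^{(i)} := \varepsilon_k \|f_i\|$ and $\tilde{\delta}_k^{(i)} := \delta_k \|f_i\|$. With these, the block $\gv_{\delta_k,\varepsilon_k}(f,\Phi)$ in our sense coincides exactly with the Gabrielov--Vorobjov block of~\cite{gabrivorob} associated to $f$, $\Phi$, and the collection of thresholds $\big(\tilde{\varepsilon}_k^{(i)}, \tilde{\delta}_k^{(i)}\big)_{i\in[q]}$; the original construction accommodates independent thresholds per atomic formula, so this is a literal identification.

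Next, I would verify that the chain hypothesis~\eqref{GVcondition} transfers to the rescaled thresholds: for each fixed $i\in[q]$ one needs $0 < \tilde{\varepsilon}_1^{(i)} \ll \tilde{\delta}_1^{(i)} \ll \cdots \ll \tilde{\varepsilon}_m^{(i)} \ll \tilde{\delta}_m^{(i)} \ll 1$. Since each $\|f_i\|$ is a positive constant independent of the level~$k$, and since the symbol $\ll$ (interpreted as \emph{sufficiently smaller than}) is preserved under multiplication by any finite collection of positive constants, the single chain~\eqref{GVcondition} on $(\varepsilon_k,\delta_k)$ guarantees the $q$ chains required by~\cite{gabrivorob}. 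This is the only place where our normalization actually enters.

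With the hypotheses of Theorem~1.10 of~\cite{gabrivorob} met, that result applies directly and yields the homomorphisms $\phi_k$ on homotopy groups with the desired isomorphism/epimorphism properties; the analogous statement on $H_k$ follows by the parallel singular-homology version of the same argument in~\cite{gabrivorob}, or through naturality of the Hurewicz map combined with the homotopy statement (in the range $k<m-1$ an isomorphism on $\pi_k$ and an epimorphism on $\pi_{m-1}$ suffice to transport the conclusion to $H_\bullet$ via a standard comparison). The main, and essentially only, subtlety is therefore the bookkeeping of the per-atom normalization, which is absorbed by the flexibility inherent in the $\ll$ hypotheses; no new topological input is required beyond what~\cite{gabrivorob} already provides.
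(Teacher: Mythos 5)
Your proposal tracks the paper's intent exactly: Theorem~\ref{thm:GV} is presented with a trailing \verb|\eproof| because the authors regard it as a direct citation of Theorem~1.10 of~\cite{gabrivorob}, with the only thing to check being that the normalization factors $\|f_i\|$ do no harm. Two small remarks. First, the cleanest way to absorb the normalization is not via per-atom thresholds $\tilde{\varepsilon}_k^{(i)} = \varepsilon_k\|f_i\|$ (which presupposes that~\cite{gabrivorob} is stated with that degree of flexibility), but simply to replace $f$ by the normalized tuple $\hat f := (f_1/\|f_1\|, \ldots, f_q/\|f_q\|)$; this leaves $\Ap(f,\Phi)=\Ap(\hat f,\Phi)$ unchanged and turns $\gv_{\delta_k,\varepsilon_k}(f,\Phi)$ (in the paper's normalized sense) into the unnormalized GV block of~\cite{gabrivorob} applied to $\hat f$, so the original theorem applies verbatim with a single chain~\eqref{GVcondition}. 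Second, the parenthetical fallback via ``naturality of the Hurewicz map'' does not work in general: the spaces $\gv_{\bfdelta,\bfeps}(f,\Phi)$ and $\Ap(f,\Phi)$ need not be simply connected, and the Hurewicz comparison cannot be used to transport isomorphisms on $\pi_k$ to isomorphisms on $H_k$ without simple connectivity (indeed, the theorem already covers $k=0,1$ separately). Your primary route --- citing the homology version stated in~\cite{gabrivorob} alongside the homotopy version --- is the correct one, so the proof stands if you drop the Hurewicz alternative.
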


In this statement, the relations 
$0<a_1\ll\cdots\ll a_t\ll 1$ of reals $a_i$ mean that 
there are functions 
$h_k:(0,1)^{t-k}\rightarrow (0,1)$ 
such that $0<a_k<h_k(a_{k+1},\ldots,a_t)$ for all~$k$. 

\begin{remark}
Homotopy groups (without specifying a base point) are only defined 
for connected spaces. However, the bijection between 
$\pi_0(\gv_{\bfdelta,\bfeps}(f,\Phi))$ and $\pi_0(\Ap(f,\Phi))$ identifies 
the connected components of $\gv_{\bfdelta,\bfeps}(f,\Phi)$ and 
$\Ap(f,\Phi)$. Therefore we can naturally interpret 
$\phi_k:\pi_k(\gv_{\bfdelta,\bfeps}(f,\Phi))\rightarrow \pi_k(\Ap(f,\Phi))$, for $k>0$, 
as the family of maps
\[
\{\phi_k:\pi_k(C)\rightarrow \pi_k(\phi_0(C))\mid C\in\pi_0(\gv_{\bfdelta,\bfeps}(f,\Phi))\}.
\]
The assumption of connectedness 
in~\cite{gabrivorob} is only for technical ease of the exposition.
\end{remark}

The proof of Theorem~\ref{thm:GV}
is an elegant conjunction of geometric insight and technical skill. 
While it is out of our reach to explain the ideas behind it (the interested 
reader will find these ideas in~\cite{gabrivorob}) we believe a few simple 
examples may provide some intuition. 

\begin{exam}
Consider the pair $f$ and formula $\Phi$ in~\eqref{eq:example} we started this subsection with. 
For any pair $(\delta,\varepsilon)$ with $0<\varepsilon<\delta$ the block 
$\gv_{\delta,\varepsilon}(f,\Phi)$ is given by
$$
   \Big(|X-Y|\le \varepsilon\sqrt{2} \wedge (Y\ge\delta) \Big)
   \vee \Big( |Y|\le\varepsilon \wedge 
   \big(X-Y\ge-\delta\sqrt{2}\big)\Big) 
$$
and looks as in Figure~\ref{fig:2lines}.

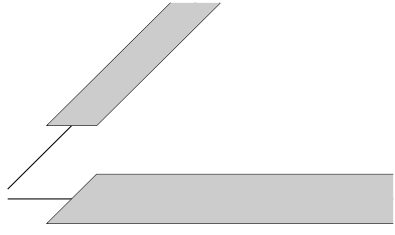
\begin{figure}[ht]
\begin{center}
\begin{tikzpicture}[scale=1.3,point/.style={draw,minimum size=0pt,
    inner sep=1pt,circle,fill=black}]
\draw (0.1,0) -- (4,0);
\draw (4,-0.25) -- (0.5,-0.25) -- (1,0.25) -- (4,0.25);
\fill[black!20!white] (4,0.25) -- (4,-0.25) -- (0.5,-0.25) -- (1,0.25);
\draw (0.1,0.1) -- (2,2);
\draw (1.75,2) --(0.5,0.75)--(1.0,0.75) -- (2.25,2);
\fill[black!20!white](0.5,0.75) -- (1.0,0.75) -- (2.25,2) -- (1.75,2);
\end{tikzpicture}
\end{center}
\caption{The Gabrielov-Vorobjov construction for two open half-lines}\label{fig:2lines}
\end{figure}
It is clear that this block is homotopically equivalent to $\Ap(f,\Phi)$.
\end{exam}

\begin{exam}
The number~$m$ of blocks needed in the Gabrielov-Vorobjov construction to recover 
the $k$th homology group of $\Ap(f,\Phi)$ may reach the bound $k+2$ in Theorem~\ref{thm:GV}. 
Let $f=(X,Y)$ and consider 
$$
  \Phi\equiv (X=0 \wedge Y=0) \vee (X=0 \wedge Y>0) \vee (X>0 \wedge Y=0) \vee 
  (X>0 \wedge Y>0)
$$
so that $\Ap(f,\Phi)$ is the closed positive quadrant. Now take any sequence 
$$
   0<\varepsilon_1<\delta_1<\varepsilon_2<\delta_2<\varepsilon_3<\delta_3.
$$
At the left of Figure~\ref{fig:dim-opt} we see in light grey shading the 
block $\gv_{\delta_1,\varepsilon_1}(f,\Phi)$. It is not connected; not even 
the 0th homology group is correct. At the center of the figure we see that 
same first block with $\gv_{\delta_2,\varepsilon_2}(f,\Phi)$ superimposed in a
darker shade of grey. Now the union of the first two blocks is connected (so 
$H_0$ is correct) but not simply connected: the first homology group is wrong.
We obtain a contractible set, homotopically equivalent to $\Ap(f,\Phi)$, when 
we add the third block, at the right of the figure, to the union. 

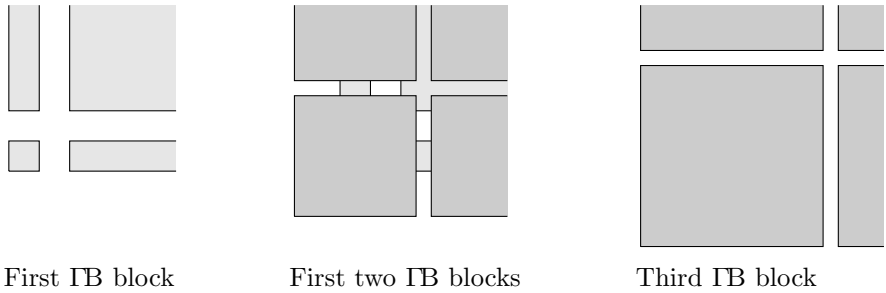
\begin{figure}[h]
\begin{center}
\begin{tikzpicture}[scale=0.2,point/.style={draw,minimum size=0pt,
    inner sep=1pt,circle,fill=black}]
\fill[black!10!white] (-1,-1) -- (-1,1) -- (1,1) -- (1,-1);
\draw (-1,-1) -- (-1,1) -- (1,1) -- (1,-1) -- (-1,-1);
\fill[black!10!white] (10,-1) -- (3,-1) -- (3,1) -- (10,1);
\draw (10,-1) -- (3,-1) -- (3,1) -- (10,1);
\fill[black!10!white] (-1,10) -- (-1,3) -- (1,3) -- (1,10);
\draw (-1,10) -- (-1,3) -- (1,3) -- (1,10);
\fill[black!10!white] (3,10) -- (3,3) -- (10,3) -- (10,10);
\draw (3,10) -- (3,3) -- (10,3);
\node[anchor=west] (note1) at (-2,-8) {\small First $\gv$ block};
\end{tikzpicture}
\hspace{1cm}
\begin{tikzpicture}[scale=0.2,point/.style={draw,minimum size=0pt,
    inner sep=1pt,circle,fill=black}]
    \fill[black!10!white] (-1,-1) -- (-1,1) -- (1,1) -- (1,-1);
\draw (-1,-1) -- (-1,1) -- (1,1) -- (1,-1) -- (-1,-1);
\fill[black!10!white] (10,-1) -- (3,-1) -- (3,1) -- (10,1);
\draw (10,-1) -- (3,-1) -- (3,1) -- (10,1);
\fill[black!10!white] (-1,10) -- (-1,3) -- (1,3) -- (1,10);
\draw (-1,10) -- (-1,3) -- (1,3) -- (1,10);
\fill[black!10!white] (3,10) -- (3,3) -- (10,3) -- (10,10);
\draw (3,10) -- (3,3) -- (10,3);
\fill[black!20!white] (-4,-4) -- (-4,4) -- (4,4) -- (4,-4);
\draw (-4,-4) -- (-4,4) -- (4,4) -- (4,-4) -- (-4,-4);
\fill[black!20!white] (10,-4) -- (5,-4) -- (5,4) -- (10,4);
\draw (10,-4) -- (5,-4) -- (5,4) -- (10,4);
\fill[black!20!white] (-4,10) -- (-4,5) -- (4,5) -- (4,10);
\draw (-4,10) -- (-4,5) -- (4,5) -- (4,10);
\fill[black!20!white] (5,10) -- (5,5) -- (10,5) -- (10,10);
\draw (5,10) -- (5,5) -- (10,5);
\node[anchor=west] (note1) at (-5,-8) {\small First two $\gv$ blocks};
\end{tikzpicture}
\hspace{1cm}
\begin{tikzpicture}[scale=0.2,point/.style={draw,minimum size=0pt,
    inner sep=1pt,circle,fill=black}]
\fill[black!20!white] (-6,-6) -- (-6,6) -- (6,6) -- (6,-6);
\draw (-6,-6) -- (-6,6) -- (6,6) -- (6,-6) -- (-6,-6);
\fill[black!20!white] (10,-6) -- (7,-6) -- (7,6) -- (10,6);
\draw (10,-6) -- (7,-6) -- (7,6) -- (10,6);
\fill[black!20!white] (-6,10) -- (-6,7) -- (6,7) -- (6,10);
\draw (-6,10) -- (-6,7) -- (6,7) -- (6,10);
\fill[black!20!white] (7,10) -- (7,7) -- (10,7) -- (10,10);
\draw (7,10) -- (7,7) -- (10,7);
\node[anchor=west] (note1) at (-7,-8) {\small Third $\gv$ block};
\end{tikzpicture}
\end{center}
\caption{The Gabrielov-Vorobjov construction for the positive quadrant}\label{fig:dim-opt}
\end{figure}
\end{exam}

We now remark that no explicit form of the functions $h_k$ behind the 
relations $0<a_1\ll\cdots\ll a_t\ll 1$ is given in~\cite{gabrivorob}. 
Our first main result, Theorem~\ref{thm:quantitativeGV} below, provides a very 
simple answer to this issue for well-posed tuples of polynomials.

\begin{theo}[Quantitative Gabrielov-Vorobjov Theorem]\label{thm:quantitativeGV}
In Theorem~\ref{thm:GV}, condition~\eqref{GVcondition} can be replaced by
\begin{equation}\label{GVconditionexplicit}
0<\varepsilon_1<\delta_1<\cdots<\varepsilon_m<\delta_m <\frac{1}{\sqrt{2}\kappabar(f)}
\end{equation}
when $\kappabar(f)<\infty$.
\end{theo}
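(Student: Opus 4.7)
The plan is to revisit the proof of the Gabrielov--Vorobjov Theorem (Theorem~\ref{thm:GV}) and track exactly how the qualitative relations $\ll$ enter the argument, replacing each occurrence by a quantitative bound expressed in terms of $\kappabar(f)$. The original argument of~\cite{gabrivorob} constructs, for each nested pair of Gabrielov--Vorobjov blocks, deformation retracts and Mayer--Vietoris-type arguments whose existence requires only that the parameters $\varepsilon_k, \delta_k$ be sufficiently small relative to the geometry of the hypersurfaces $\{f_i=0\}$ and their mutual intersections. The key observation is that $\kappabar(f)$ is precisely designed to control this geometry: it bounds from below the smallest singular value of the Jacobians of subsystems of $f$ at their common zeros, or equivalently, measures the distance from $f$ to the ill-posed locus $\overline{\Sigma}_{\bfd}[q]$ (see~\cite[\S3.4]{BCTC1}).

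First I would establish a quantitative transversality lemma: when $0<\varepsilon<\delta<1/(\sqrt{2}\kappabar(f))$, the level sets $\{|f_i|=\varepsilon\|f_i\|\}$ and $\{|f_i|=\delta\|f_i\|\}$ lie in a controlled neighbourhood of $\{f_i=0\}$ on which $f$ and all its subsystems retain quantitative nondegeneracy, so that every block $\gv_{\delta,\varepsilon}(f,\Phi)$ is a well-defined manifold with corners admitting a Whitney stratification compatible with the defining atoms. The factor $\sqrt{2}$ should arise from the conversion between angular distance on $\bbS^n$ and the normalized values $|f_i(x)|/\|f_i\|$: the Condition Number Theorem established in~\cite{BCTC1} relates the latter to the sine of the angle to the zero locus up to precisely this factor, which in turn matches the definition of $\kappabar(f)$ in~\cite[\S3.4]{BCTC1}.

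Next I would recast the iterative arguments of~\cite{gabrivorob}. For each $k$, one shows that passing from the partial approximation of order $k-1$ to that of order $k$ replaces a piece of the wrong homotopy type (in dimensions $<k$) by a piece of the right one, via an explicit deformation along trajectories of gradient-like vector fields built from the functions $f_i/\|f_i\|$. The quantitative transversality from the first step guarantees that critical values of the relevant functions along these trajectories are separated by a gap depending only on $\kappabar(f)$, so that strict inequality $\varepsilon_k<\delta_k<\varepsilon_{k+1}<\delta_{k+1}$ (rather than $\varepsilon_k\ll\delta_k\ll\varepsilon_{k+1}\ll\delta_{k+1}$) is enough to avoid crossing any critical value during the relevant isotopies. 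This is the point where the absence of an upper bound of the form $h_k(\delta_{k+1},\ldots,\delta_m)$ becomes possible: the condition number provides a uniform geometric scale, valid simultaneously at every level of the nested construction, replacing the inductively shrinking window of~\cite{gabrivorob}.

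The main obstacle will be faithfully translating the geometric arguments of~\cite{gabrivorob}, which are stated in great generality without explicit constants, into our spherical/homogeneous setting and verifying that each qualitative smallness hypothesis there is implied by the single bound $\delta_m<1/(\sqrt{2}\kappabar(f))$. In particular, care is needed where~\cite{gabrivorob} invokes generic choices or Sard-type arguments, which must be replaced by the quantitative nondegeneracy provided by $\kappabar(f)$; and the bookkeeping of how successive relaxations and strengthenings interact through the Boolean structure of $\Phi$ must be done uniformly in the formula. Once these steps are in place, the conclusion of Theorem~\ref{thm:GV} transfers verbatim, yielding~\eqref{GVconditionexplicit}.
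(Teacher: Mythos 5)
Your proposal and the paper pursue genuinely different strategies, and the difference matters. You plan to reopen the proof of Theorem~\ref{thm:GV} from~\cite{gabrivorob} and trace every qualitative ``$\ll$'' to a quantitative inequality governed by $\kappabar(f)$. The paper does the opposite: it treats Theorem~\ref{thm:GV} as a black box and never touches its internals. Instead, it proves (Propositions~\ref{thm:reductionGV1} and~\ref{thm:reductionGV2}) that whenever two parameter tuples $(\bfdelta,\bfeps)$ and $(\bfdelta',\bfeps')$ both satisfy the explicit bound~\eqref{GVconditionexplicit} and differ in a single coordinate, the inclusion $\gv_{\bfdelta,\bfeps}(f,\Phi)\subseteq\gv_{\bfdelta',\bfeps'}(f,\Phi)$ is a homotopy equivalence. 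Since one can always find a tuple $(\tilde\bfdelta,\tilde\bfeps)$ satisfying \emph{both}~\eqref{GVcondition} and~\eqref{GVconditionexplicit}, one walks from an arbitrary $(\bfdelta,\bfeps)$ satisfying~\eqref{GVconditionexplicit} to $(\tilde\bfdelta,\tilde\bfeps)$ by single-coordinate moves, each preserving homotopy type, and then invokes the qualitative Theorem~\ref{thm:GV} at the endpoint. The technical engine behind the single-coordinate step is the $(f,\bflambda)$-partition of $\bbS^n$: Theorem~\ref{thm:flambdapartition} shows it is a Whitney stratification under~\eqref{condititionFL}, and Thom's First Isotopy Lemma (Theorem~\ref{thm:ThomMatherRetraction}) then supplies the stratified trivialization used to build the retraction~$\Psi$. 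What the paper's route buys is that one never has to understand, let alone quantify, the Gabrielov--Vorobjov proof; what your route would buy, if carried out, is a self-contained quantitative statement not mediated through the existence claim of~\cite{gabrivorob}.

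That said, your proposal is not yet a proof and I see a substantive obstacle rather than merely missing details. Every concrete step is prefaced by ``I would establish'' or ``I would recast,'' and you explicitly flag the ``main obstacle'' as translating the arguments of~\cite{gabrivorob} and checking each smallness hypothesis against your single bound. That is precisely the hard part, and the paper's authors themselves write that the ideas behind that proof are ``out of our reach to explain''; it is a deep construction involving carefully nested families and a good deal of bookkeeping, not a routine transversality argument with hidden constants to be extracted. Absent a concrete demonstration that \emph{each} implicit $h_k(\cdot)$ in the relations~\eqref{GVcondition} is bounded below by something depending only on $\kappabar(f)$, there is no evidence that a single uniform threshold $1/(\sqrt2\,\kappabar(f))$ suffices simultaneously at every level of the nesting. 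Your guiding intuition, that $\kappabar(f)$ provides a uniform geometric scale replacing the inductively shrinking windows, is exactly the intuition that motivates the theorem, but turning it into a proof by your route is a project comparable in difficulty to~\cite{gabrivorob} itself. The paper's homotopy-invariance argument sidesteps this entirely, which is why I would classify your approach as containing a genuine gap as written, even though the broad plan is not wrong in principle.

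One small point you do get right and that is worth retaining: the factor $\sqrt2$ in~\eqref{GVconditionexplicit} indeed traces back to the relation between the normalized values $|f_i(x)|/\|f_i\|$ and the condition number $\kappabar(f)$ via~\cite[Proposition~3.6]{BCTC1}, which is exactly the ingredient the paper uses (inside Theorem~\ref{thm:flambdapartition}) to ensure the level hypersurfaces in the $(f,\bflambda)$-partition are smooth and meet transversally.
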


\begin{exam}
The simple form of the inequalities in~\eqref{GVconditionexplicit} requires 
well-posedness, i.e., $\kappabar(f)<\infty$. To see this, consider $f=(X,Y,X-Y)$ and 
\[
\Phi\equiv ((X=0)\wedge(Y>0))\vee((X-Y=0)\wedge(Y>0)).
\]
The set $\Ap(f,\Phi)$ consists of two half-lines with a common origin but 
without this origin. Note that $\kappabar(f)=\infty$. Figure~\ref{fig:ill-posed} 
shows $\Ap(f,\Phi)$ at the left. 
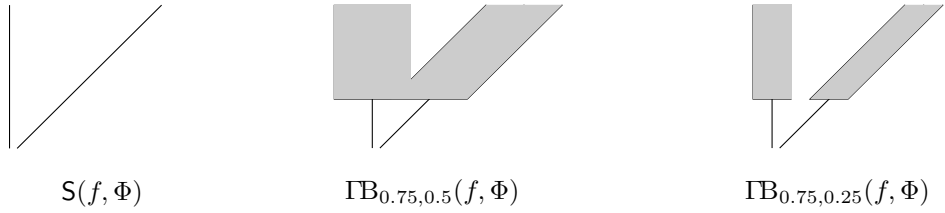
\begin{figure}[h]
\begin{center}
\begin{tikzpicture}[scale=1,point/.style={draw,minimum size=0pt,
    inner sep=1pt,circle,fill=black}]
\draw (0,0.1) -- (0,2);
\draw (0.1,0.1) -- (2,2);
\node[anchor=west] (note1) at (0.55,-0.5) {\small $\Ap(f,\Phi)$};
\end{tikzpicture}
\hspace{2cm}
\begin{tikzpicture}[scale=1,point/.style={draw,minimum size=0pt,
    inner sep=1pt,circle,fill=black}]
\draw (0,0.1) -- (0,2);
\draw (-0.5,2) -- (-0.5,0.75) -- (1.25,0.75) -- (2.5,2);
\draw (0.5,2)  -- (0.5,1)-- (1.5,2);
\fill[black!20!white] (0.5,2) -- (-0.5,2) -- (-0.5,0.75) -- (0.5,0.75);
\draw (0.1,0.1) -- (2,2);
\fill[black!20!white] (0.25,0.75) -- (1.25,0.75) -- (2.5,2) -- (1.5,2);
\node[anchor=west] (note1) at (-0.5,-0.5) {\small $\gv_{0.75,0.5}(f,\Phi)$};
\end{tikzpicture}
\hspace{2cm}
\begin{tikzpicture}[scale=1,point/.style={draw,minimum size=0pt,
    inner sep=1pt,circle,fill=black}]
\draw (0,0.1) -- (0,2);
\draw (-0.25,2) -- (-0.25,0.75) -- (0.25,0.75) -- (0.25,2);
\fill[black!20!white]  (0.25,0.75) -- (0.25,2) -- (-0.25,2) -- (-0.25,0.75);
\draw (0.1,0.1) -- (2,2);
\draw  (1.75,2) -- (0.5,0.75) -- (1.0,0.75) -- (2.25,2);
\fill[black!20!white](0.5,0.75) -- (1.0,0.75) -- (2.25,2) -- (1.75,2);
\node[anchor=west] (note1) at (-0.5,-0.5) {\small $\gv_{0.75,0.25}(f,\Phi)$};
\end{tikzpicture}
\end{center}
\caption{The Gabrielov-Vorobjov construction for an ill-posed system}\label{fig:ill-posed}
\end{figure}
The center and right parts of the figure 
exhibit two Gabrielov-Vorobjov Approximations for it with $m=1$ but different 
pairs $(\delta,\varepsilon)$. 
The middle part shows that the condition $\varepsilon<\delta$ is not strong enough 
to guarantee the conclusions of Theorem~\ref{thm:GV} for $m=1$. An easy computation 
shows that, in this case, we need $0<\varepsilon<\delta/2$ (as in the right part 
of the figure).
\end{exam}

\subsection{Reach and condition of Gabrielov-Vorobjov approximations}
\label{se:reach-cond}

In the following let $f\in\Hd[q]$ be such that $\kappabar(f) < \infty$. 
Moreover let $\Phi$ be a strict formula over $f$ and the integer $m$ satisfy 
$m \ge 2 +\dim\Ap(f,\Phi)$; such $m$ can be easily computed since $f$
is well-posed.
Moreover, let $\sfK$ be an estimate of $\kappabar(f)$ 
as in Proposition~\ref{prop:kappa-est}.
Using the increasing tuple of positive reals 
\begin{equation}\label{eq:epds}
  \big(\varepsilon_1,\delta_1,\ldots,\varepsilon_m,\delta_m\big)
   := \big( 15(2m+1)D^2\sfK^2 \big)^{-1}\,\big(1,2,\ldots, 2m \big) ,
\end{equation}
we define the Gabrielov-Vorobjov approximation of $\Ap(f,\Phi)$ 
\begin{equation}\label{eq:def-GB}
 \gv(f,\Phi) : = \gv_{\bfdelta,\bfeps}(f,\Phi) .
\end{equation}  
By Theorem~\ref{thm:quantitativeGV}, 
$\gv(f,\Phi)$ and $\Ap(f,\Phi)$ have the same homology. 

As $\gv(f,\Phi)$ is closed, a first idea would be to apply to it 
the algorithm developed in~\cite{BCTC1}.
Unfortunately, as we are about to see, this idea does not work. 
Let $e:=4m$ and abbreviate 
\begin{equation}\label{eq:seq_t}
   t:=(t_1,\ldots,t_e):=
   (\varepsilon_1,\delta_1,\ldots,\varepsilon_m,\delta_m,
         -\varepsilon_1,-\delta_1,\ldots,-\varepsilon_m,-\delta_m) ,
\end{equation}
where the $\varepsilon_j,\delta_j$ are those 
in~\eqref{eq:epds}. 
We note that 
$\gv(f,\Phi)$ is defined by a Boolean formula~$\Phibar$ 
(depending on $\Phi$) 
in terms of the tuple of polynomials
\begin{equation}\label{eq:def-fbar}
\fbar := \{f_i-t_j\|f_i\|\}_{i\le q,j\le e} .
\end{equation}
But the polynomials in $\fbar$ are no longer homogeneous. 
It is easy to verify that, in general, $\kappaff(\fbar)=\infty$, since the $f_i-t_j\|f_i\|$, for different $j$, intersect tangentially at infinity whenever $f_i$ is non-linear. Thus it is hopeless 
to pass $\fbar$ as input to the algorithm described and analyzed in~\cite{BCTC1}. 

A closer look to this analysis reveals however, that its main ideas can be 
reproduced in our situation. The first stepping stone in~\cite{BCTC1} towards 
the algorithm's design is the following result (Theorem~2.3 there). 

\begin{theo}[Basic Homotopy Witness Theorem I]\label{thm:NSW}
Let $f\in\Hd[q]$ and $\phi$ be a purely conjunctive lax formula over $f$.
Moreover, let $\mcX\subseteq\bbS^n$ be a closed subset 
and $\varepsilon>0$ be such that
$$
   3 d_H\big(\mcX, \Ap(f,\phi)\big) 
   < \varepsilon <\frac{1}{14 D^{\frac32}\kappabar(f)}.
$$ 
Then the inclusion $\Ap(f,\phi) \hookrightarrow \mcU(\mcX,\varepsilon)$ 
induces a homotopy equivalence.\eproof
\end{theo}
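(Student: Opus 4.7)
The plan is to construct a homotopy inverse by nearest-point projection onto $\Ap(f,\phi)$. The two hypotheses on $\varepsilon$ play complementary roles: the upper bound $\varepsilon<1/(14 D^{3/2}\kappabar(f))$ controls the \emph{reach} of $\Ap(f,\phi)$ on $\bbS^n$, while the Hausdorff bound $3\,d_H(\mcX,\Ap(f,\phi))<\varepsilon$ sandwiches $\mcU(\mcX,\varepsilon)$ strictly between $\Ap(f,\phi)$ and the tubular neighborhood where the nearest-point map is well-defined.

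The core analytic step is a \emph{quantitative reach estimate}: for any well-posed $f\in\Hd[q]$ and any purely conjunctive lax $\phi$, the spherical nearest-point projection $\pi$ onto $\Ap(f,\phi)$ is well-defined and continuous on the geodesic tube of radius $r = 1/(c\,D^{3/2}\kappabar(f))$ for an absolute constant $c$ compatible with the $14$ in the statement. This is the stratified analogue of the Niyogi--Smale--Weinberger reach bound. The condition number enters twice: first to force that at each stratum (where some of the constraints are active as equalities or as tight inequalities) the gradients of the active $f_i$ are quantitatively linearly independent, so the normal cone is simplicial with controlled aperture; and second to yield a $D^{3/2}$-Lipschitz estimate on the unit normals of each hypersurface $\{f_i=0\}$, the factor $D^{3/2}$ arising by combining Bombieri--Weyl bounds on $\|Df\|$ and $\|D^2 f\|$ with the spherical curvature correction.

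Granting this reach estimate, the rest is elementary spherical geometry. The triangle inequality together with $d_H(\mcX,\Ap(f,\phi))<\varepsilon/3$ gives the sandwich
\[
\Ap(f,\phi)\ \subseteq\ \mcU(\mcX,\varepsilon)\ \subseteq\ \mcU\!\bigl(\Ap(f,\phi),\tfrac{4}{3}\varepsilon\bigr),
\]
and the hypothesis on $\varepsilon$ gives $\tfrac{4}{3}\varepsilon < r$, so $\mcU(\mcX,\varepsilon)$ sits inside the reach tube. Then $\pi$ restricts to a continuous retraction of $\mcU(\mcX,\varepsilon)$ onto $\Ap(f,\phi)$, and the geodesic homotopy sending $y$ to $\pi(y)$ along the unique minimizing great-circle arc stays inside the reach tube. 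A routine verification (the distance to $\mcX$ along such an arc is bounded by the maximum of $d(y,\mcX)$ and $d(\pi(y),\mcX)$, both smaller than $\varepsilon$) shows the homotopy stays inside $\mcU(\mcX,\varepsilon)$, so the inclusion $\Ap(f,\phi)\hookrightarrow\mcU(\mcX,\varepsilon)$ is a deformation retract, hence a homotopy equivalence.

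The main obstacle is the reach estimate in the \emph{lax} setting. For purely equality systems $\bigwedge f_i=0$ the estimate follows from the implicit function theorem combined with the transversality bound encoded in $\kappabar(f)$. Once $\phi$ contains lax inequalities, however, $\Ap(f,\phi)$ becomes a Whitney-stratified subset of $\bbS^n$ with corners, and one must ensure that the tubes around the various strata glue together without collision and that the projection to the nearest stratum is single-valued and Lipschitz. Extracting a uniform angle bound between the gradients of simultaneously active constraints from the definition of $\kappabar(f)$, and upgrading it to a Lipschitz control of $\pi$ across strata, is the delicate part of the argument; it is also what fixes both the explicit numerical constant $14$ and the exponent $3/2$ of $D$ in the admissible range for $\varepsilon$.
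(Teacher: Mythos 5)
Your high-level plan — reduce to a reach estimate for $\Ap(f,\phi)$ in terms of $\kappabar(f)$ via Smale's $\gamma$-invariant and the Higher Derivative Estimate, then invoke a Niyogi--Smale--Weinberger type deformation retraction — does match the route taken in \cite{BCTC1} (and reproduced in this paper's \S\ref{sec:NSW} for the generalization). But two things are off.

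First, the step you call a ``routine verification'' is not routine and, as stated, is false. You claim that along the minimizing geodesic from $y$ to $\pi(y)$ the distance to $\mcX$ is bounded by $\max\{d(y,\mcX),\,d(\pi(y),\mcX)\}$. The function $z\mapsto d(z,\mcX)$ is an infimum of distance-to-a-point functions and is not quasi-convex along geodesics; a point $z$ in the interior of the arc can be far from every element of $\mcX$ even when the endpoints are close to different elements. Triangle-inequality estimates give only $d(z,\mcX)<\tfrac{4}{3}\varepsilon + \tfrac{1}{3}\varepsilon=\tfrac{5}{3}\varepsilon$, which is insufficient. This is precisely the content that Niyogi--Smale--Weinberger (and its spherical variant, \cite[Theorem 2.8]{BCL17}, which the paper cites as a black box) establish by a genuinely harder argument: one shows that each fiber $\pi^{-1}(q)\cap\mcU(\mcX,\varepsilon)$ of the normal projection is a connected, star-shaped set containing $q$, by comparing the intervals in which the normal ray through $q$ meets the various balls $B(x,\varepsilon)$ and invoking the reach hypothesis to show they overlap. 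That is where the specific numerical constants come from, not from a sandwich-plus-convexity argument.

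Second, your treatment of the ``delicate part'' — the stratified reach — is not how the paper does it, and you leave it unresolved. The paper does not analyze normal cones or cross-stratum Lipschitz control of $\pi$ directly. Instead it reduces to the smooth case: it uses \cite[Corollary 2.6]{BCL17} to bound $\tau(\Ap(f,\phi))$ below by the minimum of the reaches of the smooth hypersurface intersections $W_{L}=\bigcap_{i\in L}\Ap(f_i=0)$ over subsets $L$ of active constraints, then applies \cite[Theorem 2.11]{BCL17} to get $\tau(W_L,p)\geq 1/(14\,\gamma(f^L_{\bbS},p))$, and finally the Higher Derivative Estimate and \cite[Proposition 3.6]{BCTC1} convert $\gamma$ to $\kappabar(f)$. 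The $D^{3/2}$ comes from the Higher Derivative Estimate $\gamma(f,x)\le\tfrac12 D^{3/2}\mu(f,x)$, not from ``Bombieri--Weyl bounds plus a spherical curvature correction,'' and the constant $14$ traces directly to \cite[Theorem 2.11]{BCL17}. So your proposal identifies the right ingredients at the coarsest level but substitutes an incorrect lemma for the NSW fiber argument and a different (and undelivered) mechanism for the stratified reach estimate.

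Finally, note that in this paper Theorem~\ref{thm:NSW} is not reproved; it is imported from \cite{BCTC1} (that is why it is stated with a terminal QED). What is proved here is the generalization, Theorem~\ref{theo:homotopywitness}, whose proof exhibits the chain of reductions sketched above.
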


In this statement, $d_H$ denotes the {\em Hausdorff distance} between 
two nonempty compact sets $W,V\subseteq\bbR^{n+1}$ which, we recall, 
is given by
$$
   d_H(W,V):=\max\Big\{\max_{v\in V} d(W,v),
   \max_{w\in W} d(w,V)\Big\}
$$
where $d$ denotes Euclidean distance in $\bbR^{n+1}$. 
If either $V$ or $W$ is empty then one takes $d_H(V,W):=\infty$, unless both are empty,
in which case $d_H(V,W):=0$. Also, 
\begin{equation}\label{eq:U}
\mcU(\mcX,r):=\bigcup_{x\in\mcX} B(x,r)
\end{equation} 
denotes the {\em open $r$-neighborhood of $\mcX$} in $\bbR^{n+1}$ 
and $B(x,r)$ the Euclidean open ball 
with center $x$ and radius~$r$. 

Theorem~\ref{thm:NSW}  goes back to a 
result by Niyogi, Smale and Weinberger~\cite{NiSmWe2008} proving the statement 
for manifolds and expressing the bound on the right-hand side in terms of the 
reach of the manifold. For submanifolds $\mcM$ of $\bbS^n$ given as the zero set of a 
tuple $f$ of homogeneous polynomials, this reach was bounded 
in~\cite{CKS16} in  terms 
of $\max_{x\in\mcM}\gamma(f,x)$ where $\gamma(f,x)$ is the $\gamma$-invariant 
introduced by Smale~\cite{Smale86}. This result was given a much simpler 
proof in~\cite[Theorem~2.11]{BCL17}. The Higher Derivative 
Estimate~\cite{Bez4,Condition} further allowed one to replace $\gamma$ by 
the condition number. 

In Section~\ref{sec:gamma} we will take advantage of this sequence of results. 
Basically, we will show that the $\gamma$ at points of algebraic sets defined using 
polynomials from $\fbar$ can be bounded in terms of the $\gamma$ for $f$ instead 
and, hence, ultimately in terms of $\kappabar(f)$. We will then derive 
an extension of Theorem~\ref{thm:NSW} appropriate to our setting. 

It is convenient to introduce some more terminology. 

\begin{defi}\label{def:laxF} 
Let $f\in\Hd[q]$ and $t\in\bbR^e$. 
We call a \emph{lax formula} $\Phi$ over $(f,t)$ 
a monotone formula,  
whose atoms are of the form $(f_i\geq t_j\|f_i\|)$ or $(f_i\leq t_j\|f_i\|)$, 
with $i\le q$ and $j\le e$ and 
denote by $\Ap(f,t,\Phi)$ the semialgebraic subset of the sphere~$\bbS^n$ 
described by these lax inequalities. 
\end{defi}

\begin{remark}
The set $\Ap(f,t,\Phi)$ is defined in terms of sign conditions on the polynomials 
in $\fbar$ (see~\eqref{eq:def-fbar}), which are not homogeneous.
Every intersection of Gabrielov-Vorobjov blocks 
$\gv_{\delta_j,\varepsilon_j}(f,\phi)$ is of the form $\Ap(f,t,\phi)$ for a purely conjunctive formula $\phi$. 
Moreover, the formula $\Phibar$ defined right before~\eqref{eq:def-fbar}
can be viewed as a formula over $(f,t)$ such that
$\gv(f,\Phi)=\Ap(f,t,\Phibar)$, by the construction of the
Gabrielov-Vorobjov approximations.
\end{remark}

\begin{theo}[Generalized Basic Homotopy 
Theorem]\label{theo:homotopywitness}
Let $f\in\Hd[q]$, $T>0$ be such that $\sqrt{2}\,\kappabar(f)T<1$, $t\in(-T,T)^e$, and 
$\phi$ a purely conjunctive lax formula over $(f,t)$.
Moreover, let $\mcX\subseteq\bbS^n$ be a closed subset 
and $\varepsilon>0$ be such that
$$
   3 d_H\big(\mcX, \Ap(f,t,\phi)\big) 
   < \varepsilon <\frac{1}{48 D^{\frac32}\kappabar(f)}.
$$ 
Then the inclusion $\Ap(f,t,\phi) \hookrightarrow \mcU(\mcX,\varepsilon)$ 
is a homotopy equivalence.
\end{theo}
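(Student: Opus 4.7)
The plan is to imitate the proof of Theorem~\ref{thm:NSW} but applied to the inhomogeneous tuple $\fbar$ from~\eqref{eq:def-fbar}. Since the atoms of a lax formula $\phi$ over $(f,t)$ are exactly sign conditions on entries of $\fbar$, the set $\Ap(f,t,\phi)$ coincides with the closed spherical semialgebraic set cut out by $\phi$ read as a purely conjunctive lax formula over $\fbar$, and its boundary strata are pieces of the real algebraic hypersurfaces $\{x\in\bbS^n:f_i(x)=t_j\|f_i\|\}$. The purely conjunctive hypothesis is used exactly as in the proof of Theorem~\ref{thm:NSW}, to write $\Ap(f,t,\phi)$ as an intersection of basic closed pieces and to ensure the relevant local geometry is that of a stratified manifold with boundary.

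The key ingredient, to be proved in Section~\ref{sec:gamma}, is a bound of the form
\[
\gamma(\fbar,x)\ \leq\ C\,D^{1/2}\,\kappabar(f)
\]
for every $x\in\bbS^n$ near $\Ap(f,t,\phi)$, valid under the hypothesis $\sqrt{2}\kappabar(f)T<1$. The subtlety is that $\fbar$ is no longer homogeneous, so one cannot appeal to homogeneity as in~\cite{BCL17}. However, the higher-order derivatives of $f_i-t_j\|f_i\|$ coincide with those of $f_i$, since only the constant term is shifted, so the numerator of $\gamma$ is unchanged. The denominator, the smallest singular value of the Jacobian of the active sub-tuple restricted to $T_x\bbS^n$, is controlled by a perturbation argument starting from the homogeneous case: for any $x\in\bbS^n$ with $|f_i(x)|\leq T\|f_i\|$, the restricted gradient of $f_i$ at $x$ is close, in both norm and direction, to its value at a nearby zero of $f_i$ on $\bbS^n$, provided $T$ is small enough relative to $1/\kappabar(f)$. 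This is exactly where the hypothesis $\sqrt{2}\kappabar(f)T<1$ is used.

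Once the $\gamma$-bound is in place, the remainder follows the pattern of the proof of Theorem~\ref{thm:NSW} (see~\cite[Theorem~2.11]{BCL17}). Combined with the Higher Derivative Estimate of~\cite{Bez4,Condition}, the $\gamma$-bound implies that the reach of $\Ap(f,t,\phi)$ inside $\bbS^n$ is at least $c/(D^{3/2}\kappabar(f))$ for an absolute constant $c$. Once $\varepsilon$ is below this reach, the Niyogi--Smale--Weinberger machinery~\cite{NiSmWe2008}, in its stratified version from~\cite{BCL17}, yields a deformation retract of $\mcU(\Ap(f,t,\phi),\varepsilon)$ onto $\Ap(f,t,\phi)$. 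The Hausdorff hypothesis $3\,d_H(\mcX,\Ap(f,t,\phi))<\varepsilon$ gives the sandwich
\[
\Ap(f,t,\phi)\ \subseteq\ \mcU(\mcX,\varepsilon)\ \subseteq\ \mcU\bigl(\Ap(f,t,\phi),\tfrac{4}{3}\varepsilon\bigr),
\]
and composing this with the deformation retract yields the desired homotopy equivalence of the inclusion $\Ap(f,t,\phi)\hookrightarrow\mcU(\mcX,\varepsilon)$. The constant $48$ in the statement (versus $14$ in Theorem~\ref{thm:NSW}) is calibrated to absorb the constant $C$ lost in the $\gamma$-passage.

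The principal obstacle is the $\gamma$-bound for $\fbar$. Everything else in the proof is a rerun of the arguments in~\cite{BCTC1}, but this bound is genuinely new: every prior $\gamma$-estimate in the Smale / B\"urgisser--Cucker program crucially uses homogeneity, while here the translates $f_i-t_j\|f_i\|$ are not invariant under scaling along the sphere. The argument must carefully track how the non-homogeneous shift affects the smallest singular value of the Jacobian along the active constraints and exploit the smallness of $T$ so that the perturbed level sets $\{f_i=t_j\|f_i\|\}\cap\bbS^n$ remain uniformly close, in a derivative-sensitive sense, to $\{f_i=0\}\cap\bbS^n$; only then do the Smale-type derivative estimates transfer with the bounded multiplicative loss absorbed into $C$.
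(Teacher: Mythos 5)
Your overall strategy matches the paper's: reduce to a reach estimate for $\Ap(f,t,\phi)$, pass to the constituent zero sets, and bound the reach via a Smale gamma estimate that exploits the fact that constant shifts do not change higher derivatives. However, the key lemma---the gamma bound---is handled in a way that is genuinely different from the paper's, and as sketched it has a gap.

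The issue is the transition from the spherical picture to a quantity that actually controls the reach. The reach estimate from~\cite[Theorem~2.11]{BCL17} applies to zero sets in $\bbR^{n+1}$ and requires the \emph{Euclidean} gamma $\affgamma$ of the full system including the sphere equation, i.e.\ $\affgamma\bigl((f^L-\delta^{L,\alpha},\,\sum_{i=0}^n X_i^2-1),\,p\bigr)$. Your proposal instead bounds the projective gamma $\gamma(\fbar,x)$ of the restricted system on $T_x\bbS^n$, and it is silent on how that quantity feeds into a reach bound for a submanifold of $\bbS^n$ cut out by inhomogeneous equations. Bridging that gap is precisely the content of the paper's Proposition~\ref{prop:boundgamma}: one augments the tuple with the sphere constraint, notes that $\affgamma$ of the shifted system equals $\affgamma$ of the original homogeneous one (since \emph{all} derivatives of order $\geq 1$, including the Jacobian, agree---no ``perturbation argument'' on the denominator is needed at all), and then inverts the augmented Jacobian explicitly. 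Euler's identity $\diffa_x f(x)=\Delta^2 f(x)$ is what brings the value $\|f(x)\|/\|f\|$ into the pseudoinverse, and \emph{that} is exactly where the hypothesis $\sqrt{2}\,\kappabar(f)T<1$ enters. Your proposed route---comparing the restricted gradient at $x$ to its value at a nearby zero of $f_i$---is not what is needed and would require separate nontrivial justification (a priori there need be no nearby zero, nor a quantitative derivative-comparison along the sphere). You also state the target bound as $\gamma(\fbar,x)\leq C\,D^{1/2}\kappabar(f)$; the exponent should be $D^{3/2}$, as follows from the Higher Derivative Estimate. The rest of your outline (reduction to reach, Niyogi--Smale--Weinberger-type machinery, closing via the Hausdorff hypothesis) is consistent with the paper, which invokes \cite[Theorem~2.8]{BCL17} directly rather than spelling out the sandwich.
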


We prove this result in~\S\ref{sec:NSW}. 

\subsection{Point clouds and cell complexes}

For a finite set $\mcX\subseteq \bbR^{n+1}$, we compute the homology groups 
of 
$\mcU(\mcX,\varepsilon)$ by computing those of the 
{\em \v{C}ech complex} 
$\cech{\varepsilon}{\mcX}$ which is the simplicial complex whose $k$-faces
are the sets of $k+1$ points $\{x_0,\ldots,x_k\}\subseteq\mcX$ such that
$\cap_{i\leq k} B(x_i,\varepsilon)\neq \varnothing$. The justification of this is the 
Nerve Theorem~\cite[Corollary~4G.3]{hatcher} that guarantees that these two 
spaces are homotopically equivalent.

Theorem~\ref{theo:homotopywitness} opens the way to find a good pair $(\mcX,\varepsilon)$ 
for basic sets. For non-basic sets, one needs a variation, 
where the simplicial complex is constructed in a more complicated way. We recall 
this construction. 
For given $f\in\Hd[q]$ and $t\in\bbR^e$ consider
the $2qe$ {\em atomic} sets 
\begin{align}\label{eq:atomic}
   S_{i,j}^\leq &:=\Ap(f_i-t_j\|f_i\|\leq 0),\\ 
   S_{i,j}^\geq &:=\Ap(f_i-t_j\|f_i\|\geq 0),\nonumber
\end{align}
for $i\le q$ and $j\le e$. Furthermore, 
also for $i\le q$ and $j\le e$, let  
$\mcX_{i,j}^{\leq}$ and $\mcX_{i,j}^{\geq}$ be finite subsets of $\bbS^n$.
Given a lax formula $\Psi$ over $(f,t)$, we 
can abuse notation and write $\Phi(\mcX_{i,j}^{\propto})$
to denote the finite set recursively built from the 
sets $\mcX_{i,j}^{\propto}$ in the same way $\Psi$  
is built from the expressions $(f_i\propto t_j\|f_i\|)$, taking unions 
when we find $\vee$ and intersections when we find $\wedge$.
As unions and intersections of the simplicial complexes
$\cech{\varepsilon}{\mcX_{i,j}^{\propto_j}}$ are well defined, 
we can further abuse notation to denote by 
\[
   \Psi\left(\cech{\varepsilon}{\mcX_{i,j}^\propto}\mid i\in[q],\,j\in[e],\, \propto\in\{\leq,\geq\}\!\right)\subseteq \Delta^{\cup_{i,j,\propto}\mcX_{i,j}^{\propto}},
\]
the simplicial complex recursively built from  
the $\cech{\varepsilon}{\mcX_{i,j}^{\propto_j}}$. This is 
explained with more details in~\cite[\S2.4]{BCTC1}. 
Applying $\Psi$ to a collection of point clouds or to one 
of simplicial complexes is an abuse of notation but it allows us 
to clearly state the following extension of Theorem~\ref{theo:homotopywitness}. 

\begin{theo}[Generalized Homology Witness Theorem]\label{theo:homologywitness} 
Let $f\in\Hd[q]$, $\sfK$ be as in Proposition~\ref{prop:kappa-est}, and $t\in\bbR^{4m}$ as in~\eqref{eq:seq_t}. 
Let $\varepsilon>0$,
and assume that for $i\in[q]$ and $j\in[4m]$, 
$\mcX_{i,j}^\leq$ and $\mcX_{i,j}^\geq$ are finite subsets of $\bbS^n$
such that for all purely conjunctive lax formulas $\phi$ over $(f,t)$, we have 
\begin{equation}\label{eq:hypoGHW}
   3 d_H\left(\phi\left(\mcX_{i,j}^{\propto}\mid i\in[q],\,j\in[4m],\,\propto\in\{\leq,\geq\}\right), 
   \Ap(f,t,\phi)\right) 
   \ < \ \varepsilon \ <\ \frac{1}{180(2m+1)D^{5/2}\sfK^2}. 
\end{equation}
Then, for all strict formulas $\Phi$ over $f$, 
the Gabrielov-Vorobjov approximation $\gv(f,\Phi)$ defined in~\eqref{eq:def-GB}
and the simplicial complex 
$$
 \fkC =\Phibar\Big(\cech{\varepsilon}{\mcX_{i,j}^\propto}\mid i\in[q],\,j\in[4m],\, \propto\in\{\leq,\geq\}\!\Big)
$$ 
have the same homology.
\end{theo}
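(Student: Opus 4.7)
The plan is to reduce the statement to Theorem~\ref{theo:homotopywitness} via a Mayer--Vietoris glueing argument along the Boolean structure of~$\Phibar$. Since $\gv(f,\Phi)=\Ap(f,t,\Phibar)$ and $\Phibar$ is a monotone lax formula over $(f,t)$, rewriting $\Phibar$ in disjunctive normal form gives $\gv(f,\Phi)=\bigcup_k \Ap(f,t,\phi_k)$ where each $\phi_k$ is a purely conjunctive lax formula. In parallel, the simplicial complex $\fkC$ decomposes, according to the same Boolean structure, as a combination of the atomic \v{C}ech complexes $\cech{\varepsilon}{\mcX_{i,j}^\propto}$.

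First I would verify that Theorem~\ref{theo:homotopywitness} applies to every basic piece~$\phi_k$. Taking $T:=\delta_m=\tfrac{2m}{15(2m+1)D^2\sfK^2}$ as an upper bound on $|t_j|$, the relations $\sfK\geq 0.99\,\kappabar(f)$, $\kappabar(f)\geq 1$, $D\geq 1$ and $m\geq 1$ yield both $\sqrt{2}\,\kappabar(f)T<1$ and the inequality $\tfrac{1}{180(2m+1)D^{5/2}\sfK^2}\leq\tfrac{1}{48 D^{3/2}\kappabar(f)}$; combined with the distance hypothesis~\eqref{eq:hypoGHW} applied to each $\phi_k$, this puts us in the hypotheses of Theorem~\ref{theo:homotopywitness}. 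We thus obtain homotopy equivalences $\Ap(f,t,\phi_k)\hookrightarrow\mcU(\phi_k(\mcX_{i,j}^\propto),\varepsilon)$. Invoking the Nerve Theorem together with the elementary identity $\cech{\varepsilon}{A\cap B}=\cech{\varepsilon}{A}\cap\cech{\varepsilon}{B}$ for finite point sets (which applies here because $\phi_k$ is purely conjunctive), the homology of each basic spherical piece is identified with that of the corresponding basic piece of~$\fkC$.

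Finally, I would glue the pieces by induction on the size of~$\Phibar$ via the Mayer--Vietoris sequence. At each disjunction $\Phibar_1\vee\Phibar_2$, the pair $(\Ap(f,t,\Phibar_1),\Ap(f,t,\Phibar_2))$ is an excisive closed cover of the union, with a completely analogous cover on the simplicial side; the intersection $\Phibar_1\wedge\Phibar_2$ distributes into a disjunction of purely conjunctive formulas, so the induction hypothesis applies to it. The Five Lemma applied to the Mayer--Vietoris ladder built from the homotopy equivalences of the previous step then propagates the homology isomorphism from the basic level up to $\gv(f,\Phi)$ and~$\fkC$. The main obstacle will be the combinatorial bookkeeping ensuring that the piecewise homotopy equivalences assemble into commutative ladder squares linking the set-theoretic unions and intersections on the spherical side with the corresponding simplicial-complex operations defining~$\fkC$; this is exactly the template developed in~\cite[\S2.4]{BCTC1} for lax monotone formulas, with Theorem~\ref{theo:homotopywitness} here replacing Theorem~\ref{thm:NSW} as the source of pointwise homotopy equivalences, so once the hypotheses above have been verified the remainder of the argument transfers essentially unchanged.
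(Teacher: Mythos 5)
Your plan agrees with the paper's in spirit — both reduce to the basic (purely conjunctive) case and then glue along the Boolean structure, citing the \S2.4 template of~\cite{BCTC1}, and the arithmetic you do to check the hypotheses of Theorem~\ref{theo:homotopywitness} is essentially correct. But there is a genuine gap in the gluing step, and it is precisely the gap that the paper's machinery (which you omit) is designed to fill.

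The Mayer--Vietoris ladder you sketch requires a compatible comparison at the level of intersections. On the simplicial side you correctly use $\cech{\varepsilon}{A}\cap\cech{\varepsilon}{B}=\cech{\varepsilon}{A\cap B}$, which, after the Nerve Theorem, lands you in $\mcU\big((\phi_1\wedge\phi_2)(\mcX),\varepsilon\big)$. But on the metric side the Mayer--Vietoris sequence involves $\mcU\big(\phi_1(\mcX),\varepsilon\big)\cap\mcU\big(\phi_2(\mcX),\varepsilon\big)$, which strictly contains $\mcU\big((\phi_1\wedge\phi_2)(\mcX),\varepsilon\big)$ in general. So the vertical arrows in your ladder do not line up as stated. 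This is exactly where the algebraic neighborhoods come in: unlike metric neighborhoods, they commute with conjunction, $\Ap_r^\circ(f,t,\phi_1)\cap\Ap_r^\circ(f,t,\phi_2)=\Ap_r^\circ(f,t,\phi_1\wedge\phi_2)$, and Proposition~\ref{prop:exclusion} together with the sampling hypothesis sandwiches
\[
\Ap(f,t,\phi)\;\subseteq\;\mcU\big(\phi(\mcX),\varepsilon\big)\;\subseteq\;\Ap_r^\circ(f,t,\phi)
\]
for a suitable $r$. The Generalized Quantitative Durfee Theorem~\ref{theo:durfee} then makes the outer inclusion a homotopy equivalence, which squeezes all the intermediate inclusions into equivalences as well and restores the commuting ladder. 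Your proposal never invokes Theorem~\ref{theo:durfee} or Proposition~\ref{prop:exclusion}, so the gluing is not justified.

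Relatedly, to make these two results apply you must verify a hypothesis you never mention: the separation condition on $t$, namely $\delta(t)>2r$ (equivalently, via the paper's choice of $r$, that $\varepsilon<\delta(t)/(12D^{1/2})$). This is what guarantees that algebraic neighborhoods of atoms built from the same $f_i$ but different thresholds do not collide, i.e.\ that $\Ap_r(f,t,(f_i\ge t_j\|f_i\|)\wedge(f_i\le t_{j'}\|f_i\|)\wedge\psi)$ is empty when $t_{j'}<t_j$. This is exactly the second term in the $\min$ in the paper's Theorem~\ref{theo:homologywitnessgen}, and the paper's proof of the present statement devotes a displayed inequality to checking it. Your proposal checks only the $1/(48D^{3/2}\kappabar(f))$ bound and $\sqrt2\,\kappabar(f)T<1$, which is necessary but not sufficient. (A minor additional point: taking $T:=\delta_m$ gives $|t_j|=T$ for the extreme entries, so strictly $t\notin(-T,T)^e$; the paper's choice $T=(2m+1)\delta(t)=1/(15D^2\sfK^2)$ avoids this and also simplifies the $\delta(t)$ bookkeeping.) The cleanest fix is to do what the paper does: prove a parametrized general statement with the explicit $\min\{1/(48D^{3/2}\kappabar(f)),\,\delta(t)/(12D^{1/2})\}$ hypothesis, then specialize to the Gabrielov--Vorobjov data by checking both inequalities.
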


We prove this result in~\S\ref{sec:HWT}.

To make use of Theorem~\ref{theo:homologywitness} we need 
to construct finite sets 
$\mcX_{i,j}^\leq$ and $\mcX_{i,j}^\geq$ satisfying~\eqref{eq:hypoGHW}.
As in~\cite{BCTC1}, to do this, we will rely on the notion of an algebraic neighborhood 
(see \S\ref{se:sample-rel}), 
now adapted to our context, and on the use of grids. 

For $\ell\in\bbN$ one can construct a grid $\mcG_\ell\subseteq\bbS^n$ such that 
$$
   |\mcG_\ell|\leq (n2^\ell)^{\Oh(n)}
   \qquad\mbox{ and }\qquad
   \bbS^n\subseteq\cup_{x\in\mcG_\ell}B_{\bbS}(x,r_\ell)\subset
   \cup_{x\in\mcG_\ell}B(x,r_\ell) ,
$$
where $r_\ell:=2^{-\ell}$ (see~\cite[\S6.1]{BCTC1}) and 
$B_{\bbS}(x,r_\ell)$ is the ball centered at $x$ of radius $r_\ell$ 
with respect to the Riemannian distance on the sphere. 

Given $f\in\Hd[q]$, $t\in\bbR^{e}$, 
for each $i\in[q]$, $j\in[e]$, 
we consider the set $\mcX_{i,j}^{\ge}$ of grid points~$x$ in $\mcG_\ell$ satisfying 
$$
  f_i(x) > (t_j + D^{\frac12}r_\ell) \|f_i\| ,
$$
which means that the inequality $f_i > t_j \|f_i\|$ holds true with the tolerance $ D^{\frac12}r_\ell$. 
Similarly, we define the set of of grid points $\mcX_{i,j}^{\le}$. 

In~\S\ref{sec:samp} we prove the following result (generalizing Theorem~6.5 in~\cite{BCL17}). 

\begin{theo}[Generalized Sampling Theorem]\label{thm:samp}
Let $f\in\Hd[q]$, $\sfK$ be as in Proposition~\ref{prop:kappa-est},
and $t\in\bbR^{4m}$ as in~\eqref{eq:seq_t}. 
Moreover, let $\ell\in\bbN$ and $\rho\in(0,1]$ be chosen such that 
$r_\ell\leq \frac{\rho}{1620(2m+1)D^{3}\sfK^3}$.
Then, for all purely conjunctive lax formulas $\phi$ over $(f,t)$, we have 
$$
   3 d_H\left(\phi\left(\mcX_{i,j}^{\propto}\mid i\in[q],\,j\in[4m],\,\propto\in\{\leq,\geq\}\right), 
   \Ap(f,t,\phi)\right) 
   \ \leq\ \frac{\rho}{180(2m+1)D^{5/2}\sfK^2}.
$$
\end{theo}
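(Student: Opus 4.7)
The proof naturally splits along the two maxima defining the Hausdorff distance. In one direction, the inclusion $\phi(\mcX_{i,j}^\propto)\subseteq\Ap(f,t,\phi)$ is built into the definitions: any $x\in\mcX_{i,j}^{\geq}$ satisfies the strict inequality $f_i(x)>(t_j+D^{1/2}r_\ell)\|f_i\|$, hence a fortiori the lax inequality $f_i(x)\geq t_j\|f_i\|$, and similarly for $\mcX_{i,j}^{\leq}$. Since $\phi$ is purely conjunctive, a point of $\phi(\mcX_{i,j}^\propto)$ satisfies every atom of $\phi$ simultaneously and therefore lies in $\Ap(f,t,\phi)$, contributing $0$ to this side of the distance.

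The reverse direction is where the work lies. Given $y\in\Ap(f,t,\phi)$, the plan is to produce a grid point $x\in\phi(\mcX_{i,j}^\propto)$ with $d_\bbS(x,y)$ of order $D^{1/2}r_\ell\,\sfK$, in two steps. First, I would construct an auxiliary point $\tilde y\in\bbS^n$ close to $y$ at which every atom of $\phi$ is satisfied with a uniform safety margin of $2D^{1/2}r_\ell$: that is, $f_i(\tilde y)/\|f_i\|\geq t_j+2D^{1/2}r_\ell$ for each $\geq$-atom of $\phi$, and symmetrically for each $\leq$-atom. Second, I would take any grid point $x\in\mcG_\ell$ with $d_\bbS(x,\tilde y)\leq r_\ell$; the standard Weyl Lipschitz bound $|f_i(x)-f_i(\tilde y)|\leq d_i^{1/2}\|f_i\|\,d_\bbS(x,\tilde y)\leq D^{1/2}r_\ell\|f_i\|$ consumes exactly half of the safety margin, placing $x$ into $\mcX_{i,j}^\propto$ for every atom of $\phi$ and hence, by the recursive definition, into $\phi(\mcX_{i,j}^\propto)$. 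The triangle inequality gives $d_\bbS(x,y)\leq r_\ell+d_\bbS(\tilde y,y)$.

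The technical heart is the construction of $\tilde y$. For each polynomial index $i$ appearing in $\phi$, the atoms of $\phi$ involving $f_i$ reduce to a single interval constraint $a_i\leq f_i(y)/\|f_i\|\leq b_i$ with at most $2m+1$ active thresholds, and the strengthened target interval $[a_i+2D^{1/2}r_\ell,\,b_i-2D^{1/2}r_\ell]$ is nonempty because the thresholds in~\eqref{eq:epds} are spaced by at least $(15(2m+1)D^2\sfK^2)^{-1}$ and $r_\ell$ is small enough by hypothesis. I would then move $y$ along a direction built from a suitable combination of the normalized spherical gradients of those $f_i$ whose current values do not yet achieve the required margin. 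The hypothesis $\sqrt 2\,\kappabar(f)T<1$, combined with the Higher Derivative Estimate used in~\cite{BCTC1}, keeps each such gradient uniformly bounded below by $\|f_i\|/O(\kappabar(f))$ on the region of interest, so a normalized improvement of size $O(D^{1/2}r_\ell)$ is realized by a spherical displacement of order $D^{1/2}r_\ell\,\sfK$.

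The main obstacle will be executing this improvement simultaneously for all relevant atoms rather than one at a time, while keeping the total spherical displacement linear in $r_\ell$ and independent of the specific formula $\phi$. I would handle this by viewing the map $y\mapsto(f_i(y)/\|f_i\|)_{i\leq q}$ as a local submersion on a neighborhood of $y$—nondegenerate by the condition-number hypothesis—and using its quantitative openness to hit the product of strengthened intervals in one shot. Carrying this out with explicit constants, and verifying the resulting bound $d_\bbS(x,y)\leq r_\ell+O(D^{1/2}r_\ell\,\sfK)$, accounts after multiplication by $3$ for the right-hand side $\rho/(180(2m+1)D^{5/2}\sfK^2)$ under the hypothesis $r_\ell\leq\rho/(1620(2m+1)D^3\sfK^3)$.
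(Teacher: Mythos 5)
Your proposal takes a genuinely different route from the paper, and that route has a real gap in its central step. The paper proves Theorem~\ref{thm:samp} as a pure parameter check: it sets $T=\bigl(2m+1-\tfrac{1}{108D\sfK}\bigr)\delta(t)$, verifies the two hypotheses $13D^2\kappabar(f)^2(r_\ell+T)<1$ and $\delta(t)>2D^{1/2}r_\ell$ of Theorem~\ref{theo:sampling}, and applies that theorem with $r=r_\ell$. All the geometric work (grid density, Lipschitz bounds, and the ``quantitative implicit function theorem'' step) is already packaged in Theorem~\ref{theo:sampling}, which in turn rests on Propositions~\ref{prop:exclusion} and~\ref{prop:algebraicvsmetricnhood}. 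You ignore this modular structure and attempt to re-derive the whole sampling estimate from scratch. Your observation that $\phi(\mcX_{i,j}^{\propto})\subseteq\Ap(f,t,\phi)$, killing one side of the Hausdorff distance outright, is a nice simplification not used in the paper's reduction; but the other side, which is where all the difficulty sits, is only sketched.

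Concretely, the construction of $\tilde y$ (a nearby point at which every atom of $\phi$ holds with a safety margin $2D^{1/2}r_\ell$) is precisely the content of Proposition~\ref{prop:algebraicvsmetricnhood}, applied to the threshold vector $t$ shifted inward by $2D^{1/2}r_\ell$ on each active atom: the original $\Ap(f,t,\phi)$ is then an algebraic neighborhood of tolerance $2D^{1/2}r_\ell$ of the ``margin set,'' and the proposition bounds the spherical distance to that set by $3\kappabar(f)\cdot 2D^{1/2}r_\ell$. You describe this as ``viewing $y\mapsto(f_i(y)/\|f_i\|)_{i\le q}$ as a local submersion and using its quantitative openness,'' but this map is not in general a submersion from $\bbS^n$ to $\bbR^q$ (one may well have $q>n$); the actual argument must restrict to index sets $L\subseteq[q]$ with $|L|\le n$ as in the proof of~\cite[Theorem~4.19]{BCL17}, and must verify the hypotheses of the proposition for the shifted $t$ (in particular, that $\delta$ of the shifted vector still exceeds the required lower bound, and that $13D^{3/2}\kappabar(f)^2(\cdot)<1$ still holds). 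None of this is done, so the ``technical heart'' you flag really is a missing proof. Finally, the constants do not obviously close: with a safety margin of $2D^{1/2}r_\ell$ your method gives a displacement of about $6\kappabar(f)D^{1/2}r_\ell$ rather than the paper's $3\kappabar(f)D^{1/2}r_\ell$, and with the stated hypothesis $r_\ell\le\rho/\bigl(1620(2m+1)D^3\sfK^3\bigr)$ the resulting bound on $3\,d_H$ overshoots $\rho/\bigl(180(2m+1)D^{5/2}\sfK^2\bigr)$ by roughly a factor of two. You would either have to shrink the margin (which then conflicts with the strict inequality in the definition of $\mcX_{i,j}^{\propto}$) or sharpen the displacement bound to the level of Proposition~\ref{prop:algebraicvsmetricnhood}; as written, the numerics are not verified.
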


\subsection{The algorithm}

We can finally combine all the previous ideas in an algorithm and prove our 
main result. 

\begin{minipage}[t]{0.9\textwidth}
\begin{algorithm*}[H]
\DontPrintSemicolon
\SetKwInOut{input}{Input}
\SetKwInOut{output}{Output}
\caption{\textsc{Homology}}\label{alg:homology}
\input{$p\in\Pd[q]$\\
Boolean formula $\Psi$ over $p$\\
}
\hrulefill

eliminate negations and lax inequalities in $\Psi$\;
$f \leftarrow \Hm(p)$\;
$\Phi\leftarrow\Psi\hm$\;
$\sfK\leftarrow \mbox{{\sc $\kappabar$-Estimate}}(f)$\;
compute the sequence $\varepsilon_1,\delta_1,\ldots,\varepsilon_m,\delta_m$ in~\eqref{eq:epds}\;
$\ell\leftarrow \lceil\log_2 1638 (2m+1)D^{3} \sfK^3\rceil$\;
$\varepsilon\leftarrow 1/(181\,(2m+1)D^{5/2} \sfK^2)$\;
\For{$i = 1,\ldots,q$ {\bf and} $j=1,\ldots,e$}{
compute $\mcX_{i,j}^{\leq}$ and 
$\fkA_{i,j}^{\leq}\leftarrow \{\sigma\in\cech{\varepsilon}{\mcX_{i,j}^{\leq}}\mid |\sigma|\leq n+2\}$\;
compute $\mcX_{i,j}^{\geq}$ and 
$\fkA_{i,j}^{\geq}\leftarrow \{\sigma\in\cech{\varepsilon}{\mcX_{i,j}^{\geq}}\mid |\sigma|\leq n+2\}$\;
}
$\fkC\leftarrow \Phibar\left(\fkA_{1,1}^{\leq},\fkA_{1,1}^{\geq},\ldots,\fkA_{q,e}^{\leq},\fkA_{q,e}^{\geq}\right)$\;
compute the homology groups $H_*$ of $\fkC$ of order up to $n$\;
\KwRet{$H_*$}\\
\hrulefill\\
\output{Sequence of groups $H_*$}
\postcondition{$H_*$ are the homology groups of $\Ap(p,\Phi)$.}
\end{algorithm*}
\end{minipage}
\medskip

\begin{proof}[Proof of Theorem~\ref{thm:main_result}]
The computation of $H_*$ from $\fkC$ is described with 
details in~\cite[Proposition~4.3]{CKS16}. 

As for the correctness of the algorithm. 
The choice of $\ell$ guarantees that 
\[
r_{\ell}=2^{-\ell}\leq \frac{1}{1638(2m+1)D^3\sfK^3}=\frac{1620/1638}{1620(2m+1)D^3\sfK^3}.
\]
Thus Theorem~\ref{thm:samp} applies 
with $\rho=1620/1638$ and gives the upper bound $\frac{1}{182(2m+1)D^{5/2}\sfK^2}$. 
By Theorem~\ref{theo:homologywitness}, it is enough then to choose $\varepsilon$ inside
\[\left(\frac{1}{182(2m+1)D^{5/2}\sfK^2},\frac{1}{180(2m+1)D^{5/2}\sfK^2}\right).\]
Our choice of~$\varepsilon$ satisfies these bounds.
Hence, we have the homology isomorphisms 
$$
  H_*(\fkC)\simeq H_*(\gv(f,\Phi))\simeq H_*(\Ap(f,\Phi)) 
  \simeq H_*(W(p,\Psi)),
$$
the first by Theorem~\ref{theo:homologywitness}, 
the second by Theorem~\ref{thm:quantitativeGV} 
and our choice of $\varepsilon_1,\delta_1,\ldots,\varepsilon_m,\delta_m$, 
and the last by 
Proposition~\ref{generaltospherical}. We note that we only care about the homology groups up to order $n$, since higher homology groups of a semialgebraic set in $\bbR^n$ vanish.

Finally, the cost analysis is the same as that 
in~\cite[\S7.3]{BCTC1}.
\end{proof}

\section{Quantitative Gabrielov-Vorobjov Theorem}\label{sec:prrofGV}

The objective of this section is to prove the Quantitative Gabrielov-Vorobjov Theorem~\ref{thm:quantitativeGV}.

The idea of the proof is simple, we transform a pair $(\bfdelta,\bfeps)$ 
satisfying~\eqref{GVconditionexplicit} into a pair $(\bfdelta',\bfeps')$
satisfying~\eqref{GVcondition} so that Theorem~\ref{thm:GV} can be applied 
and we show that in doing so, the homotopy type of the associated 
Gabrielov-Vorobjov set remains unchanged. To simplify the argument, we 
proceed by steps, modifying only one component in the pair $(\bfdelta,\bfeps)$ 
at a time. In the next subsection we describe these basic reductions, leading 
us to a statement, Proposition~\ref{thm:reductionGV2}, which implies 
Theorem~\ref{thm:quantitativeGV}. Then, 
in~\S\ref{sec:difTop} we recall some fundamental notions from differential 
topology which we use in~\S\ref{sec:proof_red} to prove 
Proposition~\ref{thm:reductionGV2}.

\subsection{Basic reductions}

We write $(\bfdelta,\bfeps)\leq_{\syd, i}(\bfdelta',\bfeps')$ when
$\bfeps=\bfeps'$, $\delta_j=\delta'_j$ for $j\neq i$, and 
$\delta_i\geq \delta_i'$. Similarly, we write 
$(\bfdelta,\bfeps)\leq_{\sye, i}(\bfdelta',\bfeps')$ when $\bfdelta=\bfdelta'$, 
$\varepsilon_j=\varepsilon'_j$ for $j\neq i$, and 
$\varepsilon_i\leq \varepsilon_i'$. 
Note that the calligraphic index $\syd$ indicates a
difference in a $\delta$ 
(and therefore, in an inequality of the corresponding
Gabrielov-Vorobjov system), 
while a calligraphic~$\sye$ does so for an $\varepsilon$ (and therefore in an equality). 
These relations  
capture the notion of a difference in only one entry of $\bfdelta$ or of $\bfeps$, 
respectively. The choice of the inequality in the $\varepsilon$s and the $\delta$s 
is different. This is done to ensure that if either $(\bfdelta,\bfeps)\leq_{\syd,i}(\bfdelta',\bfeps')$ or $(\bfdelta,\bfeps)\leq_{\sye,i}(\bfdelta',\bfeps')$, then we have, 
for all formulas $\Phi$ over $f$, the inclusion 
\begin{equation}\label{eq:incGV}
   \gv_{\bfdelta,\bfeps}(f,\Phi)\subseteq \gv_{\bfdelta',\bfeps'}(f,\Phi)
\end{equation}
between the corresponding Gabrielov-Vorobjov Approximations. 
We write 
\[
\left(\bfdelta,\bfeps\right)\subs{\syd,i}
\left(\bfdelta',\bfeps'\right)
\] 
to denote that 
$$
\left(\bfdelta,\bfeps\right)\leq_{\syd,i} 
\left(\bfdelta',\bfeps'\right)\quad \mbox{ or }\quad  
\left(\bfdelta',\bfeps'\right)\leq_{\syd,i} 
\left(\bfdelta,\bfeps\right).
$$
This notation is consistent with the meaning of updating 
$\left(\bfdelta,\bfeps\right)$ to 
$\left(\bfdelta',\bfeps'\right)$ by updating (either increasing or 
decreasing) only $\delta_i$ to $\delta_i'$. We similarly define 
$\left(\bfdelta,\bfeps\right)\subs{\sye,i} 
\left(\bfdelta',\bfeps'\right)$. 

The following result states the main property of these rewritings. 
 
\begin{prop}\label{thm:reductionGV1}
Let $f\in\Hd[q]$, $\Phi$ be a strict 
formula over $f$, and 
$\bfdelta, \bfdelta', \bfeps, \bfeps'\in\bbR^m$ be such that both $(\bfdelta,\bfeps)$ and $(\bfdelta',\bfeps')$ satisfy \eqref{GVconditionexplicit}. If either $(\bfdelta,\bfeps)\subs{\syd,i}(\bfdelta',\bfeps')$ or 
$(\bfdelta,\bfeps)\subs{\sye,i}(\bfdelta',\bfeps')$, then the corresponding 
inclusion~\eqref{eq:incGV} of Gabrielov-Vorobjov 
Approximations induces a homotopy equivalence.
\end{prop}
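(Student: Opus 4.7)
The plan is to connect $(\bfdelta,\bfeps)$ and $(\bfdelta',\bfeps')$ by a continuous one-parameter deformation inside the admissible region cut out by~\eqref{GVconditionexplicit}, exhibit the resulting family of Gabrielov--Vorobjov approximations as topologically trivial, and then read off the desired homotopy equivalence of~\eqref{eq:incGV}.

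The cases of an $\syd$- and an $\sye$-comparison are handled analogously, so I would treat only $(\bfdelta,\bfeps)\leq_{\syd,i}(\bfdelta',\bfeps')$: only $\delta_i$ differs, with $\delta_i\ge \delta_i'$. For $s\in[0,1]$ set $\delta_i(s):=(1-s)\delta_i'+s\delta_i$ and leave all other components fixed. The admissible values of $\delta_i$ compatible with~\eqref{GVconditionexplicit} form an open interval with endpoints $\varepsilon_i$ and $\min\{\varepsilon_{i+1},\,1/(\sqrt{2}\kappabar(f))\}$ (taking $\varepsilon_{m+1}:=1/(\sqrt{2}\kappabar(f))$), so every intermediate pair $(\bfdelta(s),\bfeps)$ still satisfies~\eqref{GVconditionexplicit}. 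Writing $Y_s:=\gv_{\bfdelta(s),\bfeps}(f,\Phi)$, I would assemble the total space $\mcY:=\{(x,s)\in\bbS^n\times[0,1]:x\in Y_s\}$ with its projection $\pi:\mcY\to[0,1]$.

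The key geometric input I would use is that $1/(\sqrt{2}\kappabar(f))$ measures the inverse distance from $f$ to the ill-posed locus $\overline{\Sigma}_{\bfd}[q]$: for every $t$ with $|t|<1/(\sqrt{2}\kappabar(f))$ and every index $j$, the hypersurface $\{x\in\bbS^n:f_j(x)=t\|f_j\|\}$ is smooth, and any finite collection of such hypersurfaces (for varying $j$ and $t$) intersects transversally on $\bbS^n$. Consequently $\mcY$ carries a natural Whitney stratification indexed by the active atomic inequalities and their signs, the strata are smooth submanifolds of $\bbS^n\times[0,1]$, and $\pi$ restricts to a proper submersion on each stratum. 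Thom's first isotopy lemma then produces a stratified trivialization of $\pi$, which in turn yields a deformation retraction of the larger set $Y_0=\gv_{\bfdelta',\bfeps'}(f,\Phi)$ onto the smaller set $Y_1=\gv_{\bfdelta,\bfeps}(f,\Phi)$; this retraction is a homotopy inverse to the inclusion in~\eqref{eq:incGV}.

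The main obstacle I anticipate is the uniform Whitney regularity of the stratification of $\mcY$ along $[0,1]$, especially at the deeper strata where several equations $f_{j_k}=t_k\|f_{j_k}\|$ are simultaneously active. This is a quantitative transversality statement, precisely of the kind that the bound on $\kappabar(f)$ is designed to supply; the differential-topological preliminaries collected in \S\ref{sec:difTop} should provide exactly the tools required before the formal verification is carried out in \S\ref{sec:proof_red}.
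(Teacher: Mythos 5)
Your proposal uses the same essential machinery as the paper --- a Whitney stratification of a deformation space plus Thom's First Isotopy Lemma --- but organizes the argument quite differently: you interpolate $\delta_i$ (resp.\ $\varepsilon_i$) linearly all at once, forming a total space $\mcY\subseteq\bbS^n\times[0,1]$ fibered over the deformation parameter, whereas the paper first reduces to changing the constraint involving a single polynomial $f_a$ at a time (via Proposition~\ref{thm:reductionGV2} and the intermediate approximations $\gv^{\syd,i,a}_{\bfdelta,\bfeps,\zeta,\tau}$), and for each such substep applies the isotopy lemma to the map $|f_a|$ on an open window, using the $(f,\bflambda)$-stratification of $\bbS^n$ from Theorem~\ref{thm:flambdapartition}. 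Your decomposition avoids Proposition~\ref{thm:reductionGV2} entirely, which is attractive; but the trade-off is that all the moving constraints are active simultaneously, so the Whitney/transversality verification you rightly flag as the ``main obstacle'' is genuinely heavier than in the paper's substeps, and it also requires a small extension of $\mcY$ past the boundary $s\in\{0,1\}$ to apply the isotopy lemma (possible because both endpoint pairs satisfy~\eqref{GVconditionexplicit} strictly, but unaddressed).

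More seriously, the final deduction as stated is not correct: a stratified trivialization of $\pi:\mcY\to[0,1]$ does \emph{not} immediately give a deformation retraction of $Y_0$ onto $Y_1$. The trivializing flow carries $Y_s$ homeomorphically onto $Y_{s+u}$, so you do get a homotopy $\mathrm{id}_{Y_0}\simeq\iota\circ r$ inside $Y_0$; but the same flow does not preserve $Y_1$ (it sends $Y_1$ into $Y_u\supsetneq Y_1$ for $u<1$), so you have not produced a homotopy $\mathrm{id}_{Y_1}\simeq r\circ\iota$, and $r$ is not exhibited as a homotopy inverse. The conclusion is still true, but needs a different argument, for instance: factor $\iota$ as $Y_1=\pi^{-1}(1)\hookrightarrow\mcY\xrightarrow{p}Y_0$, $p(x,s)=x$. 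The first inclusion is a homotopy equivalence because $[0,1]$ is contractible and $\pi$ is a (locally) trivial fibration; and $p$ is a homotopy equivalence with homotopy inverse the zero-section $x\mapsto(x,0)$, where $\sigma_0\circ p\simeq\mathrm{id}_{\mcY}$ via $(x,s)\mapsto(x,(1-u)s)$, a map that lands in $\mcY$ precisely because the family $Y_s$ is nested. The paper sidesteps this issue by constructing an explicit stratum-preserving retraction $\Psi$ of $\gv_{t_0}$ and using Proposition~\ref{prop:strataunion} to check that $\Psi$ maps $\gv_t$ into $\gv_t$ and $\gv_{t'}$ into $\gv_{t'}$, which is exactly the kind of compatibility your argument silently assumes.
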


Proving Theorem \ref{thm:quantitativeGV} from this proposition is easy.

\begin{proof}[Proof of Theorem \ref{thm:quantitativeGV}]
By the definition of $\ll$, it is clear that there exist at least one $(\tilde{\bfdelta},\tilde{\bfeps})$ satisfying both~\eqref{GVcondition} 
and~\eqref{GVconditionexplicit}. 

For any $(\bfdelta,\bfeps)$ 
satisfying~\eqref{GVconditionexplicit}, we can easily construct 
a sequence $\left(\bfdelta^{(0)},\bfeps^{(0)}\right),\ldots,
\left(\bfdelta^{(\ell)},\bfeps^{(\ell)}\right)$ of pairs 
satisfying~\eqref{GVconditionexplicit} such that
\begin{enumerate}
\item $\left(\bfdelta^{(0)},\bfeps^{(0)}\right)=(\bfdelta,\bfeps)$,
\item $\left(\bfdelta^{(\ell)},\bfeps^{(\ell)}\right)=(\tilde{\bfdelta},\tilde{\bfeps})$, and
\item for each $p\in\{0,\ldots,\ell-1\}$, there are $k_p\in\{\syd,\sye\}$ and 
$i_p\in\{1,\ldots,m\}$ such that
$$\left(\bfdelta^{(p)},\bfeps^{(p)}\right)\subs{k_p,i_p} 
\left(\bfdelta^{(p+1)},\bfeps^{(p+1)}\right).
$$
\end{enumerate}
For such a sequence, the isomorphism types of the 
homology groups of 
$\gv_{\bfdelta^{(p+1)},\bfeps^{(p+1)}}(f,\Phi)$ 
don't change at each step as 
a consequence of Proposition~\ref{thm:reductionGV1}. 
Thus $\gv_{\bfdelta,\bfeps}(f,\Phi)$ has homology 
groups isomorphic to 
those of $\gv_{\tilde{\bfdelta},\tilde{\bfeps}}(f,\Phi)$. 
The conclusion now follows from 
applying Theorem~\ref{thm:GV} to the latter.
\end{proof}

We next focus on the situations 
$(\bfdelta,\bfeps)\subs{\syd,i}(\bfdelta',\bfeps')$ 
and $(\bfdelta,\bfeps)\subs{\sye,i}(\bfdelta',\bfeps')$. These situations correspond to replacing $\delta_i$ in 
the first one and $\varepsilon_i$ in the second one by some $\zeta\in (\varepsilon_i,\delta_i)$. Even though we are updating only one entry 
in the pair $(\bfdelta,\bfeps)$, we have to modify the inequalities associated to 
several polynomials. Instead of doing this replacement simultaneously in all the inequalities,
 we do it by steps, in the inequalities corresponding to a single polynomial at a time. 
 With this intuition at hand, we introduce the semialgebraic sets below.

Fix $f\in\Hd[q]$, a strict formula $\Phi$ over $f$, positive numbers 
$\delta,\varepsilon,\zeta,t$, 
and $a\in\{0,\ldots,q\}$. We define the following 
spherical semialgebraic sets:

$\gv_{\delta,\varepsilon,\zeta,t}^{\syd,a}(f,\Phi)$ is obtained 
from $\Phi$ by rewriting
\begin{align*}
  &f_l=0\rew |f_l(x)|\leq\varepsilon\|f_l\|\\
  &f_l>0 \rew \begin{cases} 
     f_l(x)\geq\delta\|f_l\| &\mbox{ if $l>a$}\\
     f_a(x)\geq t\|f_a\| & \mbox{ if $l=a$}\\  
     f_l(x)\geq \zeta\|f_l\| & \mbox{ if $l<a$}
     \end{cases}\\
  &f_l<0\rew \begin{cases} 
  f_l(x)\leq-\delta\|f_l\| & \mbox{ if $l>a$}\\
  f_a(x)\leq -t\|f_a\| &\mbox{ if $l=a$}\\
  f_l(x)\leq -\zeta\|f_l\|& \mbox{ if $l<a$.}
  \end{cases}
\end{align*}
$\gv_{\delta,\varepsilon,\zeta,t}^{\sye,a}(f,\Phi)$ is obtained from $\Phi$ by rewriting 
\begin{align*} 
  &f_l=0\rew \begin{cases} 
     |f_l(x)|\leq\varepsilon\|f_l\| &\mbox{ if $l>a$}\\
     |f_a(x)|\leq t\|f_a\| & \mbox{ if $l=a$}\\  
     |f_l(x)|\leq \zeta\|f_l\| & \mbox{ if $l<a$}
     \end{cases}\\
  &f_l>0 \rew f_l(x)\geq\delta\|f_l\|\\
  &f_l<0\rew f_l(x)\leq-\delta\|f_l\|.
\end{align*}
Consider now $\bfdelta,\bfeps\in(0,\infty)^m$, $\syc\in\{\syd,\sye\}$, 
$i\in\{1,\ldots,m\}$, $a\in [q]$
and $\zeta,t>0$. We define 
the intermediate Gabrielov-Vorobjov approximations as the sets
\begin{equation}\label{eq:defGVugly}
\gv_{\bfdelta,\bfeps,\zeta,t}^{\syc,i,a}(f,\Phi):=
\gv_{\delta_i,\varepsilon_i,\zeta,t}^{\syc,a}(f,\Phi) \quad \cup \quad
\bigcup_{j\neq i} \gv_{\delta_j,\varepsilon_j}(f,\Phi).
\end{equation}
In particular, we can see $\gv_{\bfdelta,\bfeps,\zeta,t}^{\syd,i,a}(f,\Phi)$ as the result of having 
replaced $\delta_i$ by $\zeta$ in all the 
inequalities with polynomials $f_1,\ldots,f_{a-1}$, and being in the process of 
making the replacement in those inequalities with $f_a$ with the parameter $t$ 
moving from $\delta_i$ to $\zeta$. 

We now observe that for $\zeta,t,t'>0$ with $t\le t'$ we have the inclusions
\begin{equation}\label{eq:incs}
   \gv_{\bfdelta,\bfeps,\zeta,t'}^{\syd,i,a}(f,\Phi)
   \subseteq 
   \gv_{\bfdelta,\bfeps,\zeta,t}^{\syd,i,a}(f,\Phi)
   \quad\text{ and }\quad
   \gv_{\bfdelta,\bfeps,\zeta,t'}^{\sye,i,a}(f,\Phi)
   \supseteq 
   \gv_{\bfdelta,\bfeps,\zeta,t}^{\sye,i,a}(f,\Phi). 
\end{equation}
The crucial fact needed to prove 
Theorem~\ref{thm:quantitativeGV} is that these 
inclusions induce homotopy equivalences. 

\begin{prop}\label{thm:reductionGV2}
Let $f\in\Hd[q]$, $\Phi$ be a strict formula, 
$\bfdelta,\bfeps\in\bbR^m$ satisfying~\eqref{GVconditionexplicit}, 
and let $i\in [m]$ and $a\in [q]$. Then, 
\begin{description}
\item[(1)] 
For all $\zeta\in(\varepsilon_i,\delta_i)$ and 
$\varepsilon_i<t\le t'<\varepsilon_{i+1}$ (where $\varepsilon_{m+1}=1/\sqrt{2}\,\kappabar(f)$ 
by convention), the inclusion
$\gv_{\bfdelta,\bfeps,\zeta,t'}^{\syd,i,a}(f,\Phi)
\subseteq\gv_{\bfdelta,\bfeps,\zeta,t}^{\syd,i,a}(f,\Phi)$
induces a homotopy equivalence. 
\item[(2)] 
For all $\zeta\in(\varepsilon_i,\delta_i)$ and 
$\delta_{i-1}<t\leq t'<\delta_{i}$ 
(where $\delta_0=0$ by convention), 
the inclusion
$\gv_{\bfdelta,\bfeps,\zeta,t'}^{\sye,i,a}(f,\Phi)
\supseteq\gv_{\bfdelta,\bfeps,\zeta,t}^{\sye,i,a}(f,\Phi)$
induces a homotopy equivalence. 
\end{description}
\end{prop}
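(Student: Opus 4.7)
My plan is to prove both parts via a Thom--Mather isotopy argument applied to the one-parameter family of semialgebraic sets $\{X_s\}_{s\in[t,t']}$, where $X_s:=\gv_{\bfdelta,\bfeps,\zeta,s}^{\syc,i,a}(f,\Phi)$ and $\syc\in\{\syd,\sye\}$. Since the parameter $s$ appears only in the two atomic inequalities involving $f_a$ and all remaining constraints stay fixed as $s$ varies, the $X_s$ form a semialgebraic family in which only one pair of boundary hypersurfaces $H_s:=\{x\in\bbS^n\mid f_a(x)=\pm s\|f_a\|\}$ (in the $\syd$ case) or $H_s:=\{x\in\bbS^n\mid |f_a(x)|=s\|f_a\|\}$ (in the $\sye$ case) moves with $s$.

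The key technical step is to verify that, for $s$ in the allowed range, $H_s$ meets every other constraint hypersurface $\{f_l=\alpha\|f_l\|\}$ (with $l\neq a$ and $\alpha\in\{\pm\zeta,\pm\delta_j,\pm\varepsilon_j\}$) transversally on the sphere, and more generally that every multiple intersection of such hypersurfaces appearing in the stratification of $X_s$ is transverse. This is the exact content for which the condition number $\kappabar(f)$ was introduced in~\cite[\S3.4]{BCTC1}: the bound $\sqrt{2}\,\kappabar(f)\,T<1$ for $T:=1/\sqrt{2}\,\kappabar(f)$, which holds uniformly for all thresholds appearing here by~\eqref{GVconditionexplicit} and by the ranges prescribed for $t,t'$ in parts (1) and (2), yields a quantitative lower bound on the angles between the normal directions of the different level hypersurfaces at their common points. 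This is the input I will invoke from the differential-topological toolkit to be recalled in~\S\ref{sec:difTop}.

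Once transversality is secured, I will construct a stratified vector field $V$ on $\bigsqcup_{s\in[t,t']}X_s$ that is tangent to every boundary stratum, projects to $\partial/\partial s$ on the parameter interval, and can be integrated to a trivialization of the family. On the open stratum one takes $V$ to be a suitable lift of $\partial/\partial s$ driven by the normalized spherical gradient of $f_a/\|f_a\|$; near each boundary stratum one projects $V$ orthogonally onto the tangent space of that stratum, and the transversality bound guarantees that these projections are well defined and depend continuously on the point. Integrating $V$ produces a family of homeomorphisms $X_{t'}\to X_s$ for all $s\in[t,t']$ fixing $X_{t'}$, which yields the desired strong deformation retraction of $X_t$ onto $X_{t'}$ (and dually in the $\sye$ case, where the inclusion goes the other way); in particular, the inclusions in~\eqref{eq:incs} induce homotopy equivalences.

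The main obstacle I anticipate is the stratified nature of the sets $X_s$: they are not smooth manifolds but semialgebraic sets with corners, so the tangent-projected vector field $V$ must be controlled across strata of different dimensions simultaneously. This is precisely where the uniform transversality provided by $\kappabar(f)<\infty$, rather than just transversality at a single point, becomes essential; it ensures the projections remain uniformly bounded and patch together continuously. I expect the proof in~\S\ref{sec:proof_red} to formalize this via a partition of unity adapted to the stratification, following the standard Thom--Mather recipe recalled in~\S\ref{sec:difTop}.
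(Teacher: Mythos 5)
Your high-level strategy — invoke Thom–Mather isotopy via a controlled vector field tangent to the strata, lifting $\partial/\partial s$ over the moving threshold for $f_a$ — is compatible in spirit with the paper's proof, which packages the same idea through Thom's First Isotopy Lemma (Theorem~\ref{thm:ThomMatherRetraction}) applied to a Whitney stratification (the $(f,\bflambda)$-partition of Theorem~\ref{thm:flambdapartition}). You are also right that the transversality needed for the stratification is exactly what $\kappabar(f)<\infty$ buys via \cite[Proposition~3.6]{BCTC1}. However, there is a genuine gap that the paper has to work rather hard to close and that your proposal does not address.

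The gap is this: you describe the family $X_s = \gv^{\syc,i,a}_{\bfdelta,\bfeps,\zeta,s}(f,\Phi)$ as if a single pair of boundary hypersurfaces $H_s$ moves and everything else is fixed, so that a flow pushing $|f_a|$ outward retracts $X_t$ onto $X_{t'}$. But $X_s$ is a \emph{union} of Gabrielov--Vorobjov blocks (see \eqref{eq:defGVugly} and \eqref{eq:uniongvsets}), and the threshold $s$ only enters the $i$th block. Consequently a point $x$ with $|f_a(x)|<t'$ may still lie in $X_{t'}$ because some \emph{other} block with $j\neq i$ (whose constraints do not involve $s$ at all) already contains it. Thus the identity $\gv_\tau=\gv_{t_0}\cap |f_a|^{-1}[\tau,\infty)$ does \emph{not} hold, and the naive flow that sweeps all points with small $|f_a|$ outward will move points that should be kept fixed, leaving the image of the retraction strictly larger than $\gv_{t'}$. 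The paper flags exactly this as the danger immediately before the proof of Proposition~\ref{thm:reductionGV2}. To repair it, the paper first proves Proposition~\ref{prop:strataunion} (with the supporting Lemmas~\ref{lem:intersectionA}--\ref{lem:intersectionC}), which describes stratum-by-stratum how the strata $\pe_{\bfI,\sigma}$ of $\Pe_{f,\bflambda}$ intersect each $\gv_\tau$, distinguishing the strata that lie entirely inside, entirely outside, or are cut by $|f_a|^{-1}[\tau,\infty)$. Only then can one verify that the Thom--Mather retraction $\Psi$ — which by construction preserves each stratum — preserves $\gv_t$ and $\gv_{t'}$ and thus restricts to the desired homotopy equivalence. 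Your proposal needs an analogue of this combinatorial stratum-intersection analysis before the flow argument can close; without it the constructed retraction of $\gv_{t_0}\cap|f_a|^{-1}[t,\infty)$ onto $\gv_{t_0}\cap|f_a|^{-1}[t',\infty)$ does not a priori descend to a retraction of $\gv_t$ onto $\gv_{t'}$.

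A secondary imprecision: you write that $\sqrt{2}\,\kappabar(f)\,T<1$ with $T=1/(\sqrt{2}\,\kappabar(f))$ gives the needed angle bound — but with that $T$ the product equals $1$, not $<1$. What the paper actually uses is condition~\eqref{condititionFL}, that each threshold strictly satisfies $\lambda_{i,k}<1/(\sqrt{2}\,\kappabar(f))$; the strictness is essential for applying \cite[Proposition~3.6]{BCTC1}.
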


Again, Proposition~\ref{thm:reductionGV1} easily follows from this result.

\begin{proof}[Proof of Proposition~\ref{thm:reductionGV1}]
Assume that 
$(\bfdelta,\bfeps)\leq_{\syd,i}(\bfdelta',\bfeps')$ holds. Then $\delta_i\ge \delta_i'$ and without loss of 
generality, $\delta_i>\delta_i'$. 
The following equalities then follow from the 
definition of 
$\gv_{\bfdelta,\bfeps,\zeta,t}^{1,i,a}(f,\Phi)$ 
(we omit the $(f,\Phi)$ in what follows for simplicity):
\begin{itemize}
\item $\gv_{\bfdelta,\bfeps,\delta'_i,\delta_i}^{\syd,i,1}
=\gv_{\bfdelta,\bfeps}$,
\item $\gv_{\bfdelta,\bfeps,\delta_i',\delta'_i}^{\syd,i,a}
=\gv^{\syd,i,a+1}_{\bfdelta,\bfeps,\delta'_i,\delta_i}$, 
\quad for all $a\in[q-1]$,
\item $\gv_{\bfdelta,\bfeps,\delta_i',\delta'_i}^{\syd,i,q}
=\gv_{\bfdelta',\bfeps}=\gv_{\bfdelta',\bfeps'}$,
\end{itemize}
the last equality as, by assumption, $\bfeps=\bfeps'$. 
These equalities 
yield the following chain
\[
 \gv_{\bfdelta,\bfeps}=\gv^{\syd,i,1}_{\bfdelta,\bfeps,\delta'_i,\delta_i}
 \subseteq \gv^{\syd,i,1}_{\bfdelta,\bfeps,\delta_i',\delta'_i}
 =\gv^{\syd,i,2}_{\bfdelta,\bfeps,\delta'_i,\delta_i}
 \subseteq \gv^{\syd,i,2}_{\bfdelta,\bfeps,\delta_i',\delta'_i} =
 \cdots
 \subseteq \gv^{\syd,i,q}_{\bfdelta,\bfeps,\delta_i',\delta'_i}
 =\gv_{\bfdelta',\bfeps'},
\]
in which all inclusions induce homotopy equivalences by 
Proposition~\ref{thm:reductionGV2}(1). 
Hence Proposition~\ref{thm:reductionGV1} follows in this case.

For the other cases, i.e.,~when $(\bfdelta',\bfeps')\leq_{\syd,i}(\bfdelta,\bfeps)$ 
or when $(\bfdelta,\bfeps)\subs{\sye,i}(\bfdelta',\bfeps')$, 
we proceed analogously. 
\end{proof}

\subsection{Some elements of differential topology}
\label{sec:difTop}

To prove Proposition~\ref{thm:reductionGV2} will require more elaborate arguments. 
We lay down these arguments in this section. To do so, we first review the basic 
notions of the Mather-Thom theory, already introduced in~\cite{BCTC1}, now using some more 
general results from~\cite{Gibson1976}. We then develop a generalization of the sign 
partition, called $(f,\bflambda)$-partition. We will show that these new partitions, 
under well-posedness assumptions on $f$, are Whitney stratifications. 

\subsubsection{Stratified sets and Mather-Thom theory}

We will take a more general approach than that 
in~\cite{BCTC1}, as this time we 
will consider Whitney stratifications for more general sets. For motivation, 
however, we refer to~\cite[\S4.1]{BCTC1}.

\begin{defi}\cite{Mather2012,Gibson1976}\label{defiwhitney} 
A \textit{Whitney stratification} of a subset $\Omega$ of a smooth manifold $\mcM$ 
of dimension~$m$ 
is a partition $\mcS$ of $\Omega$ into locally closed smooth submanifolds of~$\mcM$, 
called {\em strata}, such that:
\begin{description}
\item[F] (Locally finite)  Every $x\in \Omega$ has a neighborhood intersecting finitely 
many strata only.
\item[W] (Whitney's condition b) 
For every strata $\varsigma,\sigma\in\mcS$, 
every point $x\in\varsigma\cap \overline{\sigma}$, every sequence of points $\{x_{\ell}\}_{\ell\in\bbN}$ in 
$\varsigma$ converging to $x$, and every 
sequence of points 
$\{y_{\ell}\}_{\ell\in\bbN}$ in $\sigma$ converging to $x$,
we have that, in all local charts of $\mcM$ around $x$,  
\[
  \lim_{\ell\to\infty}\overline{x_{\ell},y_{\ell}}\subseteq
  \lim_{\ell\to\infty}\Tg_{y_\ell}\sigma,
\]
provided both limits exist. 
The inclusion should be interpreted in the local coordinates of the chart:
$\overline{x_{\ell},y_{\ell}}$ denotes the straight line joining $x_{\ell}$ and $y_{\ell}$, 
$\Tg_{y_{\ell}}\sigma$ denotes the affine plane 
tangent to $\sigma$ at $y_{\ell}$, 
and the limits are to be interpreted in the corresponding Grassmannians of~$\bbR^m$. 
\end{description}
\end{defi}

Recall that a map is {\em proper} when its inverse image of any compact subset is 
compact. Our interest on Whitney stratified sets is linked to the following result, a version 
of the so-called Thom's First Isotopy Lemma (see~\cite[Ch. II, (5.2)]{Gibson1976}). 

\begin{theo}\label{thm:ThomMatherRetraction}
Let $\mcM$ be a smooth manifold with a Whitney stratification $\mcS$ of a locally closed subset 
$\Omega\subseteq\mcM$ and let 
$\alpha:\mcM\rightarrow \bbR^k$ be a smooth map such that:
\begin{description}
\item{\rm (i)} $\alpha:\Omega\rightarrow \bbR^k$ is proper, 
\item{\rm (ii)} $\alpha_{|\sigma}:\sigma\rightarrow \bbR^k$ is a surjective submersion,
for each stratum $\sigma\in\mcS$.
\end{description}
Then $\alpha:\Omega\rightarrow\bbR^k$ is a trivial fiber bundle. That is, there exist a subset $F\subseteq \Omega$ 
and a homeomorphism 
$h=(h_{\bbR^k},h_F):\Omega\rightarrow \bbR^k\times F$ such that $\alpha=h_{\bbR^k}$.
Furthermore, for all $x,y\in\Omega$, 
$h_F(x)=h_F(y)$ implies that $x$ and $y$ lie in 
the same stratum of~$\mcS$.\eproof
\end{theo}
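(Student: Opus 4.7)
The plan is to prove Theorem~\ref{thm:ThomMatherRetraction} by the classical method of constructing stratified controlled vector fields that lift the standard frame on $\bbR^k$, and then integrating them to obtain the trivialization. Concretely, I would pick a basepoint $t_0 \in \bbR^k$, set $F := \alpha^{-1}(t_0)\cap\Omega$, and try to produce, for every $x\in\Omega$, a canonical path in $\Omega$ from $x$ to a unique point of $F$, lying in the fiber structure dictated by $\alpha$.

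First, for each standard coordinate vector field $\partial/\partial t_i$ on $\bbR^k$, I would construct a stratified vector field $V_i$ on $\Omega$ whose restriction to each stratum $\sigma\in\mcS$ is a smooth section of $\Tg\sigma$ satisfying $d\alpha(V_i)=\partial/\partial t_i$. On a single stratum this is immediate from condition~(ii), since $\alpha|_\sigma$ is a submersion, so one can locally lift by a partial inverse to $d\alpha|_{\Tg\sigma}$ and glue with a smooth partition of unity subordinate to charts of $\sigma$. The delicate point is not the construction stratum by stratum, but rather arranging the $V_i$'s so that their flows fit together continuously across different strata, and so that the flows of $V_i$ and $V_j$ commute.

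The key tool here is Mather's notion of \emph{control data}: a family of tubular neighborhoods $T_\sigma$ around each stratum $\sigma$ equipped with smooth tubular projections $\pi_\sigma:T_\sigma\to\sigma$ and tubular distance functions $\rho_\sigma:T_\sigma\to [0,\infty)$, satisfying the standard compatibility relations $\pi_\sigma\circ\pi_\varsigma=\pi_\sigma$ and $\rho_\sigma\circ\pi_\varsigma=\rho_\sigma$ where both compositions make sense. The existence of such control data on a Whitney stratified set is the theorem that actually uses the Whitney condition~(b) in an essential way (see~\cite[Ch.~II]{Gibson1976}); condition~(b) ensures that the tubular projections $\pi_\sigma$ can be chosen to be submersions compatible with incidence, and that the induced geometry behaves tamely near strata of lower dimension. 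Using this control data, one inductively refines the $V_i$ from strata of lower dimension to strata of higher dimension so that $V_i$ is \emph{controlled}, meaning that $d\pi_\sigma(V_i)=V_i^\sigma\circ\pi_\sigma$ and $V_i(\rho_\sigma)=0$ near each incident stratum $\sigma$. Controlledness is precisely what guarantees that the flow $\phi^i_s$ of $V_i$ extends continuously across strata and is a homeomorphism of $\Omega$ preserving the stratification.

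Once the controlled lifts $V_i$ are at hand, the remaining arguments are essentially formal. Properness of $\alpha:\Omega\to\bbR^k$ together with the identity $d\alpha(V_i)=\partial/\partial t_i$ rules out finite-time blowup of the integral curves, so each $\phi^i_s$ is globally defined on $\Omega$ for all $s\in\bbR$. The fact that the $V_i$ can be chosen pairwise commuting (by the same inductive construction, adjusting the lifts using the submersion $\alpha|_\sigma$ on each stratum) then yields a joint $\bbR^k$-action $\Phi_s := \phi^1_{s_1}\circ\cdots\circ\phi^k_{s_k}$ for $s=(s_1,\ldots,s_k)\in\bbR^k$ that satisfies $\alpha(\Phi_s(x))=\alpha(x)+s$. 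Since $\alpha|_\sigma$ is surjective for every stratum and since $\Phi$ preserves strata, the map
\[
h:\Omega\longrightarrow \bbR^k\times F,\qquad x\longmapsto \bigl(\alpha(x)-t_0,\,\Phi_{t_0-\alpha(x)}(x)\bigr),
\]
is a well-defined homeomorphism with $h_{\bbR^k}=\alpha(\cdot)-t_0$ (up to translation of the base), and by construction $h_F(x)$ lies in the same stratum as $x$, establishing the final clause. The main obstacle is thus the stratum-by-stratum inductive construction of commuting controlled lifts of $\partial/\partial t_i$, which is exactly the technical core of Thom's first isotopy lemma and the only step where Whitney's condition~(b) enters decisively.
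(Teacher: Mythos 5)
The paper does not actually prove this statement: it simply cites~\cite[Ch.~II, (5.2)]{Gibson1976} (Thom's first isotopy lemma) and adds a remark explaining (a) why the fibration is globally, not merely locally, trivial — because the base $\bbR^k$ is contractible — and (b) why the last clause holds — because the trivialization in~\cite[Ch.~II, (5.1)--(5.2)]{Gibson1976} is a stratified homeomorphism. Your proposal, by contrast, sketches the \emph{content} of that reference, namely the standard proof via control data and integration of controlled vector fields. The architecture is correct and is essentially the argument of Gibson et al., so the two agree in substance; the paper just outsources it. One step in your sketch is overstated: the claim that the controlled lifts $V_i$ of $\partial/\partial t_i$ can be chosen pairwise commuting \emph{by the same inductive construction}. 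The stratum-by-stratum construction patches local lifts with partitions of unity, and Lie brackets are not preserved by such patching, so commutativity is not automatic — it becomes available only \emph{a posteriori}, once the trivialization is known. Fortunately it is not needed. Since $d\alpha_j(V_i)=\delta_{ij}$ on every stratum, the flow $\phi^i_s$ advances the $i$th coordinate of $\alpha$ by $s$ and fixes the others, so the maps
\[
h(x):=\bigl(\alpha(x)-t_0,\ \Phi_{t_0-\alpha(x)}(x)\bigr), \qquad \Phi_u:=\phi^1_{u_1}\circ\cdots\circ\phi^k_{u_k},
\]
and $h^{-1}(s,f):=\phi^k_{s_k}\circ\cdots\circ\phi^1_{s_1}(f)$ (flows composed in the opposite order) are mutually inverse homeomorphisms with no commutativity hypothesis; alternatively, the usual argument sidesteps the issue by inducting on $k$, handling one controlled vector field at a time.
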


\begin{remark}
As the codomain of $\alpha$ is $\bbR^k$, it follows from the proof of~\cite[Ch. II, (5.2)]{Gibson1976} 
that we have a trivial fiber bundle and not just a locally trivial fiber bundle. 
The last sentence follows from noting that the trivial fibration in the statement of~\cite[Ch. II, (5.2)]{Gibson1976} is stratified, 
see~\cite[Ch. II, (5.1)]{Gibson1976}. 
\end{remark}

\subsubsection{\texorpdfstring{$(f,\bflambda)$}{(f,lambda)}-partitions}

A $q$-tuple $f\in\Hd[q]$ of polynomials leads to partition of the sphere $\bbS^{n}$ 
according to the $q$-tuples of signs obtained when evaluating the polynomials. 
We are going to make a finer classification by considering not only the signs, 
but also the magnitudes of the values with respect to some finite grid. 
This is the idea
of $(f,\bflambda)$-partitions.

\begin{defi}\label{def:frpartition}
Let $f\in\Hd[q]$ and $\bflambda\in\bbR^{q\times (m+1)}$ be a matrix whose entries satisfy 
$0=\lambda_{i,0}<\lambda_{i,1}<\cdots<\lambda_{i,m}$, for each $i\in [q]$. 
To each point $x\in\bbS^n$ we associate the following sets:
\begin{enumerate}
\item[(I$\Izero$)] For all $0\le k\le m$,  
$I_{\Izero,k}(x):= \{i\in [q]\mid |f_i(x)|/\|f_i\|=\lambda_{i,k}\}$.
\item[(I$\Ione$1)] For all $0\le k< m$, 
$I_{\Ione,k}(x):=\{i\in [q]\mid \lambda_{i,k}<|f_i(x)|/\|f_i\|<\lambda_{i,k+1}\}$.
\item[(I$\Ione$2)] $I_{\Ione,m}(x):=\{i\in [q]\mid \lambda_{i,m}<|f_i(x)|/\|f_i\|\}$.
\end{enumerate}
This defines the ordered partition of $[q]$ (in which we allow empty sets):
\[
\bfI(x):=(I_{\Izero,0}(x),I_{\Ione,0}(x),I_{\Izero,1}(x),I_{\Ione,1}(x),\ldots,I_{\Izero,m-1}(x),I_{\Ione,m-1}(x),I_{\Izero,m}(x),I_{\Ione,m}(x)) .
\]
The point $x$ also determines 
the tuple of sign conditions $\sigma(x)\in\{-1,0,+1\}^q$
given by 
\begin{enumerate}
\item[(S)] $\sigma_i(x):=\sgn(f_i(x))$\text{ for $i\in [q]$}.
\end{enumerate}
It is clear that 
$$
  x\sim y:= \big(\bfI(x)=\bfI(y) \textrm{ and } 
  \sigma(x)=\sigma(y)\big)
$$
describes an equivalence relation. 
We define 
the \textit{$(f,\bflambda)$-partition} $\Pe_{f,\bflambda}$ 
as the set of equivalence classes of this relation.

An ordered partition $\bfI:=(I_{\Izero,0},I_{\Ione,0},\ldots,I_{\Izero,m},I_{\Ione,m})$ of $[q]$ 
together with a sign vector $\sigma\in\{-1,0,+1\}^q$ defines the set
\[
 \pe_{\bfI,\sigma}:=\{x\in\bbS^n\mid \bfI(x)=\bfI,\,\sigma(x)=\sigma\},
\]
which is an element of $\Pe_{f,\bflambda}$, provided 
it is non-empty.
\end{defi}

\begin{remark}
Less formally, the construction of $\Pe_{f,\bflambda}$
can be described as follows: 
the $i$th  row of the matrix $\bflambda\in\bbR^{q\times (m+1)}$ 
defines a partition of the set $\bbR_{\ge 0}$ of nonnegative reals 
into the open intervals $(\lambda_{i,k},\lambda_{i,k+1})$
and the singleton sets $\{\lambda_{i,k}\}$.
By considering also the numbers $\pm\lambda_{i,k}$, 
we similarly obtain a partition of $\bbR$, which is symmetric 
with respect to the reflection $x\mapsto -x$. 
The product of these partitions of $\bbR$, for $i \in [q]$, yields 
the partition $\Pe_{f,\bflambda}$ of $\bbR^q$. 
So the sets of this partition are products of open intervals and singletons, 
describing where a value $(y_1,\ldots,y_q)\in\bbR^q$ is located within the 
discrete grid provided by the matrix $\bflambda$
\end{remark}

\begin{exam}
Figure~\ref{fig:estrat} shows, locally, an example on $\bbS^2$ with  
$q=2$, $m=2$ and $\lambda_{1,i}=\lambda_{2,i}=\lambda_i$. The thick curves correspond to the zero sets 
for $f_1$ and $f_2$. The dashed lines are level curves (for both $f_1$ and 
$f_2$) with levels $-\lambda_1$ and $\lambda_1$ and the dotted curves 
are the same for the levels $-\lambda_2$ and $\lambda_2$. All these curves partition the picture 
into~36 two-dimensional open regions, 60 open segments, and 25 points. Each of 
these 121 regions corresponds to an element in $\Pe_{f,\bflambda}$. We won't 
attempt to write down the details for each of them but in Table~\ref{table:1} 
we list some of them with their corresponding ordered partition. 

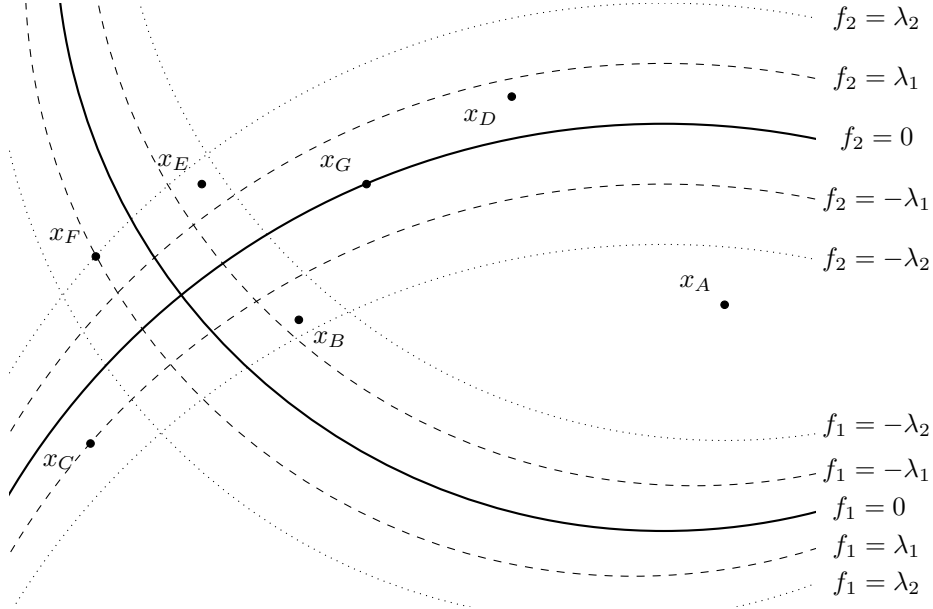
\begin{figure}[H]\centering
\begin{tikzpicture}[scale=2,point/.style={draw,minimum size=0pt,
inner sep=1pt,circle,fill=black}]
  \begin{scope}
    \clip (-0.3,0) rectangle (5,4);
   \begin{scope}
      \clip (4,-1.8) circle(5);
      \clip (4, -1.4) circle(5);
      \clip (4,4.5) circle(4);
    \end{scope}
    \draw[thick] (4,-1.8) circle(5);
    \draw[dashed] (4, -1.4) circle(5);
    \draw[dashed] (4, -2.2) circle(5);
    \draw[dotted] (4, -1) circle(5);
    \draw[dotted] (4, -2.6) circle(5);
    \draw[thick] (4,4.5) circle(4);
    \draw[dashed] (4.2,4.8) circle(4);
    \draw[dashed] (3.8,4.2) circle(4);
    \draw[dotted] (4.4,5.1) circle(4);
    \draw[dotted] (3.6,3.9) circle(4);
  \end{scope}
\path (5.4,3.9) node{\small $f_2=\lambda_2$};
\path (5.4,3.5) node{\small $f_2=\lambda_1$};
\path (5.4,3.1) node{\small $f_2=0$};
\path (5.4,2.7) node{\small $f_2=-\lambda_1$};
\path (5.4,2.3) node{\small $f_2=-\lambda_2$};
\path (5.4,1.2) node{\small $f_1=-\lambda_2$};
\path (5.4,0.9) node{\small $f_1=-\lambda_1$};
\path (5.35,0.65) node{\small $f_1=0$};
\path (5.4,0.4) node{\small $f_1=\lambda_1$};
\path (5.4,0.15) node{\small $f_1=\lambda_2$};
\draw (4.4,2) node(p) [point,label=120:{\small $x_A$}] {};
\draw (1.6,1.9) node(p) [point,label=335:{\small $x_B$}] {};
\draw (0.23,1.08) node(p) [point,label=185:{\small $x_C$}] {};
\draw (3,3.38) node(p) [point,label=220:{\small $x_D$}] {};
\draw (0.962,2.8) node(p) [point,label=100:{\small $x_E$}] {};
\draw (0.265,2.32) node(p) [point,label=140:{\small $x_F$}] {};
\draw (2.045,2.8) node(p) [point,label=140:{\small $x_G$}] {};
\end{tikzpicture}
\caption{An example (locally) on $\bbS^2$ with 
$S=\{1,2\}$ and $m=2$.}
\label{fig:estrat}
\end{figure}

\begin{table}[H]\centering
\begin{tabular}{|l|c|c|c|c|c|c|c|c|c|}
\hline\hline
 & $I_{\Izero,0}$ & $I_{\Ione,0}$ & $I_{\Izero,1}$  
 & $I_{\Ione,1}$ & $I_{\Izero,2}$ & $I_{\Ione,2}$ & $\sigma$ \\
\hline 
$x_A$ & $\varnothing$ & $\varnothing$ & $\varnothing$ 
& $\varnothing$ & $\varnothing$ &$\{1,2\}$&$(-1,-1)$\\
$x_B$ & $\varnothing$ & $\varnothing$  & $\varnothing$ 
& $\{1,2\}$ & $\varnothing$ &$\varnothing$&$(-1,-1)$\\
$x_C$ & $\varnothing$ & $\varnothing$  & $\varnothing$ 
& $\varnothing$  & $\{2\}$ &$\{1\}$&$(+1,-1)$\\
$x_D$ & $\varnothing$ & $\{2\}$ & $\varnothing$ 
& $\varnothing$  & $\varnothing$ &$\{1\}$&$(-1,+1)$\\
$x_E$ & $\varnothing$ & $\varnothing$ & $\varnothing$ 
& $\{1,2\}$  & $\varnothing$ &$\varnothing$&$(-1,+1)$\\
$x_F$ & $\varnothing$ & $\varnothing$ & $\{1\}$ 
& $\varnothing$ & $\{2\}$ &$\varnothing$&$(+1,+1)$\\
$x_G$ & $\{2\}$ & $\varnothing$ & $\varnothing$ 
& $\varnothing$  & $\varnothing$ &$\{1\}$&$(-1,~0)$\\
\hline\hline
\end{tabular}
\caption{Some points in Figure~\ref{fig:estrat} and their ordered 
partition.}\label{table:1}
\end{table}
\end{exam}
\bigskip 

The following theorem gives sufficient conditions on
$f\in\Hd[q]$ 
and $\bflambda\in\bbR^{q\times (m+1)}$ for the $(f,\bflambda)$-partition 
of $\bbS^n$ to be a Whitney stratification. 

\begin{theo}\label{thm:flambdapartition}
Let $f\in\Hd[q]$ with $\kappabar(f)<\infty$ and assume $\bflambda\in\bbR^{q\times (m+1)}$ satisfies for $i\in [q]$,
\begin{equation}\label{condititionFL}
  0=\lambda_{i,0}<\lambda_{i,1}<\cdots<\lambda_{i,m}<\frac{1}{\sqrt{2}\,\kappabar(f)}.
\end{equation}
Then the $(f,\bflambda)$-partition $\Pe_{f,\bflambda}$ 
is a Whitney stratification of $\bbS^n$.
Furthermore, under these conditions, the following holds:
\begin{enumerate}
\item[(1)] The codimension in $\bbS^n$ of each stratum $\pe_{\bfI,\sigma}$ equals $\sum_{k=0}^m |I_{\Izero,k}|=q-\sum_{k=0}^m|I_{\Ione,k}|$.
\item[(2)] Given $\pe_{\bfI,\sigma}\in\Pe_{f,\bflambda}$ and $a\in I_{\Ione,k}$ for some $k<m$, the map
\begin{align*}
\hat{f}_{a,\bfI,\sigma}:\pe_{\bfI,\sigma}&\rightarrow (\lambda_{a,k},\lambda_{a,k+1})\\
x&\mapsto |f_{a}(x)|/\|f_{a}\|
\end{align*}
is a surjective submersion.
\end{enumerate}
\end{theo}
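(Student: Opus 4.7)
The plan is to derive everything from a single transversality lemma that encodes the bound on $\kappabar(f)$, and then to invoke standard pullback results for Whitney stratifications.

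First I would establish a key lemma: if $x\in\bbS^n$ and $S\subseteq[q]$ are such that $|f_i(x)|/\|f_i\|<1/(\sqrt{2}\,\kappabar(f))$ for every $i\in S$, then the spherical gradients of $f_i/\|f_i\|$, $i\in S$, viewed as vectors in $\Tg_x\bbS^n$, are linearly independent. This should follow essentially directly from the definition of $\kappabar(f)$ in \cite[\S3.4]{BCTC1}: this condition number is built precisely so that its reciprocal is the minimal size of a level tuple at which the Jacobian of a subsystem of $(f_i/\|f_i\|)_{i\in S}$ restricted to the sphere can drop rank, and the factor $1/\sqrt{2}$ is the same margin that produces the bound in Theorem~\ref{thm:quantitativeGV}. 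This lemma is the technical core and the only place where the hypothesis $\kappabar(f)<\infty$ and \eqref{condititionFL} are used.

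Given this lemma, parts (1) and (2) follow quickly. For (1), the stratum $\pe_{\bfI,\sigma}$ is cut out in $\bbS^n$ by the $\sum_k|I_{\Izero,k}|$ equations $f_i/\|f_i\|=\sigma_i\lambda_{i,k}$ for $i\in I_{\Izero,k}$, $k=0,\dots,m$, together with open strict inequalities. Since $\lambda_{i,k}\leq\lambda_{i,m}<1/(\sqrt{2}\,\kappabar(f))$ by \eqref{condititionFL}, the key lemma applied to $S=\bigsqcup_k I_{\Izero,k}$ gives a full-rank Jacobian at every point of the locus; the regular value theorem then yields a smooth submanifold of $\bbS^n$ of the claimed codimension, and intersecting with the open conditions preserves smoothness and codimension. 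For (2), note that if $a\in I_{\Ione,k}$ with $k<m$ then $|f_a(x)|/\|f_a\|<\lambda_{a,k+1}\leq\lambda_{a,m}<1/(\sqrt{2}\,\kappabar(f))$, so the key lemma applies to $S=\{a\}\cup\bigsqcup_k I_{\Izero,k}$; this shows that $\nabla(f_a/\|f_a\|)$ is not in the span of the gradients cutting out $\pe_{\bfI,\sigma}$, so its restriction to $\Tg_x\pe_{\bfI,\sigma}$ is nonzero, giving the submersion property. Surjectivity is more delicate: since a submersion is an open map the image is open in $(\lambda_{a,k},\lambda_{a,k+1})$, and a path-lifting argument inside the stratum, using that the only barriers to continuing a lift are the endpoints $\lambda_{a,k}$ and $\lambda_{a,k+1}$ themselves (the transverse equations defining $\pe_{\bfI,\sigma}$ can be preserved locally by the implicit function theorem along any admissible direction), shows that every interior value is attained.

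For Whitney's condition (b), I would reduce to a pullback: fix $x^{*}\in\bbS^n$ and put $S(x^{*}):=\{i\in[q]:|f_i(x^{*})|/\|f_i\|\leq\lambda_{i,m}\}$. On a small neighborhood $U$ of $x^{*}$ the indices outside $S(x^{*})$ contribute a constant datum to every stratum, so $U\cap\Pe_{f,\bflambda}$ equals the pullback along the smooth map $\mcF_{S}: U\to\bbR^{S(x^{*})}$, $x\mapsto(f_i(x)/\|f_i\|)_{i\in S(x^{*})}$, of the product (box) stratification of $\bbR^{S(x^{*})}$ determined by the grid values $\pm\lambda_{i,k}$. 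The box stratification is manifestly Whitney (b), and by the key lemma $\mcF_{S}$ is a submersion at $x^{*}$, hence on all of $U$ after shrinking. Whitney regularity pulls back under submersions (see \cite[Ch.~I]{Gibson1976} or \cite{Mather2012}), so condition (b) holds at $x^{*}$; local finiteness is automatic because $\Pe_{f,\bflambda}$ has only finitely many strata.

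The main obstacle is the key lemma of Step~1: verifying that the constant $1/\sqrt{2}$ is exactly what pops out of the definition of $\kappabar(f)$ in \cite{BCTC1} is the one place where precise bookkeeping with the Weyl inner product and the norms $\|f_i\|$ is needed. A secondary difficulty is the surjectivity half of (2), which does not follow formally from the submersion property and requires the path-lifting argument sketched above.
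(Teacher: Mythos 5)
Your proposal is correct, and it takes a genuinely different route from the paper. Your ``key lemma'' is precisely~\cite[Proposition~3.6]{BCTC1}, which is also the only place where the paper uses $\kappabar(f)<\infty$ and the numerical bound on $\bflambda$; so both proofs share the same technical core. From that point on the architectures diverge. You localize at a point $x^*$, discard the indices for which $|f_i(x^*)|/\|f_i\|>\lambda_{i,m}$, and pull back the (manifestly Whitney) product ``box'' stratification of $\bbR^{S(x^*)}$ along the submersion $\mcF_S$; the Whitney condition then transfers because transverse preimages preserve Whitney regularity. The paper instead factors $\Pe_{f,\bflambda}$ as the intersection $\bigwedge_i\Pe_{f_i,\lambda_i}$ of one-polynomial partitions, shows each $\Pe_{f_i,\lambda_i}$ is Whitney (open sets and smooth level hypersurfaces), verifies that these stratifications are in general position in the sense of~\cite[Ch.~I, (1.2)]{Gibson1976} (this is where Proposition~3.6 of~\cite{BCTC1} enters), and invokes~\cite[Ch.~I, (1.3)]{Gibson1976}. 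The two routes are essentially dual --- pullback of a product versus transversal intersection of factors --- and buy roughly the same thing, though the paper's version delivers part~(1) as an immediate by-product of the general-position computation, whereas you prove it separately via the regular value theorem. For part~(2) you establish surjectivity by a path-lifting argument; the paper instead shows the map is closed (by compactness of $\bbS^n$ plus an explicit perturbation in trivializing coordinates to re-enter $\pe_{\bfI,\sigma}$), so that ``open $+$ closed $+$ connected target'' gives surjectivity. Your sketch of the path-lifting step --- that at a limit point only the open constraints may fail and the implicit function theorem lets you push back into the stratum without changing the value of $|f_a|$ --- is correct in spirit and is in fact the same perturbation the paper uses to prove closedness, but it should be spelled out at the level of detail the paper gives to make the argument airtight.
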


\begin{remark} 
Recall that the condition ``$a\in I_{\Ione,k}$ for some $k<m$" can  be less cryptically written 
as ``$|f_a(x)|/\|f_a\|\in (\lambda_{a,k},\lambda_{a,k+1})$ for some
$x\in \pe_{\bfI,\sigma}$ and $k<m$", 
or simply as ``$|f_a|/\|f_a\|\in (\lambda_{a,k},\lambda_{a,k+1})$ on $\pe_{\bfI,\sigma}$ for some $k<m$". 
\end{remark}


\begin{proof}
In order to show that $\Pe_{f,\bflambda}$ is a Whitney stratification, we notice that
\[
  \Pe_{f,\bflambda}=\bigwedge_{i=1}^q\Pe_{f_i,\lambda_i}
  :=\left\{\cap_{i=1}^q\pe_i\mid\pe_i\in
  \Pe_{f_i,\lambda_i}\right\}
\]
where $\lambda_i:=(\lambda_{i,0},\ldots,\lambda_{i,m})$ 
is the $i$th row of $\bflambda$ and 
$\Pe_{f_i,\lambda_i}$ is the 
$(f_i,\lambda_i)$-partition of $\bbS^n$. 
Thus, in virtue of~\cite[Ch.~I, (1.3)]{Gibson1976}, it is enough to show that each $\Pe_{f_i,\lambda_i}$ is 
a Whitney stratification and that 
$\Pe_{f_1,\lambda_1},\ldots,\Pe_{f_q,\lambda_q}$ 
are \emph{in general position}, which, 
following~\cite[Ch.~I, (1.2)]{Gibson1976}, means that 
for all $\pe_1\in\Pe_{f_1,\lambda_1},\ldots,
\pe_q\in\Pe_{f_q,\lambda_q}$,
\[
  \mathrm{codim}\,\bigcap_{i=1}^q\Tg_x\pe_i
  =\sum_{i=1}^q\mathrm{codim}\,\Tg_x\pe_i\text{,}
\]
holds for each $x\in \bigcap_{i=1}^q\pe_i$, where
$\mathrm{codim}$ stands for codimension.

Note that $\Pe_{f_i,\lambda_i}$ consists of open sets of the form $f_i^{-1}(a,b)$, 
with $(a,b)$ an open interval, or a hypersurface of 
the form $f^{-1}_i(t)$, with $t=\|f_i\|\lambda_{i,j}$ 
for some $j$. By assumption on $\bflambda$, this implies 
that for such $t$, 
$|t|<\|f_i\|/(\sqrt{2}\kappabar(f))$ 
and hence, by~\cite[Proposition~3.6]{BCTC1} and the 
implicit function theorem, all the hypersurfaces 
are smooth. Whitney's
condition~b is verified in a straighforward way so that we conclude that $\Pe_{f_i,\lambda_i}$ is a 
Whitney stratification.

We show now that $\Pe_{f_1,\lambda_1},\ldots,\Pe_{f_q,\lambda_q}$ are in general position. Let $\pe_i\in\Pe_{f_i,\lambda_i}$, for $i\in[q]$, and 
$x\in\cap_{i\leq q}\pe_i$. It is easy to check that 
$\mathrm{codim}\,\Tg_x\pe_i=1$ if $i\in I_{\Izero,k}(x)$ and 
$\mathrm{codim}\,\Tg_x\pe_i=0$ otherwise. Therefore, abbreviating $I_{\Izero,\ast}(x):=\bigcup_kI_{\Izero,k}(x)$, we get 
$\sum_{i=1}^q\mathrm{codim}\,\Tg_x\pe_i =|I_{\Izero,\ast}(x)|$.
In addition, when $\pe_i$ is a hypersurface, we have $\Tg_x\pe_i=\ker\diff_xf_i$, thus
\[
  \bigcap_{i=1}^q\Tg_x\pe_i=\bigcap_{i\in I_{\ast,0}(x)}\ker\diff_xf_i=
  \ker\diff_xf^{I_{\Izero,\ast}(x)}.
\]
By~\cite[Proposition~3.6]{BCTC1}, the codimension of 
the right-hand side is $|I_{\Izero,\ast}(x)|$. This shows 
that $\Pe_{f_1,\lambda_1},$ $\ldots,$ 
$\Pe_{f_q,\lambda_q}$ are in general position. 
We conclude that $\Pe_{f,\bflambda}$ is a Whitney 
stratification.

The argument above proves also~(1).

We prove part~(2) is a standard way. First,
we will show that $\hat{f}_{a,\bfI,\sigma}$ is a 
submersion, i.e., that its gradient is not tangent to
$\pe_{\bfI,\sigma}$. Then we will show that
$\hat{f}_{a,\bfI,\sigma}$ is closed. Assume we have 
proved these claims. 
The fact that $\hat{f}_{a,\bfI,\sigma}$ is a submersion 
implies that it is open. Therefore, as 
$\hat{f}_{a,\bfI,\sigma}$ is also closed 
and $(\lambda_{i,k},\lambda_{i,k+1})$ is connected, we deduce that $\hat{f}_{a,\bfI,\sigma}$ is surjective. 
This will finish the proof.

To show that $\hat{f}_{a,\bfI,\sigma}$ is a submersion, we fix a point 
$p\in\pe_{\bfI,\sigma}$ and take trivializing coordinates around it, 
using~\cite[Lemma~4.24]{BCTC1}. In these coordinates, using the notation from~\cite{BCTC1}, $\pe_{\bfI,\sigma}$ 
is an open subset of an affine subspace given by
\begin{equation}\label{eq:systempebar}
\begin{cases}
U_i=\sigma_i\lambda_{i,k}&(0<k\leq m,\,i\in I_{\Izero,k})\\
\lambda_{i,k}\leq\sigma_iU_i\leq\lambda_{i,k+1}&(0\leq k<m,\,l\geq 1,\,i\in I_{\Ione,k})
\end{cases}
\end{equation}
whose tangent space is given by the system
\begin{equation}\label{eq:systempetangent}
\begin{cases}
U_i=0&(0\leq k\leq m,\,i\in I_{\Izero,k})
\end{cases}.
\end{equation}
The map $\hat{f}_{a,\bfI,\sigma}$ in these 
coordinates becomes the linear map $U_{a}$. 
To check that $\hat{f}_{a,\bfI,\sigma}$ is a 
submersion is then enough to check that $U_a$ 
is not identically zero in the tangent space 
in these coordinates. Since 
$a\notin \cup_{k}I_{\Izero,k}$, $U_{a}$, this is the 
case and so $\hat{f}_{a,\bfI,\sigma}$ is a submersion.

To show that $\hat{f}_{a,\bfI,\sigma}$ is closed, it 
is enough to show that for every sequence 
$\{x_k\}$ in $\pe_{\bfI,\sigma}$, if 
$\{\hat{f}_{a,\bfI,\sigma}(x_k)\}$ has a limit 
$\lambda\in (\lambda_{a,k},\lambda_{a,k+1})$, then 
there exists $x\in \pe_{\bfI,\sigma}$ such that 
$\hat{f}_{a,\bfI,\sigma}(x)=\lambda$.

As $\bbS^n$ is compact, we can assume without loss of generality that $\{x_k\}$ converges 
to a point $x'\in \overline{\pe_{\bfI,\sigma}}$. 
By continuity, $\hat{f}_{a,\bfI,\sigma}(x')=\lambda$. 
Passing again to trivializing coordinates 
and using~\cite[Lemma~4.24]{BCTC1}, we perturb $x'$ to a point $x$ whose components in these coordinates are 
as follows:
\[
u_i:=\begin{cases}
u'_i+\sigma_it&\text{if for some $k$ we have } 
i\in I_{k,1}\text{ and }u'_i=\lambda_{i,k}\\
u'_i-\sigma_it&\text{if for some $k$ we have } 
i\in I_{k,1}\text{ and }u'_i=\lambda_{i,k+1}\\
u'_i&\text{otherwise}
\end{cases}
\]
with a sufficiently small $t>0$. This new point 
$x$ evaluates to the same value as $x'$ 
under $\hat{f}_{a,\bfI,\sigma}$, since $u_{a}=u'_a$ as 
$u'_a=\lambda\in(\lambda_{a,k},\lambda_{a,k+1})$ 
by hypothesis; and it belongs to 
$\pe_{\bfI,\sigma}$. Thus it is the desired point and we are done.
\end{proof}

\subsection{Proof of Proposition~\ref{thm:reductionGV2}}
\label{sec:proof_red}

We have now all the tools needed to prove Proposition~\ref{thm:reductionGV2} and with it 
to finish 
the proof of the Quantitative Gabrielov-Vorobjov Theorem~\ref{thm:quantitativeGV}. 
We will only prove part (1) of Proposition~\ref{thm:reductionGV2} as part (2) is proven in an analogous way.

We fix $f\in\Hd[q]$, a strict formula $\Phi$ over $f$, 
tuples $\bfdelta,\bfeps\in(0,\infty)^m$, an index 
$i\in[m]$, 
a point $\zeta\in(\varepsilon_i,\delta_i)$, points 
$t<t'$ in the interval $(\varepsilon_i,\varepsilon_{i+1})$,  
and an index $a\in[q]$, as in the statement 
of Proposition~\ref{thm:reductionGV2} and satisfying the hypothesis given there. 
Since $a$ is fixed, we can assume $\|f_a\|=1$ without loss of generality after scaling $f$ appropriately. 

We also choose positive numbers $t_0,t_1$ satisfying 
$$
 \varepsilon_i < t_0 < t < t' < t_1 < \varepsilon_{i+1} 
$$
and define the matrix $\bflambda\in\bbR^{q\times(2m+2)}$ whose 
$l$th row $\lambda_{l}$ is given by  
\begin{equation}\label{eq:defLambda}
 \lambda_{l}:=\begin{cases}
 (0,\varepsilon_1,\delta_1,\ldots,\varepsilon_i,\zeta,
 \delta_i,\varepsilon_{i+1}\ldots,\varepsilon_m,
 \delta_m), &\text{if }l\neq a,\\
 (0,\varepsilon_1,\delta_1,\ldots,\varepsilon_i,t_0,
 t_1,\varepsilon_{i+1},\ldots,\varepsilon_m,\delta_m),
 &\text{if }l=a.
\end{cases}
\end{equation}
By construction, this $\bflambda$ satisfies \eqref{condititionFL}. 
We will assume these conventions throughout this 
subsection without further mentioning them explicitly.
The matrix $\bflambda$ determines the $(f,\bflambda)$-partition $\Pi_{f,\bflambda}$ which, 
as we saw in Theorem~\ref{thm:flambdapartition}, is a Whitney stratification of $\bbS^n$. 

Recall the intermediate Gabrielov-Vorobjov approximations 
$$
  \gv_\tau := \gv^{\syd,i,a}_{\bfdelta,\bfeps,\zeta,\tau}(f,\Phi),
$$ 
defined in \eqref{eq:defGVugly}, for $\tau\in[t_0,t_1)$. 
These are compact subsets of $\bbS^n$.

Proposition~\ref{thm:reductionGV2} claims that 
$\iota:\gv_{t'}\rightarrow \gv_{t}$ is a homotopy equivalence. 
The basic idea for showing this is to apply Theorem~\ref{thm:ThomMatherRetraction} 
to the stratification provided by $\Pe_{f,\bflambda}$.
In a first step towards this goal, we describe 
how the strata $\pe_{\bfI,\sigma}$ of $\Pi_{f,\bflambda}$ intersect~$\gv_\tau$.
The findings are summarized in the proposition below, whose easy  
but somewhat cumbersome proof is postponed to~\S\ref{se:TS}.

\begin{prop}\label{prop:strataunion}
\begin{description}
\item[(1)] $\gv_{t_0}$ is a union of strata of 
$\Pe_{f,\bflambda}$.
\item[(2)] Let $\tau,\tau'\in(t_0,t_1)$. For each $\pe_{\bfI,\sigma}\in\Pe_{f,\bflambda}$ such that $\pe_{\bfI,\sigma}\subseteq \gv_{t_{0}}$, the following holds:
\begin{description}
\item[(i)] $\pe_{\bfI,\sigma}\cap \gv_{\tau}=\varnothing$ 
if and only if $\pe_{\bfI,\sigma}\cap \gv_{\tau'}=\varnothing$. In this case, $\pe_{\bfI,\sigma}\subseteq |f_a|^{-1}(t_0)$.
\item[(ii)] $\pe_{\bfI,\sigma}\cap \gv_{\tau}=\pe_{\bfI,\sigma}$ if 
and only if 
$\pe_{\bfI,\sigma}\cap \gv_{\tau'}=\pe_{\bfI,\sigma}$.
\item[(iii)] If $\varnothing\neq\pe_{\bfI,\sigma}\cap 
\gv_{\tau}\subsetneqq\pe_{\bfI,\sigma}$, then 
$\pe_{\bfI,\sigma}\subseteq |f_a|^{-1}(t_0,t_1)$ and
\[
  \pe_{\bfI,\sigma}\cap \gv_\tau=\pe_{\bfI,\sigma} 
  \cap\{x\in\bbS^n\mid |f_a(x)|\geq \tau\}.
\]
\end{description}
\end{description}
\end{prop}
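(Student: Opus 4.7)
The plan is to exploit three observations: every threshold used in defining $\gv_{t_0}$ appears as an entry of $\bflambda$; the only atoms of the defining formula of $\gv_\tau$ whose truth value on a stratum of $\Pe_{f,\bflambda}$ can depend on $\tau\in(t_0,t_1)$ (or differ from the case $\tau=t_0$) are $(f_a\ge\tau\|f_a\|)$ and $(f_a\le-\tau\|f_a\|)$; and the inner block is a \emph{monotone} Boolean combination, so on a stratum, after freezing the constant atoms, it collapses to a one-variable monotone Boolean function.

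For part~(1), I would check using~\eqref{eq:defLambda} that every threshold appearing in the rewriting defining $\gv_{t_0}$ (namely $\varepsilon_j,\delta_j$ for $j\neq i$, $\varepsilon_i$, $\zeta$ for $l<a$, $\delta_i$ for $l>a$, and $t_0$ for $l=a$) lies in the $l$th row of $\bflambda$. By Definition~\ref{def:frpartition}, the sign $\sigma_l$ and the location of $|f_l|/\|f_l\|$ relative to each $\lambda_{l,k}$ are constant on any stratum $\pe_{\bfI,\sigma}$, so every atomic inequality defining $\gv_{t_0}$ is constant-valued on $\pe_{\bfI,\sigma}$; as the rewritten formula is a $\vee,\wedge$-combination of these atoms, it too is constant-valued there, and (1) follows.

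For part~(2), I would split $\gv_\tau=\gv_{\mathrm{out}}\cup\gv_{\mathrm{in}}^\tau$, where $\gv_{\mathrm{out}}:=\bigcup_{j\neq i}\gv_{\delta_j,\varepsilon_j}(f,\Phi)$ is independent of $\tau$ and $\gv_{\mathrm{in}}^\tau:=\gv^{\syd,a}_{\delta_i,\varepsilon_i,\zeta,\tau}(f,\Phi)$. If $\pe_{\bfI,\sigma}\subseteq\gv_{\mathrm{out}}$, then (ii) is immediate. Otherwise, let $k_0$ be the index with $\lambda_{a,k_0}=t_0$ and $\lambda_{a,k_0+1}=t_1$. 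The same constancy principle shows that on $\pe_{\bfI,\sigma}$, among the atoms of the formula defining $\gv_{\mathrm{in}}^\tau$, the only ones whose truth can depend on $\tau\in(t_0,t_1)$ or differ from the case $\tau=t_0$ are $(f_a\ge\tau\|f_a\|)$ and $(f_a\le-\tau\|f_a\|)$; and since $\sigma_a$ is fixed on the stratum, at most one of these is ever true. Monotonicity of the inner block then lets me freeze the remaining atoms and reduce $\gv_{\mathrm{in}}^\tau\cap\pe_{\bfI,\sigma}$ to a single-variable monotone Boolean expression in this distinguished atom $A_\tau$, which must be one of $\mathsf{true}$, $\mathsf{false}$, or $A_\tau$ itself.

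The three possibilities yield the three conclusions: if the reduction is $\mathsf{true}$, then $\pe_{\bfI,\sigma}\subseteq\gv_\tau$ for every $\tau\in(t_0,t_1)$, giving~(ii); if it is $A_\tau$, the atom must actually vary on the stratum, which by the constancy dichotomy forces $a\in I_{\Ione,k_0}$ (so $\pe_{\bfI,\sigma}\subseteq|f_a|^{-1}(t_0,t_1)$), and the surjective submersion $\hat f_{a,\bfI,\sigma}$ onto $(t_0,t_1)$ furnished by Theorem~\ref{thm:flambdapartition}(2) gives $\pe_{\bfI,\sigma}\cap\gv_\tau=\pe_{\bfI,\sigma}\cap\{x\in\bbS^n\mid|f_a(x)|\ge\tau\}$ as a proper nonempty subset, which is~(iii); finally, if the reduction is $\mathsf{false}$, then $\pe_{\bfI,\sigma}\cap\gv_\tau=\varnothing$, and combined with $\pe_{\bfI,\sigma}\subseteq\gv_{t_0}$ (not via an outer block) this forces the $\tau=t_0$ version of the atom to have been true while the $\tau>t_0$ version is false, pinning $|f_a|=t_0$ on the stratum and giving~(i). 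The main obstacle is the careful bookkeeping needed to ensure that on each stratum exactly one atom can vary, so that the monotonicity reduction really is one-variable; Theorem~\ref{thm:flambdapartition}(2) then contributes the surjectivity that is essential to case~(iii).
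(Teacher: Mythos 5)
Your overall strategy is genuinely different from the paper's. The paper first rewrites $\Phi$ in disjunctive normal form over \emph{saturated} basic formulas $\phi_\xi$, uses the decomposition~\eqref{eq:uniongvsets}, and proves three lemmas (Lemmas~\ref{lem:intersectionA}--\ref{lem:intersectionC}) characterizing how strata meet each basic block; the proposition is then read off from these. You instead bypass the DNF and argue directly on the monotone formula: freeze the $\tau$-independent atoms (which are constant on each stratum because their thresholds all appear in~$\bflambda$) and reduce the inner block to a one-variable monotone Boolean function $R$ of the single active atom $A_\tau$. Your treatment of part~(1) along these lines is fine, and this ``freeze-and-reduce'' idea is a sound and arguably more direct route to part~(2), \emph{if} the case analysis is done carefully.

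However, there is a genuine gap in your correspondence between the three possible values of the reduction and the three cases of part~(2). You assert that $R=\mathsf{true}$ gives~(ii), $R=A_\tau$ gives~(iii), and $R=\mathsf{false}$ gives~(i). This is not correct. Since the frozen atoms are $\tau$-independent, the Boolean function $R$ is the \emph{same} at $\tau=t_0$ as at $\tau\in(t_0,t_1)$. Given $\pe_{\bfI,\sigma}\subseteq\gv_{t_0}$ not via an outer block, we have $\pe_{\bfI,\sigma}\subseteq\gv_{\mathrm{in}}^{t_0}$, i.e.\ $R(A_{t_0})\equiv\mathsf{true}$ on the stratum. If $R\equiv\mathsf{false}$, this is already a contradiction, so the case ``$R=\mathsf{false}$'' is vacuous and cannot produce~(i). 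Conversely, the assertion ``if $R=A_\tau$, the atom must actually vary on the stratum'' is unjustified: $R$ being the identity as a formal Boolean function says nothing about whether the argument $A_\tau$ actually takes both values on~$\pe_{\bfI,\sigma}$. In fact, when $R=A_\tau$, all three conclusions~(i),~(ii),~(iii) can occur, depending on the position of $|f_a|/\|f_a\|$ on the stratum relative to $t_0$ and $t_1$: if $|f_a|/\|f_a\|\geq t_1$ then $A_\tau$ is constantly $\mathsf{true}$ for all $\tau\in(t_0,t_1)$, giving~(ii); if $|f_a|/\|f_a\|\in(t_0,t_1)$, i.e.\ $a\in I_{\Ione,k_0}$, then $A_\tau$ genuinely varies and Theorem~\ref{thm:flambdapartition}(2) furnishes~(iii); and if $|f_a|/\|f_a\|=t_0$ then $A_\tau$ is constantly $\mathsf{false}$ for $\tau>t_0$ while $A_{t_0}$ was $\mathsf{true}$, giving~(i). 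The fix is to split the case $R=A_\tau$ into these three sub-cases (using the row $\lambda_a$ of $\bflambda$ to determine the status of $|f_a|$), rather than trying to map the three values of~$R$ directly to the three conclusions.
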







\subsubsection{Homotopies preserving \texorpdfstring{$\Pe_{f,\bflambda}$}{Pe(f,lambda)}}


We are now going to construct the maps and homotopies to show that the inclusion 
$\iota:\gv_{t'}\rightarrow \gv_{t}$ is a homotopy equivalence. 
For this, we should construct a continuous map $\rho:\gv_{t}\rightarrow \gv_{t'}$ 
and homotopies between the compositions of these maps and the identity maps.

A first approach would be to move around the points of $\gv_t\setminus\gv_{t'}$ and 
then extend the maps obtained continuously to the whole space. 
It is easier though to work
in the larger space $\gv_{t_0}\cap |f_a|^{-1}(t_0,t_1)$, 
where we can control what happens at the boundary and thus obtain the continuous extensions.

Consider the open subset 
$\mcM := \bbS^n \setminus f_a^{-1}(0)$ of $\bbS^n$ together with the smooth map 
$\mcM\to\bbR,\, s\mapsto |f_a(x)|$, 
as well as the locally closed set 
$$
 \Omega:=\gv_{t_0}\cap |f_a|^{-1}(t_0,t_1) \subseteq \mcM .
$$
By Proposition~\ref{prop:strataunion}(1), 
$\Omega$ is the union of certain strata $\pe_{\bfI,\sigma}$ of $\Pe_{f,\bflambda}$,  
namely of those strata on which $|f_a|$ takes values in $(t_0,t_1)$. 
We note that the restriction of $|f_a|$,  
\[
\alpha:\Omega\rightarrow (t_0,t_1),\,
x\mapsto |f_a(x)| ,
\]
is a proper map. Indeed,  the inverse image 
$\alpha^{-1}(J) = \{x\in \gv_{t_0} \mid f_a(x) \in J \}$ 
of a compact subset $J\subseteq (t_0,t_1)$ is a closed subset of 
the compact set $\gv_{t_0}$ and thus compact itself.

By Theorem~\ref{thm:flambdapartition}, $\Pe_{f,\bflambda}$
restricts to a Whitney stratification of $\Omega$ 
and the map $\alpha$ satisfies the hypothesis of 
Theorem~\ref{thm:ThomMatherRetraction}. 
Therefore, there is a subset $F\subseteq \Omega$  
and a homeomorphism 
$h:\Omega\rightarrow F\times (t_0,t_1)$ such that 
the following diagram commutes
\[
  \begin{tikzcd}
  \Omega \arrow[r,"\quad h\quad"] \arrow[dr,"\alpha"']
  & F\times (t_0,t_1) \arrow[d,"\pi_{(t_0,t_1)}"]\\
  & (t_0,t_1).&
\end{tikzcd}
\]
Moreover, the stratum in which $x\in\Omega$ 
lies only depends on $h_F(x)$, that is, 
if $h_F(x)=h_F(y)$ then $x$ and $y$ belong to the 
same stratum of $\Pe_{f,\bflambda}$. 

Consider the following continuous (piecewise linear) map 
\begin{eqnarray*}
\upsilon:[0,1]\times[t_0,t_1]&\rightarrow& [t_0,t_1]\\
(s,y)&\mapsto&
\begin{cases}
y&\text{if }y\in [t',t_1],\\
(1-s)y+st'&\text{if }y\in [t,t'],\\
\frac{t_0+t}{2}+\left((1-s)+s\frac{2t'-t-t_0}{t-t_0}\right)\left(y-\frac{t_0+t}{2}\right)
&\text{if }y\in [(t_0+t)/2,t],\\
y&\text{if }y\in [t_0,(t_0+t)/2].
\end{cases}
\end{eqnarray*}
One easily verifies that this map restricts to a continuous retraction of $[t,t_1]$ onto $[t',t_1]$ 
that leaves fixed all points in a neighborhood of $\{t_0,t_1\}$. 
With the help of~$\upsilon$, one defines the continuous map
\begin{align*}
\psi:[0,1]\times \Omega&\rightarrow\Omega\\
(s,x)&\mapsto \begin{cases}
x,&\text{if }\alpha(x)\notin ((t_0+t)/2,(t_1+t')/2),\\
h^{-1}(h_F(x),\upsilon(s,\alpha(x))),&\text{otherwise.}
\end{cases}
\end{align*}
The properties of $\upsilon$ and $h$ imply that this map restricts to a
continuous retraction of $\alpha^{-1}[t,t_1)$ onto $\alpha^{-1}[t',t_1)$ 
that leaves fixed all points in a neighborhood of the boundary 
$\partial\Omega\cap \gv_{t_0}$
of $\Omega$ in $\gv_{t_0}$ 
(note $\partial\Omega 
\subseteq |f_a|^{-1}(\{t_0,t_1\})$).

We also have that $\psi(s,\pe_{\bfI,\sigma})\subseteq \pe_{\bfI,\sigma}$ for all $s\in[0,1]$,
provided $\pe_{\bfI,\sigma}\subseteq\Omega$.
This is so, because the value $h_F(x)$ determines the 
stratum to which $x$ belongs and moreover 
$h_F(\psi(s,x))=h_F(x)$.

Since $\psi$ fixes all points in a neighborhood of 
$\partial\Omega\cap \gv_{t_0}$, it can be 
extended to the continuous map 
$\Psi:[0,1]\times \gv_{t_0}
\rightarrow \gv_{t_0}$ 
given by
\[
   \Psi(s,x)=\begin{cases}
   \psi(s,x),&\text{if }x\in \Omega,\\
   x,&\text{otherwise}.
   \end{cases}
\]
As we are extending by the identity, all properties of $\psi$ are inherited by $\Psi$. In other words, $\Psi$ restricts to a
continuous retraction of $\gv_{t_0}\cap |f_a|^{-1}[t,\infty)=(\gv_{t_0}\setminus \Omega)\cup \alpha^{-1}[t,t_1)$ onto 
$\gv_{t_0}\cap|f_a|^{-1}[t',\infty)$ and it preserves the stratification $\Pe_{f,\bflambda}$, i.e., 
we have
$\Psi(s,\pe_{\bfI,\sigma})\subseteq \pe_{\bfI,\sigma}$,
for all $\pe_{\bfI,\sigma}\in\Pe_{f,\bflambda}$ contained in $\gv_{t_0}$ and all 
$s\in[0,1]$.

We are now ready to conclude. 
However, as a warning, we note that $\Psi$ does not give a continuous retraction of $\gv_t$ onto $\gv_{t'}$. 
The reason is that $\gv_{\tau}=\gv_{t_0}\cap |f_a|^{-1}[\tau,\infty)$ generally does not hold!

\begin{proof}[Proof of Proposition~\ref{thm:reductionGV2}]
We first show that for all $s\in[0,1]$,
$$
\Psi(s,\pe_{\bfI,\sigma} \cap \gv_t) \subseteq \pe_{\bfI,\sigma} \cap \gv_t 
\quad\mbox{and}\quad 
\Psi(s,\pe_{\bfI,\sigma} \cap \gv_{t'} ) \subseteq \pe_{\bfI,\sigma} \cap \gv_{t'} .
$$ 

By Proposition~\ref{prop:strataunion}(2), there are three possible cases 
for each of these intersections. We only focus on the third one, \textbf{(iii)}, 
since the other two cases are straightforward.  
In this case, we have 
$\pe_{\bfI,\sigma}\cap \gv_t =\pe_{\bfI,\sigma}\cap \{x\mid |f_a(x)|\geq t\}$ and 
$|f_a|(\pe_{\bfI,\sigma})\subseteq (t_0,t_1)$. 
Thus $\pe_{\bfI,\sigma}\subseteq \Omega$ and
\[
\pe_{\bfI,\sigma}\cap \gv_t=\pe_{\bfI,\sigma}\cap|f_a|^{-1}[t,\infty).
\]
Since this is the case, again by Proposition~\ref{prop:strataunion}(2), the same happens for $t'$ and so
\[
\pe_{\bfI,\sigma}\cap \gv_{t'}=\pe_{\bfI,\sigma}\cap|f_a|^{-1}[t',\infty).
\]
Since $\Psi$ gives a deformation retract of $\gv_{t_0}\cap\alpha^{-1}[t,t_1)$ onto $\gv_{t_0}\cap\alpha^{-1}[t',t_1)$,  
it preserves the stratification $\Pe_{f,\bflambda}$, 
and moreover $\Psi$ gives a continuous retraction of 
$\pe_{\bfI,\sigma}\cap|f_a|^{-1}[t,\infty)=\pe_{\bfI,\sigma}\cap(\gv_{t_0}\cap |f_a|^{-1}[t,\infty))$ onto $\pe_{\bfI,\sigma}\cap|f_a|^{-1}[t',\infty)$. 
Hence $\Psi$ must preserve $\pe_{\bfI,\sigma}\cap \gv_t$ and $\pe_{\bfI,\sigma}\cap \gv_{t'}$ 
and we have shown the claim. 

We conclude that 
$\Psi(s,\gv_{t})\subseteq \gv_{t}$ and $\Psi(s,\gv_{t'})\subseteq \gv_{t'}$ for all $s\in[0,1]$.

This allows us 
to restrict $\Psi$ to obtain continuous maps
$$
\Theta:[0,1]\times \gv_t \rightarrow \gv_t,\ (s,x) \mapsto \Psi(s,x)
$$
and
$$
\Theta':[0,1]\times \gv_{t'} \rightarrow \gv_{t'},\ (s,x) \mapsto \Psi(s,x).
$$
Let
$\rho:\gv_t\rightarrow 
\gv_{t'}$ be the continuous surjection given by
\[
   x\mapsto \Psi(1,x).
\]
By examining the three cases of Proposition~\ref{prop:strataunion}(2), we see 
that $\rho$ is well-defined. 
Recall that $\iota:\gv_{t'}\rightarrow \gv_{t}$ is the inclusion map.  
By construction, we have  
\[
   \Theta_0=\mathrm{id}_{\gv_t}\text{, } \quad
   \Theta_1=\rho=\iota\circ \rho\text{, }\quad
   \Theta'_0=\mathrm{id}_{\gv_{t'}}\quad\text{ and }
   \quad
   \Theta'_1=\rho\circ\iota.
\]
Hence, both $(\mathrm{id}_{\gv_t},\iota\circ \rho)$ and 
$(\mathrm{id}_{\gv_{t'}},\rho\circ\iota)$ are pairs of  
homotopic maps. Thus $\iota$ induces an homotopy equivalence as desired.
\end{proof}

\subsubsection{Proof of Proposition~\ref{prop:strataunion}}
\label{se:TS}

The way we prove this proposition is by reducing to the basic case. To do this, 
we write~$\Phi$ in the form
\[
  \Phi\equiv \bigvee_{\xi\in\Xi}\phi_\xi
\]
where $\Xi$ is a finite set
and each $\phi_\xi$ is saturated. 
By a \textit{saturated} formula over $f\in\Hd[q]$ we mean a purely conjunctive strict formula over $f$ 
of the form
\[\bigwedge_{i=1}^q(f_i\propto 0)\]
where $\propto\in\{<,=,>\}$, i.e., a purely conjunctive strict formula over $f$ in which 
all components of $f$ occur. 
We use the word 
`saturated' to emphasize that there are no 
polynomials 
left in $f$ to add to the formula.

This is possible by writing $\Phi$ in DNF 
and then splitting the clauses were some polynomial is missing by adding these 
missing polynomials with all possible sign conditions. This rewriting  
does not alter the sets under consideration and hence, neither does so with the 
validity of the statement we want to prove.

As we can take out unions in \eqref{eq:defGVugly}, we have 
\begin{equation}\label{eq:uniongvsets}
\gv_\tau=\gv^{\syd,i,a}_{\bfdelta,\bfeps,\zeta,\tau}(f,\Phi) 
=\bigcup_{\xi\in\Xi}\left(\gv^{\syd,a}_{\delta_i,\varepsilon_i,\zeta,\tau}(f,\phi_\xi)\cup\bigcup_{j\neq i}\gv_{\delta_j,\varepsilon_j}(f,\phi_\xi)\right).
\end{equation}
Hence it is enough to consider how the different strata intersect with the sets in the right hand side. 
This is done in Lemmas~\ref{lem:intersectionA},~\ref{lem:intersectionB}, 
and~\ref{lem:intersectionC} below. We recall that we assume $\|f_a\|=1$ without loss of generality.

Before enunciating the lemmas, we associate to each saturated 
formula $\phi$
a \emph{sign vector} $\sgn(\phi)\in\{-1,0,1\}^q$   
given by $\sgn_i(\phi)$ equal to $-1$, $0$ or 
$1$ depending on whether $\propto_i$ is $<$, $=$ 
or $>$, respectively. It is clear 
that for any $x\in\bbS^n$, $x\in \Ap(f,\phi)$ if and only if $\sgn(f(x))=\sgn(\phi)$. 
We endow $\{-1,0,1\}^q$ with a partial order: 
we say $\sigma \preccurlyeq \sigma'$ iff for all $i$,
$\sigma_i \ne 0$ implies $\sigma_i =\sigma'_i$.   

The first lemma deals with the $\gv$ blocks of the form $\gv_{\delta_j,\varepsilon_j}(f,\phi_\xi)$ with $j\neq i$, 
the second lemma with those of the form $\gv^{\syd,a}_{\delta_i,\varepsilon_i,\zeta,t_0}$, 
and the third lemma with those of the form $\gv^{\syd,a}_{\delta_i,\varepsilon_i,\zeta,\tau}$ with $\tau\in(t_0,t_1)$. 
Of these, the third lemma is the most delicate one, 
as in this case, the $\gv$ blocks do not decompose as a union of strata.

\begin{lem}\label{lem:intersectionA} 
Let $\phi$ be a saturated formula over $f$, 
let $j\ne i$ and put
$\delta:=\delta_j$,  
$\varepsilon:=\varepsilon_j$.
For every 
$\pe_{\bfI,\sigma}\in\Pe_{f,\bflambda}$ the following are equivalent:
\begin{enumerate}
\item[(0I1)] $\pe_{\bfI,\sigma}\cap \gv_{\delta,\varepsilon}(f,\phi)\neq \varnothing$.
\item[(0I2)] $\pe_{\bfI,\sigma}\subseteq \gv_{\delta,\varepsilon}(f,\phi)$.
\item[(0I3)] $\sgn(\phi)\preccurlyeq \sigma$ 
and for all $l\in [q]$,
\begin{empheq}[left = \empheqlbrace]{align*}
|f_l|/\|f_l\| \leq\varepsilon\text{ on }\pe_{\bfI,\sigma},
&\text{\quad if }\sgn_l(\phi)=0,\\
|f_l|/\|f_l\| \geq\delta\text{ on }\pe_{\bfI,\sigma},
&\text{\quad if }\sgn_l(\phi)\neq 0.
\end{empheq}
\end{enumerate}
\end{lem}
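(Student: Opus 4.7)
My plan is to reduce everything to one structural observation about the matrix $\bflambda$ defined in~\eqref{eq:defLambda}. Since $j\ne i$, the values $\varepsilon_j$ and $\delta_j$ appear as \emph{consecutive} entries in every row of $\bflambda$, both in the generic row (for $l\ne a$) and in the special row (for $l=a$, where only the entries indexed by $i$ differ). Consequently, on any stratum $\pe_{\bfI,\sigma}\in\Pe_{f,\bflambda}$ and for every $l\in[q]$, the function $|f_l|/\|f_l\|$ is constrained to a single piece of the $l$th row partition of $\bbR_{\ge 0}$, and in particular it satisfies \emph{either} $|f_l|/\|f_l\|\le \varepsilon$ uniformly on $\pe_{\bfI,\sigma}$, \emph{or} $|f_l|/\|f_l\|\ge \delta$ uniformly on $\pe_{\bfI,\sigma}$. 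Similarly, the sign $\sgn(f_l)$ is constant on $\pe_{\bfI,\sigma}$, equal to $\sigma_l$.

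With that observation in hand, (0I2) $\Rightarrow$ (0I1) is immediate since strata are nonempty, and (0I3) $\Rightarrow$ (0I2) is unpacking definitions: pick $x\in\pe_{\bfI,\sigma}$; for each $l$ with $\sgn_l(\phi)=0$ the GV rewriting demands $|f_l(x)|/\|f_l\|\le \varepsilon$, which is given, while for $l$ with $\sgn_l(\phi)\ne 0$ it demands $\sigma_l f_l(x)/\|f_l\|\ge \delta$ with the correct sign; this follows from $\sgn(\phi)\preccurlyeq \sigma$ (so $\sigma_l=\sgn_l(\phi)$, giving the right sign) combined with the hypothesis $|f_l|/\|f_l\|\ge \delta$.

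The content is the remaining implication (0I1) $\Rightarrow$ (0I3). Pick any $x\in \pe_{\bfI,\sigma}\cap \gv_{\delta,\varepsilon}(f,\phi)$. Because $\phi$ is saturated, the rewriting constrains every index $l\in[q]$. For $l$ with $\sgn_l(\phi)\ne 0$, the GV condition at $x$ reads $\sgn_l(\phi)\,f_l(x)/\|f_l\|\ge \delta>0$, forcing $\sgn(f_l(x))=\sgn_l(\phi)$; taking this over all such $l$ and using $\sigma_l=\sgn(f_l(x))$ gives $\sgn(\phi)\preccurlyeq \sigma$. The same inequality delivers $|f_l(x)|/\|f_l\|\ge \delta$ at the chosen point $x$. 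For indices $l$ with $\sgn_l(\phi)=0$ the GV rewriting says $|f_l(x)|/\|f_l\|\le \varepsilon$, again at the chosen point. Now invoke the key observation: both one-point inequalities extend to the whole stratum $\pe_{\bfI,\sigma}$, because no partition point of the $l$th row lies strictly between $\varepsilon$ and $\delta$, so a value $\le \varepsilon$ (resp.\ $\ge \delta$) at one point excludes the opposite alternative everywhere on the stratum. This gives the two uniform bounds in (0I3).

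The only obstacle I foresee is purely notational: making sure the ``consecutive-entry'' property of $\bflambda$ really holds \emph{uniformly} in $l$, including the anomalous row $l=a$ where the $i$th block $(\zeta,\delta_i)$ is replaced by $(t_0,t_1)$. But since the perturbation only affects the entries indexed by~$i$, and we are handling the case $j\ne i$, the pair $(\varepsilon_j,\delta_j)$ sits unchanged and consecutive in row $a$ as well, so the argument is genuinely uniform in $l\in[q]$ and the proof goes through.
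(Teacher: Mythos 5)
Your proof follows the same route as the paper's: (0I3)$\Rightarrow$(0I2)$\Rightarrow$(0I1) by unpacking definitions, and then (0I1)$\Rightarrow$(0I3) by picking a witness point and propagating the pointwise inequalities to the whole stratum. However, your stated ``key observation'' is false as written. You claim that on any stratum $\pe_{\bfI,\sigma}$ and for every $l$, either $|f_l|/\|f_l\|\le\varepsilon$ uniformly or $|f_l|/\|f_l\|\ge\delta$ uniformly. This omits the third possibility: since $\varepsilon_j$ and $\delta_j$ are consecutive entries of row $l$, the open interval $(\varepsilon_j,\delta_j)$ is itself a piece of the partition, and a stratum with $l\in I_{\Ione,k}$ at that level has $|f_l|/\|f_l\|\in(\varepsilon,\delta)$ throughout, satisfying neither alternative. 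Consequently, the later step ``a value $\le\varepsilon$ at one point excludes the opposite alternative everywhere'' only rules out $\ge\delta$; it does not by itself yield $\le\varepsilon$ everywhere, because the middle case was never excluded.

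The repair is exactly what the paper does, and it is both simpler and does not use consecutiveness at all: since $\varepsilon$ and $\delta$ are entries of $\bflambda$, each of the sets $\{|f_l|/\|f_l\|\le\varepsilon\}$ and $\{|f_l|/\|f_l\|\ge\delta\}$ is a union of partition pieces, so each of the corresponding conditions is constant on every stratum. Hence a pointwise inequality at the witness $x$ immediately propagates to all of $\pe_{\bfI,\sigma}$. With that substitution your argument is complete and coincides with the paper's; the consecutive-entry observation and the two-way dichotomy are red herrings (and the latter is not true).
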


\begin{proof}
The chain of implications from (0I3) to (0I2) to (0I1) follows directly from the 
definition of $\pe_{\bfI,\sigma}$. Therefore we only show that (0I1) implies (0I3).

Let $x\in \pe_{\bfI,\sigma}\cap\gv_{\delta,\varepsilon}(f,\phi)$. For each 
$l\in [q]$, we distinguish three cases:
\begin{enumerate}

\item[+)]If $\sgn_l(\phi)=1$, then $x\in \gv_{\delta,\varepsilon}(f,\phi)$ implies
$f_l(x)/\|f_l\| \geq \delta$. Therefore, $\sigma_l=\sgn(f_l(x))=1\succcurlyeq 1=\sgn_l(\phi)$ and 
$|f_l|/\|f_l\| \geq \delta$ on $\pe_{\bfI,\sigma}$. The latter because $\delta$ appears in in $\lambda$,  
and so either all $x\in\pe_{\bfI,\sigma}$ satisfy $|f_l(x)|/\|f_l\|\geq \delta$ or none of them does.

\item[$-$)]If $\sgn_l(\phi)=-1$, the argument is analogous to that of the case $\sgn_l(\phi)=1$.

\item[0)] If $\sgn_l(\phi)=0$, then 
$x\in \gv_{\delta,\varepsilon}(f,\phi)$ implies 
$|f_l(x)|/\|f_l\|\leq \varepsilon$. This, in turn, 
implies $|f_l|/\|f_l\|\leq \varepsilon$ on $\pe_{\bfI,\sigma}$, since $\varepsilon$ appears in $\lambda$,  
and so either all $x\in \pe_{\bfI,\sigma}$ satisfy this or none does. 
Also $0\preccurlyeq 0,+1,-1$, and so $\sgn(f_l(x))\preccurlyeq \sigma_l$.
\end{enumerate}
\end{proof}

\begin{lem}\label{lem:intersectionB} 
Let $\phi$ be a saturated formula over $f$. For every 
$\pe_{\bfI,\sigma}\in\Pe_{f,\bflambda}$, the following are equivalent:
\begin{enumerate}
\item[(1I1)] $\pe_{\bfI,\sigma}\cap \gv_{\delta_i,\varepsilon_i,\zeta,t_0}^{\syd,a}(f,\phi)\neq \varnothing$.
\item[(1I2)] $\pe_{\bfI,\sigma}\subseteq \gv_{\delta_i,\varepsilon_i,\zeta,t_0}^{\syd,a}(f,\phi)$.
\item[(1I3)] $\sgn(\phi)\preccurlyeq \sigma$ and, for all $l\in [q]$,
\begin{empheq}[left = \empheqlbrace]{align*}
|f_l|/\|f_l\| \leq\varepsilon_i\text{ on }\pe_{\bfI,\sigma},
&\text{\quad if }\sgn_l(\phi)=0,\\
|f_l|/\|f_l\| \geq\delta_i\text{ on }\pe_{\bfI,\sigma},
&\text{\quad if }\sgn_l(\phi)\neq 0\text{ and }l>a,\\
|f_l|/\|f_l\| \geq t_0\text{ on }\pe_{\bfI,\sigma},
&\text{\quad if }\sgn_l(\phi)\neq 0\text{ and }l=a,\\
|f_l|/\|f_l\| \geq\zeta\text{ on }\pe_{\bfI,\sigma},
&\text{\quad if }\sgn_l(\phi)\neq 0\text{ and }l<a.
\end{empheq}
\end{enumerate}
\end{lem}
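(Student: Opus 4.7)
The plan is to prove the implications (1I3) $\Rightarrow$ (1I2) $\Rightarrow$ (1I1) $\Rightarrow$ (1I3) in parallel to the proof of Lemma~\ref{lem:intersectionA}, with the only difference being that the threshold used on the inequality~$f_l > 0$ (or $f_l < 0$) depends on whether $l > a$, $l = a$ or $l < a$. The first two implications are immediate from the definitions: (1I3) spells out that every $x \in \pe_{\bfI,\sigma}$ satisfies the sign and magnitude conditions dictated by $\gv_{\delta_i,\varepsilon_i,\zeta,t_0}^{\syd,a}(f,\phi)$, and (1I2) trivially forces a nonempty intersection since the strata of $\Pe_{f,\bflambda}$ are nonempty by convention.

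For the nontrivial direction (1I1) $\Rightarrow$ (1I3), pick $x \in \pe_{\bfI,\sigma}\cap \gv_{\delta_i,\varepsilon_i,\zeta,t_0}^{\syd,a}(f,\phi)$ and, for each $l \in [q]$, distinguish the three cases $\sgn_l(\phi) \in \{+1,-1,0\}$. If $\sgn_l(\phi)=+1$, then membership of $x$ in the modified block forces $f_l(x)/\|f_l\| \geq \delta_i$, $f_a(x)/\|f_a\| \geq t_0$, or $f_l(x)/\|f_l\| \geq \zeta$ according as $l>a$, $l=a$ or $l<a$; in particular $\sigma_l = \sgn(f_l(x)) = +1 = \sgn_l(\phi)$. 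The case $\sgn_l(\phi) = -1$ is symmetric. If $\sgn_l(\phi) = 0$, then $|f_l(x)|/\|f_l\| \leq \varepsilon_i$ and therefore $0 = \sgn_l(\phi) \preccurlyeq \sigma_l$ regardless of the value of $\sigma_l$. Together, these establish $\sgn(\phi) \preccurlyeq \sigma$.

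To promote each magnitude inequality from the single point $x$ to the entire stratum $\pe_{\bfI,\sigma}$, we invoke the construction of $\bflambda$ in~\eqref{eq:defLambda}: by design, each of the thresholds $\varepsilon_i, \delta_i, \zeta$ appears in the row $\lambda_l$ for every $l \neq a$, while $\varepsilon_i, t_0, t_1$ appear in the row $\lambda_a$. Consequently, for each of the strict/lax inequalities relevant to case $l$, the threshold is a grid value in the corresponding row, so the indicator of the inequality ``$|f_l|/\|f_l\| \geq \tau$'' (or ``$\leq \tau$'') is constant on each stratum of $\Pe_{f,\bflambda}$. Hence the inequality satisfied by $x$ is satisfied throughout $\pe_{\bfI,\sigma}$, yielding (1I3).

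The main obstacle is purely notational bookkeeping: ensuring that every threshold appearing in the definition of $\gv_{\delta_i,\varepsilon_i,\zeta,t_0}^{\syd,a}(f,\phi)$ is indeed one of the grid values $\lambda_{l,k}$, which is exactly what the piecewise definition of $\bflambda$ in~\eqref{eq:defLambda} was set up to guarantee. With this observation the proof reduces entirely to case analysis, as in Lemma~\ref{lem:intersectionA}.
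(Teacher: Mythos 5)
Your proof is correct and follows the same route as the paper, which states only that the argument is ``analogous to Lemma~\ref{lem:intersectionA}, but longer as we must now divide into cases depending not only on $\sgn_l(\phi)$ but also on whether $l>a$, $l=a$ or $l<a$.'' You have spelled out precisely that case analysis, correctly identifying the key structural point---that $\varepsilon_i,\zeta,\delta_i$ occur in $\lambda_l$ for $l\ne a$ while $\varepsilon_i,t_0$ occur in $\lambda_a$, so every threshold used in $\gv_{\delta_i,\varepsilon_i,\zeta,t_0}^{\syd,a}(f,\phi)$ is a grid value in the appropriate row of $\bflambda$, which is what lets the pointwise inequality at $x$ propagate to all of $\pe_{\bfI,\sigma}$.
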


\begin{proof}
The proof is analogous to that of Lemma~\ref{lem:intersectionA}, but longer as we 
must now divide into cases depending not only on 
$\sgn_l(\phi)$ but also on whether $l>a$, $l=a$ or $l<a$.
\end{proof}

\begin{lem}\label{lem:intersectionC} 
Let $\phi$ be a saturated formula over $f$ and $s\in(t_0,t_1)$. For every 
$\pe_{\bfI,\sigma}\in\Pe_{f,\bflambda}$ the following are equivalent:
\begin{enumerate}
\item[(2I1)] $\pe_{\bfI,\sigma}\cap \gv_{\delta_i,\varepsilon_i,\zeta,s}^{\syd,a}(f,\phi)\neq \varnothing$.
\item[(2I2)] $\sgn(\phi)\preccurlyeq \sigma$ and for all $l\in [q]$,
\begin{empheq}[left = \empheqlbrace]{align*}
|f_l|/\|f_l\| \leq\varepsilon_i\text{ on }\pe_{\bfI,\sigma},
&\text{\quad if }\sgn_l(\phi)=0,\\
|f_l|/\|f_l\| \geq\delta_i\text{ on }\pe_{\bfI,\sigma},
&\text{\quad if }\sgn_l(\phi)\neq 0\text{ and }l>a,\\
|f_l|/\|f_l\| > t_0\text{ on }\pe_{\bfI,\sigma},
&\text{\quad if }\sgn_l(\phi)\neq 0
\text{ and }l=a,\\
|f_l|/\|f_l\| \geq\zeta\text{ on }\pe_{\bfI,\sigma},
&\text{\quad if }\sgn_l(\phi)\neq 0\text{ and }l<a.
\end{empheq}
\end{enumerate}
Additionally, if any of the two claims above holds,
\begin{equation}\label{eq:intersectionC1}
\pe_{\bfI,\sigma}\cap \gv_{\delta_i,\varepsilon_i,\zeta,\tau}^{\syd,a}(f,\phi)=\begin{cases}
\pe_{\bfI,\sigma}\cap\{x\in\bbS^n\mid |f_a(x)|\geq \tau\},&
\text{if } |f_a|(t_0,t_1)\subseteq\pe_{\bfI,\sigma}, \\
\pe_{\bfI,\sigma},&\text{otherwise.}

\end{cases}
\end{equation}
\end{lem}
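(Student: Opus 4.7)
The plan is to adapt the case-by-case analysis of Lemmas~\ref{lem:intersectionA} and~\ref{lem:intersectionB}, with one crucial difference: the threshold $s$ for the index $l=a$ is \emph{not} among the grid values comprising the row $\lambda_a$ of $\bflambda$. Hence, unlike in the previous two lemmas, the constraint $f_a\cdot\sgn_a(\phi)\geq s\|f_a\|$ may genuinely cut the stratum $\pe_{\bfI,\sigma}$. This is precisely what forces (2I2) to involve the adjacent grid value $t_0$ with a strict inequality, and what motivates the secondary identity~\eqref{eq:intersectionC1}.

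For the implication (2I1)$\Rightarrow$(2I2), I would proceed exactly as in the proof of Lemma~\ref{lem:intersectionB}. For each $l\in[q]$, the relevant threshold among $\varepsilon_i$, $\delta_i$, $\zeta$ is a grid value of $\lambda_l$, so that any inequality of $|f_l|/\|f_l\|$ against it holding at a single point of the stratum propagates to the whole stratum by the definition of the $(f,\bflambda)$-partition, and the sign conditions $\sgn_l(\phi)\preccurlyeq\sigma_l$ are read off in the same way. For $l=a$ with $\sgn_a(\phi)\neq 0$, any $x\in\gv^{\syd,a}_{\delta_i,\varepsilon_i,\zeta,s}(f,\phi)\cap\pe_{\bfI,\sigma}$ satisfies $|f_a(x)|/\|f_a\|\geq s>t_0$; since $t_0$ is a grid value of $\lambda_a$, the strict inequality $|f_a|/\|f_a\|>t_0$ must then hold on the entire stratum, which is exactly the condition recorded in (2I2).

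For the reverse direction (2I2)$\Rightarrow$(2I1) and the identity~\eqref{eq:intersectionC1}, I would first observe that under (2I2) every $\gv$-constraint already holds on all of $\pe_{\bfI,\sigma}$, with the sole possible exception of the constraint for $l=a$ when $\sgn_a(\phi)\neq 0$, which, once the sign of $f_a$ has been identified as $\sgn_a(\phi)$ on the stratum, reduces to $|f_a(x)|/\|f_a\|\geq s$. I then split cases according to whether $\pe_{\bfI,\sigma}\subseteq|f_a|^{-1}(t_0,t_1)$. If it does, Theorem~\ref{thm:flambdapartition}(2) supplies the surjective submersion $\hat f_{a,\bfI,\sigma}\colon\pe_{\bfI,\sigma}\to(t_0,t_1)$, from which both the existence of a point in the intersection and the equality $\pe_{\bfI,\sigma}\cap\gv^{\syd,a}_{\delta_i,\varepsilon_i,\zeta,\tau}=\pe_{\bfI,\sigma}\cap\{|f_a|\geq\tau\}$ follow immediately. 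Otherwise, the value $|f_a|/\|f_a\|$ on the stratum lies in a single piece of the $\lambda_a$-grid disjoint from $(t_0,t_1)$; combined with the $l=a$ clause of (2I2) this forces $|f_a|/\|f_a\|\geq t_1>s$ when $\sgn_a(\phi)\neq 0$ (and trivially $|f_a|/\|f_a\|\leq\varepsilon_i$ when $\sgn_a(\phi)=0$), so the $l=a$ constraint is met throughout $\pe_{\bfI,\sigma}$ and $\pe_{\bfI,\sigma}\subseteq\gv^{\syd,a}_{\delta_i,\varepsilon_i,\zeta,\tau}$, matching the ``otherwise'' branch of~\eqref{eq:intersectionC1}.

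The genuinely new ingredient compared with Lemmas~\ref{lem:intersectionA}--\ref{lem:intersectionB}, and the one place I expect real work, is the invocation of the surjectivity part of Theorem~\ref{thm:flambdapartition}(2) in the ``cutting'' case above; once that is granted, the remainder is careful but routine bookkeeping on sign conditions and on the fact that all thresholds other than $s$ are grid values of $\bflambda$ and thus propagate across strata.
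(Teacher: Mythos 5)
Your proposal is correct and follows essentially the same route as the paper: you prove (2I1)$\Rightarrow$(2I2) by the grid-propagation argument of Lemmas~\ref{lem:intersectionA} and~\ref{lem:intersectionB} (with the extra observation that $s$ is not a grid value while $t_0$ is), observe that under (2I2) only the $l=a$ constraint can cut the stratum, and then split on whether $\pe_{\bfI,\sigma}\subseteq|f_a|^{-1}(t_0,t_1)$, invoking Theorem~\ref{thm:flambdapartition}(2) in the contained case. If anything you are slightly more careful than the paper in separating out the trivial subcase $\sgn_a(\phi)=0$.
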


\begin{proof}
The implication from (2I1) to (2I2) is shown in a similar way as those  
from (0I1) to (0I3) in Lemma~\ref{lem:intersectionA} and from (1I1) to (1I3) 
in Lemma~\ref{lem:intersectionB}. We next prove the reverse implication. 

Assume then that (2I2) holds. From the conditions there and the definition of both 
$\pe_{\bfI,\sigma}$ and $\gv_{\delta_i,\varepsilon_i,\zeta,s}^{\syd,a}(f,\phi)$, it follows that
\begin{equation}\label{eq:comp1}
 \pe_{\bfI,\sigma}\cap \gv_{\delta_i,\varepsilon_i,\zeta,s}^{1,a}(f,\phi)
 =\pe_{\bfI,\sigma}\cap \{x\in\bbS^n\mid |f_a(x)|\geq s\}.
\end{equation}

We next divide in cases depending on whether 
$\pe_{\bfI,\sigma}\subseteq |f_a|^{-1}(t_0,t_1)$ or not.
\begin{enumerate}
\item[$\nsubseteq$)]If $|f_a|(t_0,t_1)\not\subseteq\pe_{\bfI,\sigma}$, then
$|f_a|\geq t_1$ on $\pe_{\bfI,\sigma}$, by (2I2), since $t_1$ is the next value in $\lambda_a$.  This shows that 
\begin{equation}\label{eq:comp2}
   \pe_{\bfI,\sigma}\cap \{x\in\bbS^n\mid |f_a(x)|\geq s\}=\pe_{\bfI,\sigma}.
\end{equation}
As  $\pe_{\bfI,\sigma}$ is non-empty, (2I1) follows from~\eqref{eq:comp1} 
and~\eqref{eq:comp2}.

\item[$\subseteq$)]If, instead, $|f_a|(t_0,t_1) \subseteq \pe_{\bfI,\sigma}$ then,
by Theorem~\ref{thm:flambdapartition}(2), the map
\[
  \pe_{\bfI,\sigma}\rightarrow (t_0,t_1),\, x\mapsto |f_{a}(x)|\|
\]
is surjective. Hence $\pe_{\bfI,\sigma}\cap \{x\in\bbS^n\mid |f_a(x)|\geq \tau \}$ 
is non-empty and (2I1) also follows in this case.
\end{enumerate}
We have proved~\eqref{eq:intersectionC1} in passing.
\end{proof}

Now we finish the proof of Proposition~\ref{prop:strataunion} with the help of the above three lemmas.

\begin{proof}[Proof of Proposition~\ref{prop:strataunion}]
Part~(1) follows directly from Lemmas~\ref{lem:intersectionA} and~\ref{lem:intersectionB}  
since these lemmas guarantee that each set in the right-hand side 
of~\eqref{eq:uniongvsets} is a union of strata.

We now show part~(2). Consider the intersections of 
$\pe_{\bfI,\sigma}$ with the decomposition~\eqref{eq:uniongvsets} for 
$\gv_\tau$ and $\gv_{t_0}$.

If for some $j\neq i$ and $\xi\in \Xi$ we have  
$\pe_{\bfI,\sigma}\cap\gv_{\delta_j,\varepsilon_j}(f,\phi_\xi)\neq\varnothing$,
then this intersection equals $\pe_{\bfI,\sigma}$ by Lemma~\ref{lem:intersectionA} 
and all the claims of~(2) 
hold trivially since $\pe_{\bfI,\sigma}\cap\gv_{\delta_j,\varepsilon_j}(f,\phi_\xi)$ does not depend on the value of $\tau$. 

Assume instead that for all $j\neq i$ and $\xi\in \Xi$ we have  
$\pe_{\bfI,\sigma}\cap\gv_{\delta_j,\varepsilon_j}(f,\phi_\xi)=\varnothing$.
Then 
\[
   \pe_{\bfI,\sigma}\cap \gv_{\tau}
   =\bigcup_{\xi\in\Xi}
   \pe_{\bfI,\sigma}\cap\gv^{\syd,a}_{\delta_i,\varepsilon_i,\zeta,\tau}(f,\phi_\xi)
\] 
and
\[
   \pe_{\bfI,\sigma}\cap \gv_{t_0}
   =\bigcup_{\xi\in\Xi}
   \pe_{\bfI,\sigma}\cap\gv^{\syd,a}_{\delta_i,\varepsilon_i,\zeta,t_0}(f,\phi_\xi).
\]

By hypothesis on $\pe_{\bfI,\sigma}$, we have 
$\pe_{\bfI,\sigma}\cap\gv_{t_0}=\pe_{\bfI,\sigma}
\neq \varnothing$ 
which implies that there exists $\xi\in\Xi$ such that $\pe_{\bfI,\sigma}\cap\gv^{\syd,a}_{\delta_i,\varepsilon_i,\zeta,t_0}(f,\phi_\xi)\neq\varnothing$.
Lemma~\ref{lem:intersectionB} then ensures that the conditions 
in~(1I3) hold true. But these conditions are the same as those in 
Lemma~\ref{lem:intersectionC}(2l2) except for $l=a$, where the inequality is strict in the latter and lax in the proper.
This means that $\pe_{\bfI,\sigma}\cap\gv^{\syd,i,a}_{\bfdelta,\bfeps,\zeta,\tau}(f,\Phi)=\varnothing$  
if and only if $|f_a|=t_0$ on $\pe_{\bfI,\sigma}$.
Furthermore, this latter condition is independent of the particular value of $\tau$.  
If it holds for $\tau$, then it holds for $\tau'$ and viceversa. This proves the first claim of~(2).

Arguing as above, we have that 
$\pe_{\bfI,\sigma}\cap\gv^{\syd,i,a}_{\bfdelta,\bfeps,\zeta,s}(f,\Phi)=\pe_{\bfI,\sigma}$ 
if and only if $|f_a|\geq t_1$ on~$\pe_{\bfI,\sigma}$.  
As this does not depend on the value of $\tau$, we get the second claim of~(2).

The third claim of~(2) follows directly from the last statement of Lemma~\ref{lem:intersectionC}.
\end{proof}

\section{Sampling theory for Gabrielov-Vorobjov approximations}
\label{sec:gamma}

In this section we prove the remaining stepping stones we introduced in 
the overview (Section~\ref{sec:overview}) namely, 
Theorems~\ref{theo:homotopywitness},~\ref{theo:homologywitness}, and~\ref{thm:samp}. 
The core of these proofs relies on the fact that the $\gamma$ invariant of Smale 
(see~\S\ref{sec:NSW} below) at points in $\bbS^n$ does not change much when we replace the 
homogeneous polynomials in $f$ by the perturbations $\fbar$. Otherwise, these three 
results are extensions of similar results in \cite{BCTC1} 
and most of the arguments we use are exactly those we used there to 
prove the simpler versions. We will therefore be concise and, in most places in what follows, 
limit our exposition to the general lines, omitting the details. 

\subsection{Semialgebraic sets from \texorpdfstring{$(f,t)$}{(f,t)} and related results}
\label{se:sample-rel}


Recall from Definition~\ref{def:laxF} the notion of a lax formula
over~$(f,t)$ and of its associated spherical sets $\Ap(f,t,\Phi)$.
We define now algebraic neighborhoods of the sets $\Ap(f,t,\Phi)$ 
by relaxing the inequalities.

\begin{defi}\label{def:AN} 
The \emph{algebraic neighborhood $\Ap_r(f,t,\Phi)$ of $\Ap(f,t,\Phi)$ with tolerance $r>0$} 
is the spherical semialgebraic set defined by replacing the atoms $f_i\geq t_j\|f_i\|$ 
by $f_i\geq (t_j-r)\|f_i\|$ and the atoms $f_i\leq t_j\|f_i\|$ by $f_i\leq (t_j+r)\|f_i\|$. 
By using strict inequalities, we define the 
\emph{open algebraic neighborhood $\Ap_r^\circ(f,t,\Phi)$ of $\Ap(f,t,\Phi)$ 
with tolerance $r$}. 
\end{defi}
\begin{remark}
We use the term ``algebraic neighborhood'' instead of the arguably more correct term ``semialgebraic neighborhood'' for the sake of conciseness and for consistency with~\cite{BCTC1}.
\end{remark}

The following three quantitative results about algebraic neighborhoods generalize, respectively, 
Proposition~4.17 and Theorem~4.19 
in~\cite{BCL17}, and Theorem~2.7 in~\cite{BCTC1}. 
In contrast with these result, 
the {\em separation} $\delta(t):=\inf_{i\neq j}|t_i-t_j|$ of $t$ 
enters here as a new parameter. 
Note that the sequence $t$ defined~\eqref{eq:seq_t} 
has the separation 
\begin{equation}\label{eq:delta(t)}
\delta(t)= \big(15(2m+1)D^2\sfK^2 \big)^{-1}.
\end{equation}

Similarly to~\eqref{eq:U}, we denote by $\mcU_{\bbS}(X,r)$ 
the {\em open (spherical) $r$-neighborhood} of a subset~$X$ of~$\bbS^n$,  
which is defined with respect to angular distance. 
Clearly, $\mcU_{\bbS}(X,r)\subseteq  \mcU(X,r)$.

\begin{prop}\label{prop:exclusion}
Let $f\in\Hd[q]$, $t\in\bbR^e$ and $r>0$. Then, for every lax formula $\Phi$ over~$(f,t)$, 
we have 
\begin{equation*}\label{eq:Prop_c}
 \mcU_{\bbS}(\Ap(f,t,\Phi),r)\subseteq \Ap_{D^{1/2}r}^\circ(f,t,\Phi) .
\end{equation*}
\end{prop}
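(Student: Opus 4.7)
The plan is to reduce the claim to a pointwise Lipschitz estimate on each atom and then exploit the monotonicity of $\Phi$.

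\textbf{Step 1 (Lipschitz estimate on atoms).} The essential ingredient is the standard Bernstein-Markov-type bound for the Weyl norm: for any homogeneous polynomial $g\in\bbR[X_0,\ldots,X_n]$ of degree $d\leq D$, the map $x\mapsto g(x)/\|g\|$, restricted to $\bbS^n$, is $\sqrt{D}$-Lipschitz with respect to the angular distance $d_{\bbS}$ on the sphere. This is the same estimate used in~\cite[Proposition~4.17]{BCL17} and is where the factor $D^{1/2}$ in the conclusion comes from. Applying it to each $f_i$ gives, for all $x,y\in\bbS^n$,
\[
  |f_i(x)-f_i(y)| \ \leq\ \sqrt{D}\,\|f_i\|\, d_{\bbS}(x,y) .
\]

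\textbf{Step 2 (pointwise argument on a single atom).} Let $x\in\mcU_{\bbS}(\Ap(f,t,\Phi),r)$ and pick $y\in\Ap(f,t,\Phi)$ with $d_{\bbS}(x,y)<r$. For any atom $(f_i\geq t_j\|f_i\|)$ appearing in $\Phi$ that is satisfied at $y$, Step~1 yields
\[
  f_i(x) \ \geq\ f_i(y)-\sqrt{D}\,\|f_i\|\,d_{\bbS}(x,y) \ >\ t_j\|f_i\|-\sqrt{D}\,r\,\|f_i\| \ =\ (t_j-\sqrt{D}\,r)\|f_i\|,
\]
so $x$ satisfies the strictly relaxed atom $(f_i > (t_j-\sqrt{D}r)\|f_i\|)$ used in the definition of $\Ap^\circ_{D^{1/2}r}(f,t,\Phi)$. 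Strict inequality is preserved because the Lipschitz bound is applied to a strict bound $d_{\bbS}(x,y)<r$. The symmetric argument works for atoms of type $(f_i\leq t_j\|f_i\|)$.

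\textbf{Step 3 (monotonicity).} Since $\Phi$ is a lax formula, in the sense of Definition~\ref{def:laxF}, it is monotone: it is built only with $\vee$ and $\wedge$ from the atoms. Consequently, the truth value of $\Phi$ at a point is a monotone Boolean function of the truth values of its atoms. By Step~2, every atom of $\Phi$ that is true at $y$ has its (strictly) relaxed version true at $x$; by monotonicity, the strictly relaxed formula obtained from $\Phi$ by replacing each atom by its strictly relaxed counterpart is also true at $x$. This is precisely the condition $x\in\Ap^\circ_{D^{1/2}r}(f,t,\Phi)$, proving the inclusion. No step is a serious obstacle; the only point requiring care is checking that the open/strict nature of $\mcU_{\bbS}$ matches the open/strict nature of $\Ap^\circ$, which is handled by keeping track of strict versus lax inequalities throughout Step~2.
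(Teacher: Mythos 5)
Your proof is correct and takes essentially the same approach as the one the paper invokes (by reference to \cite[Proposition~4.17]{BCL17}): a $\sqrt{D}$-Lipschitz bound for $x\mapsto f_i(x)/\|f_i\|$ on $\bbS^n$ with respect to angular distance, applied atom-by-atom and then propagated through the formula using monotonicity of $\Phi$, with the strictness in the conclusion coming from the strictness of $d_{\bbS}(x,y)<r$.

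One small remark on the citation in your Step~1: Proposition~4.17 of \cite{BCL17} is the statement that the present Proposition~\ref{prop:exclusion} generalizes, not the source of the Lipschitz estimate itself; the latter is the standard derivative bound for the Weyl norm (of the type appearing in Lemma~3.5 of that reference or \cite[\S16.1]{Condition}). Also, the sharp Lipschitz constant for $f_i$ is $\sqrt{d_i}$, so you are implicitly using $d_i\le D$; this is harmless here since it yields a weaker, still valid, conclusion, and the final bound $D^{1/2}r$ indeed uses the uniform constant.
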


\begin{prop}\label{prop:algebraicvsmetricnhood}
Let $f\in\Hd[q]$ and $T,r>0$ be such that $13\,D^{3/2}\kappabar(f)^2 (r+T)<1$. 
Then, for all $t\in(-T,T)^e$ satisfying $\delta(t)>2r$ and every purely conjunctive 
lax formula $\phi$ over $(f,t)$, we have 
\begin{equation*}
 \Ap_r^\circ(f,t,\phi)\subseteq \mcU_{\bbS}(\Ap(f,t,\phi),3\,\kappabar(f)r).
\end{equation*} 
\end{prop}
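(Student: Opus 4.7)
The plan is to adapt the proof of~\cite[Theorem~2.7]{BCTC1} to the shifted setting, treating the perturbations $t_j\|f_i\|$ as small inhomogeneities. Fix $x \in \Ap_r^\circ(f,t,\phi)$ and write $\phi = \bigwedge_s (f_{i_s} \propto_s t_{j_s}\|f_{i_s}\|)$. Call an atom \emph{active at $x$} when it is satisfied by $x$ only within slack $r\|f_{i_s}\|$, that is, $\bigl|f_{i_s}(x) - t_{j_s}\|f_{i_s}\|\bigr| \le r \|f_{i_s}\|$. The separation hypothesis $\delta(t) > 2r$ prevents two atoms on the same polynomial $f_i$ (one upper, one lower) from being simultaneously active, since that would force $|t_{j_s} - t_{j_{s'}}| \le 2r$. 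Hence the active atoms index a subset $I \subseteq [q]$ together with a choice of target $t_{j(i)}$ and sign $\propto_i$ for each $i \in I$, and the \emph{active system} $g_i(y) := f_i(y) - t_{j(i)}\|f_i\| = 0$ ($i \in I$) is a well-defined polynomial system whose Jacobian at every point coincides with that of the subtuple $(f_i)_{i \in I}$.

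The next step is to project $x$ onto the spherical variety $Z := \{ y \in \bbS^n : g_i(y) = 0 \text{ for all } i \in I\}$ by a Newton flow. Since $g_i$ and $f_i$ differ by a constant, Smale's $\gamma$-invariant of the normalized system $(g_i/\|f_i\|)_{i \in I}$ coincides with that of $(f_i/\|f_i\|)_{i \in I}$; by the Higher Derivative Estimate~\cite{Bez4,Condition} this $\gamma$ is bounded by a universal constant times $D^{3/2}\kappabar(f)$ in a neighborhood of the spherical zero set of $f$. The residual $\max_{i \in I}|g_i(x)|/\|f_i\| \le r$, together with the hypothesis $13\, D^{3/2}\kappabar(f)^2(r+T) < 1$, places $x$ in Smale's approximate-zero regime for the active system, so the Newton iteration converges to some $y \in Z$ at spherical distance at most $3\kappabar(f)r$ from $x$. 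The factor $\kappabar(f)$ arises as the controlled operator norm of the Moore--Penrose pseudo-inverse of the normalized Jacobian of the active system, as in~\cite[\S3]{BCTC1}, and the coefficient $3$ absorbs the Newton refinement overhead.

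The third step verifies that $y \in \Ap(f,t,\phi)$. Each active atom becomes an equality at $y$ by construction, hence the corresponding lax inequality holds. For a non-active atom $f_i \propto t_j\|f_i\|$ of $\phi$, the combination of non-activity with the relaxed inequality satisfied at $x$ forces $f_i(x) - t_j\|f_i\|$ to exceed $r\|f_i\|$ in absolute value with the correct sign; the change $|f_i(y) - f_i(x)|/\|f_i\|$ under the Newton projection can be absorbed into this slack using the standard Lipschitz bound for homogeneous polynomials on $\bbS^n$ together with the bound on the moved distance and the smallness of $D^{3/2}\kappabar(f)^2(r+T)$. The main obstacle is the delicate bookkeeping of constants: one must simultaneously enforce Newton convergence, cap the moved distance at $3\kappabar(f)r$, and keep the Lipschitz slack of every non-active atom positive, and all three requirements must be absorbed into the single clean threshold appearing in the statement. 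The analogous computation in~\cite[Theorem~2.7]{BCTC1} provides the template; the only new ingredient is the uniform handling of the shifts $t_j$, made possible by the separation condition $\delta(t) > 2r$ and the bound $T < 1/(13\,D^{3/2}\kappabar(f)^2)$.
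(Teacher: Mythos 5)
The paper does not give a self-contained proof of this proposition; it refers to BCL17 Proposition~4.17 and only explains how the two new parameters ($T$ and the separation $\delta(t)$) modify the argument. Your sketch reconstructs a plausible Newton-projection scheme, but the third step has a genuine gap that the hypothesis you invoke does not fix.

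Concretely, in your step three you argue that a non-active atom keeps its sign at the projected point $y$ because its slack at $x$ exceeds $r\|f_i\|$ while the Lipschitz change of $f_i/\|f_i\|$ over the moved distance is absorbed ``using \dots the smallness of $D^{3/2}\kappabar(f)^2(r+T)$.'' But your own bound on the moved distance is $3\kappabar(f)r$, and the best Lipschitz constant for $f_i/\|f_i\|$ on $\bbS^n$ is of order $D^{1/2}$. So the change can be as large as $\sim 3D^{1/2}\kappabar(f)r$, which must be compared against the slack, namely $r$. To absorb it you would need $3D^{1/2}\kappabar(f) \le 1$, which never holds since $\kappabar(f)\ge 1$. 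The hypothesis $13D^{3/2}\kappabar(f)^2(r+T)<1$ constrains $r+T$ but says nothing about $D^{1/2}\kappabar(f)$, so it cannot rescue this step. Raising the activity threshold does not help either: a larger threshold enlarges the residual of the active system, hence the moved distance, and the circularity persists. The issue is structural: the Newton image $y$ of $x$ onto the variety of the active equalities need not lie in $\Ap(f,t,\phi)$, and no purely Lipschitz argument on the remaining atoms will certify that it does.

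The proof the paper has in mind (following BCL17) sidesteps this by not \emph{constructing} $y$ through a projection onto the active variety. Instead one argues about the nearest point of $\Ap(f,t,\phi)$ to $x$: at such a point the geodesic direction lies in the normal cone spanned by the spherical gradients of the constraints active \emph{there}, and the pseudo-inverse bound from the condition number (via~\cite[Proposition~3.6]{BCTC1}) directly controls the geodesic length by the residual, yielding $3\kappabar(f)r$ without having to re-verify the other atoms. The two modifications the paper highlights are exactly the ones you correctly identify in spirit --- replacing $r$ by $r+T$ in the surjectivity estimate, and using $\delta(t)>2r$ to make inconsistent conjunctions (upper/lower atoms on the same $f_i$ with incompatible targets) have empty algebraic neighborhoods --- but they feed into that nearest-point argument, not into a Lipschitz absorption. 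Finally, a small misattribution: your ``template'' reference is to~\cite[Theorem~2.7]{BCTC1}, which is the Quantitative Durfee Theorem (a homotopy statement); the relevant template the paper points to for this proposition is~\cite[Proposition~4.17]{BCL17}.
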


\begin{theo}[Generalized Quantitative Durfee Theorem]\label{theo:durfee}
Let $f\in\Hd[q]$ and $T,r>0$ be such that $\sqrt{2}\,\kappabar(f)(r+T)<1$. Then, for all 
$t\in(-T,T)^e$ satisfying $\delta(t)>2r$ and every purely conjunctive lax 
formula $\phi$ over $(f,t)$, the inclusions in
\begin{equation*}
  \begin{tikzcd}
  \Ap(f,t,\phi)\arrow[r,hook]  \arrow[dr,hook]
  & \Ap_r^\circ(f,t,\phi)\arrow[d,hook]\\
  & \Ap_{r}(f,t,\phi)
\end{tikzcd}
\end{equation*}
are homotopy equivalences.
\end{theo}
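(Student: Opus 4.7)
The plan is to adapt the strategy from the proof of Proposition~\ref{thm:reductionGV2}: build an $(f,\bflambda)$-partition fine enough to resolve both $\Ap(f,t,\phi)$ and its algebraic $r$-neighborhood, and then use Thom's First Isotopy Lemma (Theorem~\ref{thm:ThomMatherRetraction}) to construct explicit deformation retractions stratum by stratum.

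First, I would assemble a matrix $\bflambda\in\bbR^{q\times M}$ whose $i$-th row is the sorted list of the nonnegative values arising from $\{0\}\cup\{|t_j|,\,|t_j\pm r|\mid j\in[e]\}$. The hypothesis $\sqrt 2\,\kappabar(f)(r+T)<1$ guarantees that every such value is strictly less than $1/(\sqrt 2\,\kappabar(f))$, so Theorem~\ref{thm:flambdapartition} applies and $\Pe_{f,\bflambda}$ is a Whitney stratification of $\bbS^n$. The separation hypothesis $\delta(t)>2r$ ensures that the perturbed thresholds $t_j\pm r$ interleave correctly with the unperturbed ones and with each other, so this list has no repetitions and the ordering along each row of $\bflambda$ respects the intended geometric meaning.

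Second, since $\phi$ is purely conjunctive over $(f,t)$, its atoms are of the form $f_i\gtrless t_j\|f_i\|$, and consequently both $\Ap(f,t,\phi)$ and $\Ap_r(f,t,\phi)$ are finite unions of closures of strata of $\Pe_{f,\bflambda}$; moreover $\Ap_r^\circ(f,t,\phi)\setminus\Ap(f,t,\phi)$ is a union of strata on which, for at least one atom $f_i\gtrless t_j\|f_i\|$ of $\phi$, the ratio $f_i/\|f_i\|$ takes values in one of the open intervals delimited by consecutive entries of the relevant row of $\bflambda$ lying between $t_j-r$ and $t_j+r$. On each such stratum $\pe_{\bfI,\sigma}$, Theorem~\ref{thm:flambdapartition}(2) says $|f_i|/\|f_i\|$ is a surjective submersion onto that open interval, and restricted to the locally closed set obtained by intersecting with the compact preimage of a closed subinterval, this map is proper.

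Third, I would handle the atoms of $\phi$ one at a time, as in the proof of Proposition~\ref{thm:reductionGV1}. For each atom $f_i\ge t_j\|f_i\|$, I would construct a continuous homotopy that deformation retracts the thickened slab $\{(t_j-r)\|f_i\|\le f_i\}$, intersected with the current approximation, onto $\{t_j\|f_i\|\le f_i\}$, by combining a piecewise-linear reparametrization of the interval $[t_j-r,t_j+r]$ (analogous to the map $\upsilon$ in \S\ref{sec:proof_red}) with the trivialization $h$ given by Theorem~\ref{thm:ThomMatherRetraction} applied to the map $|f_i|/\|f_i\|$ restricted to the locally closed union of strata in the transition zone. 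The reparametrization is chosen so that it fixes neighborhoods of the outer bounds $t_j\pm r$ and of $t_j$, which allows a continuous extension by the identity to the whole of $\Ap_r(f,t,\phi)$ (and to $\Ap_r^\circ(f,t,\phi)$). Because each retraction respects the stratification $\Pe_{f,\bflambda}$, it preserves the level conditions imposed by the remaining atoms, so the compositions of the retractions over all atoms of $\phi$ is well defined and yields a deformation retraction onto $\Ap(f,t,\phi)$. The diagonal inclusion is thus a homotopy equivalence; the horizontal and vertical ones follow from the two-out-of-three property.

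The main obstacle is verifying that the single-atom retractions can be composed without disturbing each other: we need each step to preserve the stratification and therefore the level conditions corresponding to atoms not yet processed. This is exactly what the stratum-preserving conclusion of Theorem~\ref{thm:ThomMatherRetraction} guarantees, once $\bflambda$ has been chosen as above so that every threshold $t_j\pm r$ and $t_j$ appears as a row entry. The remaining bookkeeping—choosing the piecewise-linear reparametrization so as to respect both the open and the closed neighborhood—mirrors the construction of $\psi$ and $\Psi$ in \S\ref{sec:proof_red} and presents no additional difficulty.
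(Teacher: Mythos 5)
Your proposal takes a genuinely different route from the paper. The paper proves Theorem~\ref{theo:durfee} essentially by reference: it notes that the statement generalizes Theorem~2.7 of~\cite{BCTC1}, indicates how the new parameters $T$ and $\delta(t)$ modify the two key hypotheses used there, and defers the reworked details to~\cite{tonellicuetothesis}. The underlying argument in~\cite{BCTC1}, as the paper's own description signals, goes through a transversality argument (surjectivity of $\diff_x f^L$ via the condition number bound), not through the $(f,\bflambda)$-stratification machinery. You propose instead to reuse the Thom--Mather apparatus that the present paper builds in \S\ref{sec:difTop} for Proposition~\ref{thm:reductionGV2}. That is a legitimate idea in principle, and your observations about why the hypotheses $\sqrt{2}\,\kappabar(f)(r+T)<1$ and $\delta(t)>2r$ make the $(f,\bflambda)$-partition well defined and meaningful are correct.

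However, there is a genuine gap in the third step of your plan, and it is precisely the point at which the Durfee theorem differs in kind from Proposition~\ref{thm:reductionGV2}. In the proof of Proposition~\ref{thm:reductionGV2}, the retraction pushes $\alpha^{-1}[t,t_1)$ onto $\alpha^{-1}[t',t_1)$ with both $t$ and $t'$ strictly in the \emph{interior} of the interval $(t_0,t_1)$ over which Theorem~\ref{thm:ThomMatherRetraction} provides the trivialization; the map $\upsilon$ fixes neighborhoods of $t_0$ and of $t_1$, and the target level $t'$ is never an endpoint. In the Durfee setting, by contrast, you must retract the slab $\{t_j-r\le f_i/\|f_i\|< t_j\}$ all the way onto the level set $\{f_i/\|f_i\|=t_j\}$, i.e., onto the \emph{boundary} of the transition zone. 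The trivialization $h:\Omega\to F\times(t_j-r,t_j)$ supplied by Theorem~\ref{thm:ThomMatherRetraction} is only a homeomorphism over the open interval; the formula $\psi(1,x)=h^{-1}(h_F(x),t_j)$ is undefined, and the convergence of $h^{-1}(h_F(x),\tau)$ as $\tau\to t_j^{-}$ is not part of what Theorem~\ref{thm:ThomMatherRetraction} gives you. Your attempted cure --- choosing the reparametrization to ``fix neighborhoods of the outer bounds $t_j\pm r$ \emph{and of $t_j$}'' --- is internally inconsistent: if the homotopy is the identity in a neighborhood of $t_j$ it cannot simultaneously carry $[t_j-r,t_j)$ onto $\{t_j\}$, so it does not produce a retraction onto $\Ap(f,t,\phi)$. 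One could try to work on a two-sided zone $(t_j-r,t_j+\eta)$ so that $t_j$ is interior, but then Theorem~\ref{thm:flambdapartition}(2) requires that $t_j$ not be an entry of the relevant row of $\bflambda$ (otherwise strata are cut at $t_j$ and $\alpha$ is not a surjective submersion onto the whole interval), and if $t_j$ is not an entry then $\Ap(f,t,\phi)$ is no longer a union of strata, which undermines the bookkeeping you rely on to show the retraction preserves the remaining atoms. Resolving this tension --- whether via a controlled vector field that is tangent to strata and whose flow does extend continuously to the boundary, or via the transversality/gradient argument of~\cite{BCTC1,BCL17} as the paper does --- is the substance of the proof, and your write-up currently treats it as routine bookkeeping.
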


These three results are proved in the same manner as their corresponding  
results in~\cite{BCL17} and~\cite{BCTC1}. We briefly describe how the differences in the 
statements occur.

\begin{enumerate}
\item 
\emph{The condition $\sqrt{2}\,\kappabar(f)r<1$ becomes 
$\sqrt{2}\,\kappabar(f)(r+T)<1$}. In the case $t=0$, this condition guaranteed that 
when $\|f^L(x)\|/\|f^L\|<1/\big(\sqrt{2}\,\kappabar(f)\big)$ 
for a suitably chosen index set $L\subseteq [q]$, one could 
apply~\cite[Proposition~3.6]{BCTC1} to deduce  
that $\diff_xf^L$ is surjective. For such~$L$, we had $\|f^L(x)\|/\|f^L\|\leq r$ 
since the point $x$ was lying in $\Ap_r(f,\phi)\setminus\Ap(f,\phi)$. 
In our current setting, we will have $\|f^L(x)\|/\|f^L\|\leq r+T$ and 
so we have to replace $r$ by $r+T$ in the bound in the hypotheses to make the 
argument work.

\item \emph{The addition of the condition $\delta(t)>2r$}. 
This condition guarantees that the set 
$\Ap_r(f,t, (f_i\geq t_j\|f_i\|)\wedge (f_i\leq t_{j'}\|f_i\|)\wedge \psi)$ is  
empty whenever 
$t_{j'}<t_j$. This phenomenon is 
the only obstruction to assume without loss of 
generality that the formula $\phi$ 
in the statements of Proposition~\ref{prop:algebraicvsmetricnhood} and 
Theorem~\ref{theo:durfee} is of the form 
\[\bigwedge_{i\in I}(f_i\leq\|f_i\|t_{\alpha(i)})\wedge \bigwedge_{j\in J}(f_j\geq\|f_j\|t_{\alpha(j)}) \wedge \bigwedge_{k\in K}\left(f_k\leq\|f_k\|t_{\beta(k)})\wedge (f_k\geq\|f_k\|t_{\gamma(k)})\right)\] for some disjoint sets $I,J,K\subseteq \{1,\ldots,q\}$, and maps 
$\alpha:I\cup J\rightarrow\{1,\ldots,e\}$, $\beta:K\rightarrow \{1,\ldots,e\}$ and $\gamma:K\rightarrow \{1,\ldots,e\}$ such that for $k\in K$, $t_{\beta(k)}\geq t_{\gamma(k)}$.
\end{enumerate}
A complete proof reworking out all the details can be found in~\cite[Ch.~2]{tonellicuetothesis}. 
We note, however, that the proofs in~\cite{tonellicuetothesis} don't follow exactly the lines of the proofs in~\cite{BCL17,BCTC1}, although they use the same underlying fundamental ideas.

\subsection{Proof of Theorem~\ref{theo:homotopywitness}}
\label{sec:NSW}

Towards the proof of this result we recall the 
definition of the gamma invariant 
defined by Smale~\cite{Smale86} (see 
also~\cite{Bez4,Condition}) in both its affine and 
projective versions.
 
Let $n\ge m$. 
For a map $G:\bbR^{n+1}\rightarrow\bbR^m$ and a point $x\in\bbS^n$, 
we let $\diffa_xG:\bbR^{n+1}\to\bbR^m$ be the derivative at $x$ of $G$ viewed as a map on the Euclidean space 
and $\diff_xG:\Tg_x\bbS^n\to\bbR^m$ denote the derivative at $x$ of $G$ as a map on the sphere. 
{\em Smale's (Euclidean) gamma} of $G:\bbR^{n+1}\rightarrow\bbR^m$ at $x\in\bbS^n$ 
is the number given by
\begin{equation}\label{def:gamma}
  \affgamma(G,x):=\begin{cases}\sup_{k\geq 2}\left\|\diffa_xG^\dagger \frac{1}{k!}     
  \diffa^k_xG\right\|,&\text{if }\diffa_xG^\dagger\text{ surjective},\\
  \infty,&\text{otherwise}.\end{cases}
\end{equation}
The {\em projective Smale's gamma} $\gamma(G,x)$ of $G$ at $x$ is defined similarly, 
with $\diffa_xG$ replaced by $\diff_xG$. 

We recall from \cite{BCTC1} the condition number of $f\in\Hd[m]$ at $x$, 
defined by 
$\mu(f,x) := \|f\| \, \| D_x f^\dagger \Delta\|$, 
where 
$\Delta:=\mathrm{diag}(\sqrt{d_1},\ldots,\sqrt{d_m})$.
(Note that this quantity was denoted $\mu_{\mathrm{proj}}(f,x)$ in \cite{BCL17}.)
We also recall the Higher Derivative Estimate,   
$\gamma(f,x) \le \frac12 D^{3/2} \mu(f,x)$, 
from~\cite[Theorem~16.1]{Condition}. 
(This reference only covers the case $m=n$, but the proof given there 
immediately extends to $n\ge m$).

With these definitions, one proves the 
following proposition.

\begin{prop}\label{prop:boundgamma}
Let $f\in\Hd[m]$, with $n\geq m$, and $x\in\bbS^n$. Define $f_\bbS:=(f,\sum_{i=0}^nX_i^2-1)$. 
Then
\begin{equation}
2\affgamma(f_\bbS,x)\leq D^{\frac32}\mu(f,x)+D^{\frac12}\mu(f,x)\frac{\|f(x)\|}{\|f\|}+1.
\end{equation}
\end{prop}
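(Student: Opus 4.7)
The plan is to exploit the orthogonal splitting $\bbR^{n+1} = T_x\bbS^n \oplus \bbR x$ on the domain and $\bbR^{m+1} = \bbR^m \oplus \bbR$ on the codomain to write $A := \diffa_x f_\bbS$ in block form and compute its pseudo-inverse explicitly. Euler's identity applied to each homogeneous $f_i$ gives $\diffa_x f_i \cdot x = d_i f_i(x)$, while $\diffa_x s(\cdot) = 2\langle x,\cdot\rangle$ for the sphere equation $s := \sum X_i^2 - 1$, so that
$$ A = \begin{pmatrix} \diff_x f & \Delta^2 f(x) \\ 0 & 2 \end{pmatrix}, \qquad A^\dagger = \begin{pmatrix} (\diff_x f)^\dagger & -\tfrac{1}{2}(\diff_x f)^\dagger \Delta^2 f(x) \\ 0 & 1/2 \end{pmatrix},$$
with $\Delta^2 = \mathrm{diag}(d_1,\ldots,d_m)$; the formula for $A^\dagger$ follows from a direct solve of $A(v+tx) = (w,u)$, valid when $\diff_x f$ has full row rank (otherwise $\affgamma(f_\bbS,x)=\infty$ and one should likewise have $\mu(f,x)=\infty$, so the inequality is vacuous).

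Next, I would write $\diffa_x^k f_\bbS = (\diffa_x^k f,\,\diffa_x^k s)$, using $\diffa_x^2 s = 2\, I$ (the Euclidean inner product on $\bbR^{n+1}$) and $\diffa_x^k s = 0$ for $k\ge 3$, to obtain
$$A^\dagger\cdot\tfrac{1}{k!}\diffa_x^k f_\bbS = \begin{pmatrix} (\diff_x f)^\dagger\cdot\tfrac{1}{k!}\diffa_x^k f - \tfrac{1}{2}(\diff_x f)^\dagger\Delta^2 f(x)\cdot\tfrac{1}{k!}\diffa_x^k s \\[2pt] \tfrac{1}{2\cdot k!}\diffa_x^k s\end{pmatrix}.$$
For $k\ge 3$ only the top block survives; for $k=2$ the triangle inequality together with $\|\tfrac{1}{2}\diffa_x^2 s\| = \|I\|=1$ yields
$$\|A^\dagger\cdot\tfrac{1}{2}\diffa_x^2 f_\bbS\| \le \|(\diff_x f)^\dagger\cdot\tfrac{1}{2}\diffa_x^2 f\| + \tfrac{1}{2}\|(\diff_x f)^\dagger\Delta^2 f(x)\| + \tfrac{1}{2}.$$
I would then bound the first summand via the projective $\gamma$-invariant of $f$ and the Higher Derivative Estimate $\gamma(f,x)\le\tfrac{1}{2}D^{3/2}\mu(f,x)$, and the middle term by $\|(\diff_x f)^\dagger\Delta^2 f(x)\| \le \|(\diff_x f)^\dagger\Delta\|\cdot\|\Delta f(x)\| \le D^{1/2}\mu(f,x)\|f(x)\|/\|f\|$, using $\mu(f,x) = \|f\|\cdot\|(\diff_x f)^\dagger\Delta\|$ and $\|\Delta\|\le D^{1/2}$. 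Doubling gives exactly the three summands of the claimed bound, corresponding respectively to the projective $\gamma$-contribution of $f$, the Euler-type coupling of $\diffa_x f$ with $f(x)$ in the $x$-direction, and the quadratic sphere term $\diffa_x^2 s = 2\,I$.

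The main obstacle is bounding $\|(\diff_x f)^\dagger\cdot\tfrac{1}{k!}\diffa_x^k f\|$ as an operator on $(\bbR^{n+1})^k$, whereas the projective $\gamma(f,x)$ is naturally defined on $(T_x\bbS^n)^k$. To close this gap, I would decompose each unit argument as $w = v + tx$ with $v\in T_x\bbS^n$, expand $\diffa_x^k f(w_1,\ldots,w_k)$ by multilinearity into a sum indexed by $S\subseteq[k]$, and iteratively apply the Euler identity $\diffa_x^\ell f_i(x,\cdot) = (d_i-\ell+1)\diffa_x^{\ell-1}f_i$ to reduce every contraction against $x$ to a lower-order derivative on tangent vectors only; the combinatorial coefficients that arise are polynomial in the $d_i$'s and are absorbed into the factor $D^{3/2}$ via the Higher Derivative Estimate.
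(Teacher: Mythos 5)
Your block-matrix computation of $\diffa_x f_\bbS$ and its pseudo-inverse matches the paper's argument exactly, as do the case split $k=2$ versus $k\ge 3$, the triangle inequality, and the final invocations of the Higher Derivative Estimate and the bound $\|(\diff_x f)^\dagger\Delta^2 f(x)\|\le D^{1/2}\mu(f,x)\|f(x)\|/\|f\|$. Up to the last paragraph you have essentially reproduced the paper's proof.

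The last paragraph, however, manufactures a difficulty that does not exist and then proposes a fix that is both unnecessary and not actually carried out. You write that the ``main obstacle'' is that $\|(\diff_x f)^\dagger\,\tfrac{1}{k!}\diffa_x^k f\|$ is an operator norm over $(\bbR^{n+1})^k$ while $\gamma(f,x)$ is ``naturally defined on $(T_x\bbS^n)^k$.'' But look at the definition in~\eqref{def:gamma}: the projective $\gamma(G,x)$ is obtained from $\affgamma(G,x)$ by replacing \emph{only} the linear map $\diffa_x G$ with its tangential restriction $\diff_x G$; the higher derivatives $\diffa^k_x G$ remain the full Euclidean ones, so the operator norm is still taken over the ambient space. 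Under this definition the quantity you want to bound \emph{is} $\gamma(f,x)$ term by term, and the triangle inequality closes the proof immediately with no decomposition of arguments needed. The iterated-Euler scheme you sketch --- expanding each argument as $v+tx$, contracting repeatedly against $x$, and claiming the combinatorial coefficients get ``absorbed into the factor $D^{3/2}$'' --- is hand-waving: you would pick up factors like $\binom{k}{|S|}$ and products $(d_i-\ell+1)$ whose sizes grow with $k$ and with the degrees, and you give no argument that they are dominated by the single factor $D^{3/2}$ already present in the HDE. Fortunately you never need this machinery; simply replace the last paragraph with the observation that the paper's $\gamma$ is already the ambient operator norm, and your proof is complete and matches the paper.
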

\begin{proof}
By direct computation,
\[
\diffa^k_xf_{\bbS}(u_1,\ldots,u_k)=\begin{cases}
\begin{pmatrix}
\diffa_xf(u_1)\\2\langle x,u_1\rangle\end{pmatrix},&\text{if }k=1,\\
&\\
\begin{pmatrix}\diffa_x^2f(u_1,u_2)\\2\langle u_1,u_2\rangle\end{pmatrix},
&\text{if }k=2,\\
&\\
\begin{pmatrix}\diffa_x^kf(u_1,\ldots,u_k)\\0\end{pmatrix},
&\text{if }k>2.
\end{cases}
\]
Using this equality for $k=1$ we deduce that 
$\ker \diffa_xf_{\bbS}=\Tg_x\bbS^n\cap \ker \diffa_xf$. 
Let $V$ be the orthogonal complement of $\ker \diffa_xf$ in $\Tg_x\bbS^n$. 
Then 
$\left(\ker \diffa_xf_{\bbS}\right)^\perp=V+\bbR x$ and, 
for all $\lambda\in\bbR$, 
\begin{equation}\label{eq:diff-gamma}
\diffa_xf_\bbS(v+\lambda x)=\begin{pmatrix}
\diff_xf(v)+\lambda \Delta^2 f(x)\\2\lambda
\end{pmatrix}
\end{equation}
where, we recall, $\Delta:=\mathrm{diag}(\sqrt{d_1},\ldots,\sqrt{d_m})$ and 
$\diffa_xf(x)=\Delta^2 f(x)$ follows from Euler's identity for homogeneous 
functions.

By explicitly inverting the map in~\eqref{eq:diff-gamma}, we obtain 
\[(\diffa_xf_\bbS)^\dagger
\begin{pmatrix}
w\\t
\end{pmatrix}=
\diff_xf^\dagger\left(w-\frac{t}{2}\Delta^2f(x)\right)+\frac{t}{2}x.
\]
Thus
\[
(\diffa_xf_\bbS)^\dagger \frac{\diffa_x^kf_\bbS}{k!}(u_1,\ldots,u_k)
=\begin{cases}
\diff_xf^\dagger\frac{\diffa_x^2f}{2}(u_1,u_2)-\frac{\langle u_1,u_2\rangle}{2} \diff_xf^\dagger \Delta^2 f(x)+\frac{\langle u_1,u_2\rangle}{2} x,
&\text{if }k=2,\\
&\\
\diff_xf^\dagger\frac{\diffa_x^kf}{k!}(u_1,\ldots,u_k),&\text{if }k>2.
\end{cases}
\]
Applying the triangular inequality in~\eqref{def:gamma}, we obtain
\[
\affgamma(f_\bbS,x)\leq \gamma(f,x)+\frac{1}{2}\|\diff_xf^\dagger \Delta^2f(x)\|+\frac{1}{2} ,
\]
which implies 
\[
   2\affgamma(f_\bbS,x)\leq D^{\frac32}\mu(f,x)+D^{\frac12}\mu(f,x)\frac{\|f(x)\|}{\|f\|}+1
\]
where the first term in the right-hand side follows from the 
Higher Derivative Estimate and the second from the relations 
\[
\|\diff_xf^\dagger \Delta^2f(x)\|\leq \|\diff_xf^\dagger
\Delta\|\|\Delta\|\|f(x)\| 
=\|f\|\|\diff_xf^\dagger \Delta\|D^{\frac12}\frac{\|f(x)\|}{\|f\|}
\]
and the definition of $\mu$. This finishes the proof.
\end{proof}

\begin{proof}[Proof of Theorem~\ref{theo:homotopywitness}]
We refer to~\cite{BCL17} for the definition of the reach $\tau(X)$
of a subset $X$ of Euclidean space, and its local reach $\tau(X,p)$ at $p\in X$.
By~\cite[Theorem~2.8]{BCL17} (a variant of the Niyogi-Smale-Weinberger Theorem~\cite{NiSmWe2008}), 
it is sufficient to prove that 
\begin{equation}\label{eq:desideratum}
 \frac{1}{48 D^{\frac32}\kappabar(f)} \ < \ \frac{1}{2}\tau(\Ap(f,t,\phi)) .
\end{equation} 

Without loss of generality, we can write $\phi$ as $\bigwedge_{i\in I}(f_i\leq \|f_i\|t_{\beta(i)})\wedge \bigwedge_{j\in J}(f_j\geq \|f_j\|t_{\beta'(j)})$ 
for some $I,J\subseteq [q]$ and some maps $\beta:I\rightarrow [e]$ and $\beta':J\rightarrow [e]$.

For $L\subseteq I$ with $|L| \le n$ 
and $\alpha:L\rightarrow [e]$ we consider the (Euclidean) zero-set 
$$
  W_{L,\alpha} := \bigcap_{i\in L}\Ap\big(f_i - \|f_i\|t_{\alpha(i)} =0 \big).
$$
of $(f^L-\delta^{L,\alpha},\sum_{i=0}^nX_i^2-1)$, 
where we have put $\delta^{L,\alpha}:=(\|f_i\|t_{\alpha(i)})_{i\in L} \in\bbR^L$. 
By~\cite[Corollary~2.6]{BCL17}, we have 
\[
  \tau(\Ap(f,t,\phi))\geq \min\left\{\tau(W_{L,\alpha}) \mid L\subseteq I,\,|L|\leq n,\,\alpha:L\rightarrow [e]\right\} ,
\]
since the boundary of both sets $\Ap(f_i - \|f_i\|t_{j} \ge 0)$ and $\Ap(f_i - \|f_i\|t_{j} \le 0)$ 
equals $\Ap(f_i - \|f_i\|t_{j} = 0)$. 
The latter follows by the implicit function theorem, which can be applied 
due to~\cite[Proposition~3.6]{BCTC1} and $\sqrt{2}\,\kappabar(f)T<1$. 
Moreover, by~\cite[Theorem~2.11]{BCL17}, we have 
(recall the definition of $\bar{\gamma}$ in~\eqref{def:gamma})
\begin{equation}\label{eq:ineq1}
  \tau(W_{L,\alpha},p) \ \geq\ 
 \frac{1}{14\affgamma((f^L-\delta^{L,\alpha},\sum_{i=0}^nX_i^2-1),p)} .
\end{equation}

On the one hand, we have that 
$$
  \frac{\|f^L(p)\|}{\|f^L\|} \ \leq\ \max_{i\in L} \frac{|f_i(p)|}{\|f_i\|} \ <\  T<\frac{1}{\sqrt{2}\,\kappabar(f)} ,
$$
where for the second inequality, we have used that $p\in W_L$. 

On the other hand, we have 
$\mu(f^L,p)\leq\sqrt{2}\, \kappa(f^L)$, by~\cite[Proposition~3.6]{BCTC1}. 
 
Putting these together, 
we deduce with Proposition~\ref{prop:boundgamma} that 
\begin{align}\label{eq:ineq2}
2\affgamma\Big(\big(f^L-\delta^{L,\alpha},\sum_{i=0}^nX_i^2-1\big),p\Big)&< D^{\frac32}\mu(f^L,p)+D^{\frac12}\mu(f^L,p)T+1\nonumber\\ 
&<\sqrt{2}\,D^{\frac32}\kappa(f^L)+D^{\frac12}+1\leq (\sqrt{2}+2)D^{\frac32} \kappa(f^L),
\end{align}
where we used $\kappa(f^L) \ge 1 $ for the last inequality.  
Combining~\eqref{eq:ineq1} and~\eqref{eq:ineq2} the claim~\eqref{eq:desideratum} follows
and the proof is complete. 
\end{proof}

\subsection{Proof of Theorem~\ref{theo:homologywitness}}
\label{sec:HWT}

Theorem~\ref{theo:homologywitness} follows from the following more general result.

\begin{theo}
\label{theo:homologywitnessgen} 
Let $f\in\Hd[q]$ and $T>0$ be such that $2\kappabar(f)T<1$, and $\varepsilon>0$. 
Let $t\in(-T,T)^e$. 
Moreover, for $i \in [q]$ and $j \in [e]$, 
let $\mcX_{i,j}^\leq,\mcX_{i,j}^\geq\subseteq\bbS^n$ be finite subsets
such that for all purely conjunctive lax formulas $\phi$ over $(f,t)$, we have 
$$
   3 d_H\left(\phi\left(\mcX_{i,j}^{\propto}\mid i\in[q],\,j\in[e],\,\propto\in\{\leq,\geq\}\right), 
   \Ap(f,t,\phi)\right) 
   \ < \ \varepsilon \ < \ \min\left\{\frac{1}{48 D^{3/2} \kappabar(f)},\frac{\delta(t)}{12D^{1/2}}\right\}.
$$ 
Then, for all lax formulas $\Psi$ over $(f,t)$, 
the set $\Ap(f,t,\Psi)$ and the simplicial complex 
$$
 \Psi\Big(\cech{\varepsilon}{\mcX_{i,j}^\propto}\mid i\in [q],\,j\in [e],\, \propto\in\{\leq,\geq\}\!\Big)
$$ 
have the same homology.
\end{theo}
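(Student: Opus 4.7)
The plan is to reduce the statement to the purely conjunctive case handled by the Generalized Basic Homotopy Theorem~\ref{theo:homotopywitness}, by placing $\Psi$ in disjunctive normal form and running a Mayer--Vietoris comparison on the two unions so obtained. Specifically, I would write $\Psi\equiv\bigvee_{\xi\in\Xi}\phi_\xi$ with each $\phi_\xi$ a purely conjunctive lax formula over $(f,t)$, so that
$$
 \Ap(f,t,\Psi)=\bigcup_{\xi\in\Xi}\Ap(f,t,\phi_\xi),\qquad
 \Psi\!\left(\cech{\varepsilon}{\mcX_{i,j}^\propto}\right)=\bigcup_{\xi\in\Xi}\phi_\xi\!\left(\cech{\varepsilon}{\mcX_{i,j}^\propto}\right).
$$
The crucial algebraic observation is that for every nonempty $S\subseteq\Xi$ the conjunction $\phi_S:=\bigwedge_{\xi\in S}\phi_\xi$ is itself a purely conjunctive lax formula, and intersections distribute through both decompositions: $\bigcap_{\xi\in S}\Ap(f,t,\phi_\xi)=\Ap(f,t,\phi_S)$ and $\bigcap_{\xi\in S}\phi_\xi(\cech{\varepsilon}{\mcX_{i,j}^\propto})=\phi_S(\cech{\varepsilon}{\mcX_{i,j}^\propto})$. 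Thus the two covers have parallel combinatorial structure.

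Next, for every purely conjunctive lax $\phi$ over $(f,t)$ I would combine the Nerve Theorem with Theorem~\ref{theo:homotopywitness} to produce a homotopy equivalence $\phi(\cech{\varepsilon}{\mcX_{i,j}^\propto})\simeq\Ap(f,t,\phi)$. The Nerve Theorem, applied to the good cover of $\mcU(\phi(\mcX_{i,j}^\propto),\varepsilon)$ by Euclidean $\varepsilon$-balls (whose nonempty finite intersections are convex and hence contractible), supplies $\phi(\cech{\varepsilon}{\mcX_{i,j}^\propto})\simeq\mcU(\phi(\mcX_{i,j}^\propto),\varepsilon)$. The Hausdorff hypothesis together with the bound $\varepsilon<1/(48D^{3/2}\kappabar(f))$ lets Theorem~\ref{theo:homotopywitness} give the homotopy equivalence $\Ap(f,t,\phi)\hookrightarrow\mcU(\phi(\mcX_{i,j}^\propto),\varepsilon)$. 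The second bound $\varepsilon<\delta(t)/(12D^{1/2})$, read through Proposition~\ref{prop:exclusion}, prevents $\varepsilon$-balls from spuriously satisfying inequalities associated with a different grid value $t_{j'}$, and this is exactly what makes these equivalences natural with respect to the inclusions of intersections indexed by $S\subseteq\Xi$.

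Finally, I would conclude by comparing the Mayer--Vietoris spectral sequences associated with the two covers. The previous step provides an isomorphism on $E_1$-pages (direct sums of the homologies of the $\Ap(f,t,\phi_S)$, respectively of the $\phi_S(\cech{\varepsilon}{\mcX_{i,j}^\propto})$) compatible with all Mayer--Vietoris differentials, so the abutments agree, yielding $H_\ast(\Psi(\cech{\varepsilon}{\mcX_{i,j}^\propto}))\cong H_\ast(\Ap(f,t,\Psi))$. The hard part, I expect, will be the naturality step: organizing the individual homotopy equivalences from the second paragraph into a coherent morphism of the two Mayer--Vietoris systems, rather than merely a family of isomorphisms on homology. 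The separation bound $\varepsilon<\delta(t)/(12D^{1/2})$ is precisely what aligns combinatorial intersections of the point clouds with geometric intersections of their $\varepsilon$-thickenings, so once this compatibility is established the remaining verification reduces to a diagram chase parallel to the one in~\cite[Theorem~2.4]{BCTC1}.
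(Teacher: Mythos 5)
Your outline captures the general shape of the argument, and the Nerve-plus-Mayer--Vietoris scaffolding you describe is indeed the right kind of machinery. But the proposal has two genuine gaps relative to what the paper actually does.

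First, you never invoke the Generalized Quantitative Durfee Theorem (Theorem~\ref{theo:durfee}), yet the paper's proof explicitly lists it as one of the three ingredients replacing the corresponding tools from~\cite[Theorem~2.4]{BCTC1}. The Durfee theorem is what lets one sandwich $\mcU(\phi_S(\mcX),\varepsilon)$ between $\Ap(f,t,\phi_S)$ and its algebraic $r$-neighborhood $\Ap_r(f,t,\phi_S)$, and it is precisely this sandwich -- compatible with the poset of conjunctions $\phi_S$ because algebraic neighborhoods distribute over intersections -- that makes the family of equivalences coherent. You write that ``the hard part, I expect, will be the naturality step'' and that once the compatibility is established the rest is a diagram chase; but that compatibility \emph{is} the content of the proof, and you neither supply it nor name the Durfee sandwich that secures it. Claiming that the separation bound and Proposition~\ref{prop:exclusion} ``align combinatorial intersections with geometric intersections'' is a heuristic, not an argument.

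Second, the paper states plainly that the main difficulty in this proof is arithmetic: finding a tolerance $r>0$ satisfying simultaneously $\sqrt{2}\,\kappabar(f)(r+T)<1$, $r<\tfrac12\delta(t)$ (so that Theorem~\ref{theo:durfee} applies), and $6D^{1/2}\varepsilon\le r$ (to use Proposition~\ref{prop:exclusion}), and then checking that the hypothesis $\varepsilon < \min\bigl\{\tfrac{1}{48D^{3/2}\kappabar(f)},\tfrac{\delta(t)}{12D^{1/2}}\bigr\}$ together with $2\kappabar(f)T<1$ and $D\ge1$ guarantees such an $r$ exists. None of this bookkeeping appears in your proposal; you mention the bound $\varepsilon<\delta(t)/(12D^{1/2})$ only as a vague safeguard against ``spurious'' inequalities. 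Without the Durfee ingredient and the explicit choice of $r$, the argument does not close.
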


\begin{proof}
The proof follows the same lines as that of the Homology Witness 
Theorem~\cite[Theorem~2.4]{BCTC1}, now relying on 
Theorem~\ref{theo:durfee}, 
Theorem~\ref{theo:homotopywitness}, 
and Proposition~\ref{prop:exclusion} 
instead on Theorem~2.7, Theorem~2.3, and inequality~(2.4) 
in~\cite{BCTC1}, respectively. 

The main difficulty in the proof is to find some $r$ that satisfies certain inequalities so that the three results 
above can be applied. Because of the new bounds in the extended results, we now need to find 
$r>0$ such that both $\sqrt{2}\kappabar(f)(r+T)<1$ and $r<\frac12\delta(t)$ hold 
(hypothesis of Theorem~\ref{theo:durfee}) and $6D^\frac12\varepsilon\leq r$ (to conveniently use the  
bounds in Proposition~\ref{prop:exclusion}). Such $r$~can be found, provided 
\[
    6D^{\frac12}\varepsilon \ <\  \min\left\{ \frac{1}{\sqrt{2}\,\kappabar(f)}-T, \frac12\delta(t) \right\}.
\]
The assumption on $\varepsilon$ implies that 
$$
6D^{\frac12} \varepsilon \ < \  \min\left\{ \frac{1}{8D\kappabar(f)}, \frac12 \delta(t) \right\}
$$
Now we note that 
\[
   \min\left\{\frac{1}{8D\kappabar(f)}, \frac12\delta(t) \right\}<
   \min\left\{\frac{1}{\sqrt{2}\,\kappabar(f)}-T,\frac12\delta(t) \right\} ,
\]
since 
\[ 
 T<\frac{8D-\sqrt{2}}{8\sqrt{2}D}\frac{1}{\kappabar(f)} ,
\]
which in turn is guaranteed by the assumption 
$2\kappabar(f)T < 1$ and $D\geq 1$.
\end{proof}

\begin{proof}[Proof of Theorem~\ref{theo:homologywitness}]
Let $e=4m$ and $t$ be given by~\eqref{eq:seq_t} and put $T:=(2m+1)\delta(t)$. 
By Proposition~\ref{prop:kappa-est} 
and~\eqref{eq:delta(t)}, we have 
\[
\sqrt{2}\,\kappabar(f)T=\sqrt{2}\,\kappabar(f)\frac{1}{15D^2\sfK^2}<1.
\]
Also, Proposition~\ref{prop:kappa-est} and \eqref{eq:delta(t)}
imply that
\[
  \frac{1}{48D^{3/2}\kappabar(f)}>
  \frac{1}{180(2m+1)D^{5/2}\sfK^2}=\frac{\delta(t)}{12D^{1/2}} .
\]
Hence, all the hypotheses of 
Theorem~\ref{theo:homologywitnessgen} 
are satisfied and the conclusion follows from 
applying this result to formulas of the form $\Phibar$ over $(f,t)$ constructed from strict formulas~$\Phi$ over $f$.
\end{proof}

\subsection{Proof of Theorem~\ref{thm:samp}}
\label{sec:samp}

Again, Theorem~\ref{thm:samp} is an 
immediate consequence of the following more general result.

\begin{theo}\label{theo:sampling} 
Let $f\in\Hd[q]$ and $T,r>0$ be such that $13D^2\kappabar(f)^2(r+T)<1$.
Assume $t\in(-T,T)^e$ satisfies $\delta(t)>2D^{\frac12} r$ 
and let $\Phi$ be a strict 
formula over $(f,t)$.
Then for every finite set $\mcG\subseteq \bbS^n$ such that $d_H(\mcG,\bbS^n)<r$, we have
\[ 
 d_H\Big(\Ap^\circ_{D^\frac12r}(f,t,\Phi)\cap \mcG,\Ap(f,t,\Phi)\Big)<3D^\frac12\kappabar(f)r.
\]
\end{theo}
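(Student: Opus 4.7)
The plan is to bound the two halves of the Hausdorff distance
$d_H\!\bigl(\Ap^\circ_{D^{1/2}r}(f,t,\Phi)\cap\mcG,\,\Ap(f,t,\Phi)\bigr)$
separately and to reduce from the general formula $\Phi$ to its purely conjunctive pieces so that the two key ingredients, Propositions~\ref{prop:exclusion} and~\ref{prop:algebraicvsmetricnhood}, become applicable.

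For the direction $\sup_{y\in\Ap(f,t,\Phi)}d(y,\mcG\cap\Ap^\circ_{D^{1/2}r}(f,t,\Phi))$ I would argue as follows. Given $y\in\Ap(f,t,\Phi)$, the grid density $d_H(\mcG,\bbS^n)<r$ yields some $g\in\mcG$ with $d(g,y)<r$. Passing from Euclidean to spherical distance on $\bbS^n$, this $g$ lies in $\mcU_\bbS(\Ap(f,t,\Phi),r)$, which by Proposition~\ref{prop:exclusion} is contained in $\Ap^\circ_{D^{1/2}r}(f,t,\Phi)$. Thus $g\in\mcG\cap\Ap^\circ_{D^{1/2}r}(f,t,\Phi)$ and we obtain the desired bound $d(y,\mcG\cap\Ap^\circ_{D^{1/2}r}(f,t,\Phi))<r$, which is already below the target $3D^{1/2}\kappabar(f)r$.

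For the reverse direction I would write $\Phi$ in disjunctive normal form $\Phi\equiv\bigvee_{k}\phi_k$, each $\phi_k$ a purely conjunctive lax formula over $(f,t)$; this gives $\Ap^\circ_{D^{1/2}r}(f,t,\Phi)=\bigcup_k\Ap^\circ_{D^{1/2}r}(f,t,\phi_k)$ and $\Ap(f,t,\Phi)=\bigcup_k\Ap(f,t,\phi_k)$ since the rewriting defining the algebraic neighborhood distributes over $\vee$ and $\wedge$. For any $g\in\mcG\cap\Ap^\circ_{D^{1/2}r}(f,t,\Phi)$, fix an index $k$ with $g\in\Ap^\circ_{D^{1/2}r}(f,t,\phi_k)$ and apply Proposition~\ref{prop:algebraicvsmetricnhood} to $\phi_k$, with the tolerance parameter in that proposition taken to be $D^{1/2}r$. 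This supplies a point $y\in\Ap(f,t,\phi_k)\subseteq\Ap(f,t,\Phi)$ with spherical, hence Euclidean, distance to $g$ strictly less than $3\kappabar(f)\cdot D^{1/2}r$, which is the desired upper bound.

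The one genuinely substantive verification, and the step where care is needed, is checking that the hypotheses of Proposition~\ref{prop:algebraicvsmetricnhood} remain valid after substituting $r\mapsto D^{1/2}r$. That proposition requires the two inequalities $13D^{3/2}\kappabar(f)^2(D^{1/2}r+T)<1$ and $\delta(t)>2\cdot D^{1/2}r$. The latter is hypothesized in Theorem~\ref{theo:sampling} verbatim. For the former, expanding yields $13D^2\kappabar(f)^2r+13D^{3/2}\kappabar(f)^2T<1$, which is implied (using $D\ge 1$) by the strengthened hypothesis $13D^2\kappabar(f)^2(r+T)<1$ imposed in the statement; this is precisely why the exponent on $D$ in that hypothesis has been bumped from $3/2$ to $2$ compared to Proposition~\ref{prop:algebraicvsmetricnhood}. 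With these hypotheses in place the two propositions combine to give both inequalities, and the proof concludes.
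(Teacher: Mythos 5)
Your overall plan---split the Hausdorff distance into its two directions, decompose $\Phi$ into purely conjunctive clauses via DNF for the reverse direction, invoke Propositions~\ref{prop:exclusion} and~\ref{prop:algebraicvsmetricnhood} with tolerance $D^{1/2}r$, and verify that the strengthened $D^2$-hypothesis absorbs the extra $D^{1/2}$ factor---is exactly the structure the paper imports from its precursor Theorem~6.5 in~\cite{BCTC1}. Your reverse direction and the hypothesis bookkeeping are correct.

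The forward direction, however, has a gap. From $d_H(\mcG,\bbS^n)<r$ you obtain $g\in\mcG$ with $\|g-y\|<r$ in the \emph{Euclidean} metric (the paper's $d_H$ is defined via Euclidean distance in $\bbR^{n+1}$), and you then assert that ``passing from Euclidean to spherical distance'' places $g$ in $\mcU_{\bbS}(\Ap(f,t,\Phi),r)$. That implication runs the wrong way: for points of $\bbS^n$ the angular distance always dominates the chordal one (the paper itself records $\mcU_{\bbS}(X,r)\subseteq\mcU(X,r)$), so $\|g-y\|<r$ only yields $d_{\bbS}(g,y)<2\arcsin(r/2)$, which exceeds $r$. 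You therefore cannot invoke Proposition~\ref{prop:exclusion} at tolerance $r$ to place $g$ in $\Ap^\circ_{D^{1/2}r}(f,t,\Phi)$. The gap can be closed either by noting that the grids $\mcG_\ell$ actually used in Theorem~\ref{thm:samp} satisfy the stronger \emph{spherical} covering $\bbS^n\subseteq\bigcup_{x\in\mcG_\ell}B_{\bbS}(x,r_\ell)$, which does give $d_{\bbS}(g,y)<r$, or by paying a chordal-to-angular conversion factor (at most $\pi/2$) and absorbing it into the constants. But as written this step of your proof is unjustified, so you should address it explicitly.
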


\begin{proof}
The proof is the same as that of~\cite[Theorem~6.5]{BCTC1}, 
using now Propositions~\ref{prop:algebraicvsmetricnhood} and~\ref{prop:exclusion} 
in the place of, respectively, Proposition~2.6 and inequality~(2.4) in~\cite{BCTC1}.
\end{proof}

\begin{proof}[Proof of Theorem~\ref{thm:samp}]
Let $e=4m$, $t$ be given by~\eqref{eq:seq_t} and $T:=\left(2m+1-\frac{1}{108D\sfK}\right)\delta(t)$. 
Using that 
$\delta(t)= \big(15(2m+1)D^2\sfK^2 \big)^{-1}$, 
we get 
\begin{equation}\label{eq:ell-delta}
r_\ell\leq \frac{\rho}{108D\sfK}\delta(t)\leq \frac{1}{108D\sfK}\delta(t).
\end{equation}
We verify that the hypothesis 
of Theorem~\ref{theo:sampling} is satisfied for $r=r_\ell$. 
On the one hand, 
using~\eqref{eq:ell-delta} and the definition of $T$, 
we get 
\[
13D^2\kappabar(f)^2(r_\ell+T)\leq
13D^2\kappabar(f)^2(2m+1)\delta(t)
=\frac{13\kappabar(f)^2}{15 D^2\sfK^2} < 1,
\]
where 
the last inequality follows from 
Proposition~\ref{prop:kappa-est}.
On the other hand, using~\eqref{eq:ell-delta} again, we have 
\[
2D^\frac12r_\ell< \delta(t).
\]
Finally, using~\eqref{eq:ell-delta} one more time, 
the result follows from Theorem~\ref{theo:sampling}, because   
\[
 3\cdot 3D^{\frac12}\kappabar(f)r_l\leq \frac{9\kappabar(f)\rho}{108D^{1/2}\sfK}\delta(t)\leq \frac{\rho}{12D^{1/2}}\delta(t),
\]
where the last inequality follows from Proposition~\ref{prop:kappa-est}. 
\end{proof}
\section{Concluding remarks}\label{sec:hybrid}

Cylindrical Algebraic Decomposition (CAD)~\cite{Collins75,Wut76} as refined in~\cite{SchSh83} may be used to compute the homology 
of semialgebraic sets with a worst-case complexity bounded by $(qD)^{2^{\Oh(n)}}$. One can combine CAD with Algorithm {\sf Homology} (by running both of them ``in parallel") to obtain a numeric-symbolic algorithm, call it {\sf Hybrid}, which enjoys (under infinite precision) 
the virtues of both CAD and {\sf Homology}. It has a  
weak singly exponential 
cost (thus exponentially accelerating the cost of CAD) with a doubly exponential worst-case cost 
(thus overcoming the major shortcoming of {\sf Homology}, 
the fact that it does not solve ill-posed data and that it takes 
too long for data close to ill-posed). 
\medskip

\noindent{\bf (1)}
The complexity bounds for {\sf Homology} (or {\sf Hybrid}), under infinite precision assumption, have a different nature than those for CAD. Algorithm {\sf Homology} has an input-dependent complexity bound and this bound is probabilistic, while the bound for CAD is input-independent and deterministic. This means that in order to make a fair comparison it would be helpful to have an answer to the following question.

\begin{question}
Is the worst-case complexity bound $(qD)^{2^{\Oh(n)}}$ of CAD attained for almost all inputs? Or can it be improved for a random input as in Theorem~\ref{thm:main_result}?
\end{question}

Ignoring this issue, and assuming that no better bounds are possible, we can say that Algorithm~{\sf Homology} is faster than CAD with high probability. In the general case, {\sf Homology} is faster than CAD (with probability at least 
$1-(nq D)^{-n}$) because
\[
\size(\Phi)q^{\Oh(n)}(nD)^{\Oh(n^3)}\leq (qD)^{2^{\Oh(n)}}.
\]
In the particular case of families of inputs for which the degree $D\ge 2$ is bounded and the number $q$ of polynomials is 
moderate, we have $N=n^{\Oh{(1)}}$. In this case, 
{\sf Hybrid} is faster than CAD (with probability at least $1-2^{-\size(p,\Phi)}$) because, under the $N=n^{\Oh{(1)}}$ assumption,
\begin{equation*}\label{eq:hybrid}
2^{\mcO\big(\size(p,\Phi)^{1+\frac{2}{D}}\big)}\leq 2^{\mcO\big(N^2\big)}\leq (qD)^{2^{\Oh(n)}}.
\end{equation*}

\noindent{\bf (2)} 
The above discussion assumes infinite precision. Under the 
presence of finite precision, the behaviors of {\sf Homology} and CAD are radically different. We have already observed that {\sf Homology} is numerically stable (and pointed the reader to~\cite[Section~7]{CKS16} for technical details). 
We can add some explanation here. The fundamental question 
in our context is the following: what is the finest precision required to ensure that the output of the algorithm is correct? No matter the algorithm, ill-posed inputs require infinite precision. And clearly, ill-conditioned inputs (i.e., those with a large condition number) will require a very large precision. But what about inputs with a moderate 
condition? The difference between {\sf Homology} and CAD 
becomes critical here. 

In a nutshell, CAD ends up 
performing computations with polynomials of doubly exponential degree. Round-off errors (even if they only occur when reading the input data) in the computations 
with these polynomials 
accumulate badly. In contrast, {\sf Homology} performs 
an exponential number of computations, corresponding to the points in the grid, with polynomial-size objects. And these 
computations are performed independently of each other. This results in a very moderate accumulation of errors. Hence, while one can prove that the precision needed for a correct 
answer with {\sf Homology} is small, this is certainly not the case with CAD. A recent result pointing in this direction is given in~\cite{NoTo:15}.
\smallskip

\noindent{\bf (3)} 
The weak cost bounds in Theorem~\ref{thm:main_result} 
depend on a choice of probability measure on the input space. 
Our choice (standard Gaussian in $\Pd[q]$, or uniform 
in the sphere $\bbS(\Pd[q])$, 
w.r.t.~the Weyl inner product) is the most common one 
for problems involving polynomial systems. But other choices are possible. In~\cite{EPR:17} a probabilistic analysis 
of $\kappa(f)$ in the case where 
$f$ is square ($n$ polynomials in $n+1$ homogeneous variables) is done which is valid for a broad class 
of probability distributions. A weak cost analysis, now valid for all distributions in this class, of the 
algorithm for counting zeros of square systems 
in~\cite{CKMW1} follows. Such a result raises the following 
question.

\begin{question}
Can one develop a probabilistic analysis of $\kappaff(p)$ for more general distributions? For the class of distributions in~\cite{EPR:17}, this reduces to developing a probabilistic analysis of $\kappa(f)$ in the underdetermined case.
\end{question}

\noindent{\bf (4)} 
Last but not least, it would be interesting to investigate the bit complexity of {\sf Hybrid}. That is, to study whether weak cost bounds similar to those in 
Theorems~\ref{thm:main_result} 
hold as well when the input is a 
tuple of integer polynomials. A particular case, 
with a clear interest, is that of the uniform distribution
on the set $\{-M,-M+1,-M+2,\ldots,M\}$. A positive result for this case would turn {\sf Hybrid} into a symbolic
algorithm more efficient than CAD for tuples of integer polynomials. 

\bibliographystyle{plain}
{\small 
\bibliography{biblio}
}
\end{document}